\pgfplotsset{compat=newest} 
\pgfplotsset{plot coordinates/math parser=false}
\newtheorem{theorem}{Theorem}
\newtheorem{lemma}[theorem]{Lemma}
\newtheorem{corollary}[theorem]{Corollary}
\newtheorem{proposition}[theorem]{Proposition}
\newtheorem{remark}[theorem]{Remark}
\newtheorem{example}[theorem]{Example}
\newcommand{\feltstrom}{Feltstr{\"o}m}
\newcommand{\eBP}{\epsilon_{\text{BP}}^{\star}}
\newcommand{\SPBC}{{\rm\scriptscriptstyle  SPBC}}
\newcommand{\RBC}{{\rm\scriptscriptstyle  RBC}}
\newcommand{\BLEC}{{\rm\scriptscriptstyle  BLEC}}
\newcommand{\B}{{\rm\scriptscriptstyle  B}}
\newcommand{\pr}[1]{{\mathbb{P}}\left\{{#1}\right\}}
\newcommand{\eps}{\varepsilon}
\newcommand{\myfigure}[1]{\resizebox{\columnwidth}{!}{\includegraphics{figures_final/#1.pdf}}}
\newcommand{\myfigurenoresize}[1]{{\includegraphics{figures_final/#1.pdf}}}
\begin{document}

\title{Finite-Length Analysis of Spatially-Coupled Regular LDPC Ensembles on Burst-Erasure Channels}

\author{\IEEEauthorblockN{Vahid~Aref,~\IEEEmembership{Member,~IEEE,}
		Narayanan~Rengaswamy,~\IEEEmembership{Student~Member,~IEEE,}
		and~Laurent~Schmalen,~\IEEEmembership{Senior~Member,~IEEE}$^*$}%
\thanks{$^*$Authors are listed in alphabetical order. Parts of this paper have been presented at the 2016 International Zurich Seminar on Communications~\cite{RengaswamyIZS16} and the 2016 International Symposium on Turbo Codes and Iterative Decoding~\cite{ARS_ISTC16}.}
\thanks{V. Aref and L. Schmalen are with Nokia Bell Labs, Stuttgart 70435, Germany
 (e-mail: \texttt{\{first.last\}@nokia-bell-labs.com}).}%
\thanks{N. Rengaswamy is with the Department
of Electrical and Computer Engineering, Duke University, Durham,
NC, 27708 USA (e-mail: \texttt{narayanan.rengaswamy@duke.edu}).}
\thanks{This work was partially conducted while N. Rengaswamy was visiting Bell Labs as a research intern funded by a scholarship of the DAAD-RisePro programme. The work of V. Aref and L. Schmalen has been performed in the framework of the CELTIC
EUREKA project SENDATE-TANDEM (Project ID C2015/3-2), and it is
partly funded by the German BMBF (Project ID 16KIS0450K).}%
\thanks{Copyright~\copyright~2018 IEEE. Personal use of this material is permitted.  However, permission to use this material for any other purposes must be obtained from the IEEE by sending a request to \texttt{pubs-permissions@ieee.org}.}%
}

\maketitle

\begin{abstract}

Regular Spatially-Coupled LDPC (SC-LDPC) ensembles have gained significant interest since they were shown to universally achieve the capacity of binary memoryless channels under low-complexity belief-propagation decoding. 
In this work, we focus primarily on the performance of these ensembles over binary channels affected by bursts of erasures. We first develop an analysis of the finite length performance for a single burst per codeword and no errors otherwise. We first assume that the burst erases a complete spatial position, modeling for instance node failures in distributed storage. We provide new tight lower bounds for the block erasure probability ($P_\B$) at finite block length and bounds on the coupling parameter for being asymptotically able to recover the burst.
We further show that expurgating the ensemble can improve the block erasure probability by several orders of magnitude. Later we extend our methodology to more general channel models. In a first extension, we consider bursts that can start at a random location in the codeword and span across multiple spatial positions. 
Besides the finite length analysis, we determine by means of density evolution the maximum correctable burst length. In a second extension, we consider the case where in addition to a single burst, random bit erasures may occur. Finally, 
 we consider a block erasure channel model which erases each spatial position independently with some probability $p$, potentially introducing multiple bursts simultaneously. All results are verified using Monte-Carlo simulations.

\end{abstract}

\begin{IEEEkeywords}
Codes on graphs, low-density parity-check (LDPC) codes, spatial coupling, finite length code performance, burst erasures, stopping sets.
\end{IEEEkeywords}

\section{Introduction and Motivation}
\label{sec:intro}

Low-density parity-check (LDPC) codes, first introduced by Gallager in 1962~\cite{GallagerIT62}, are graph-based codes that have found widespread use due to their excellent performance under iterative belief-propagation (BP) decoding.
However, BP decoding is suboptimal and does not reach the performance of optimal maximum-a-posteriori (MAP) decoding~\cite{Richardson-MCT08}.
Yet, it is commonly employed in practice due to its significant computational advantage over MAP decoding.
The BP decoding threshold of LDPC codes can be improved towards values close to the capacity of the channel by the use of irregular LDPC codes~\cite{Richardson2001design}. 
However, capacity-approaching irregular LDPC codes usually require a large fraction of degree-2 variable nodes, which leads to undesirable finite-length properties like small minimum distance and a large amount of stopping or trapping sets that impair the BP decoding performance.

An attractive possibility to overcome this deficiency is the use of terminated spatially-coupled (SC) LDPC codes---originally introduced as convolutional LDPC ensembles by \feltstrom~and Zigangirov in~\cite{Felstrom-it99}---which show significantly better BP thresholds than LDPC codes without requiring large fractions of degree-2 variable nodes. It was numerically observed and conjectured~\cite{lentmaier2010thresholds} that the BP threshold of terminated SC-LDPC codes saturates to the MAP threshold of the underlying LDPC code ensemble. This phenomenonm, termed as \emph{threshold saturation}, was subsequently rigorously proven in~\cite{Kudekar-it11} and it was shown that this method allows us to asymptotically achieve capacity on the binary erasure channel (BEC) under low-complexity BP decoding with regular SC-LDPC ensembles. Threshold saturation was later shown to be universally true for any binary memoryless symmetric (BMS) channel~\cite{KudekarIT13}, and the result has triggered a lot of research interest for SC-LDPC codes and their practical applications.
While the asymptotic behavior of SC-LDPC codes is now well understood for a few years, recently the finite length performance of various constructions of SC-LDPC codes has been studied~\cite{Olmos-isit11,Stinner-isit14,Olmos-it15} and scaling laws to predict the finite length behavior have been proposed.

In this paper, we investigate the asymptotic as well as the finite-length behavior of regular SC-LDPC codes when a single burst or a sporadic random bursts of erasures occur on the channel. 
Burst erasures can model some common practical communication scenarios such as, e.g., deep fades in wireless communications, the failure of nodes in distributed storage systems and content delivery networks (CDNs), collisions in code slotted ALOHA~\cite{liva2012spatially}, or the loss of packets in packet-based communication systems, to name just a few.
One particularly interesting case is the erasure of a complete spatial position (SP), which can happen for instance when a node fails in distributed noisy storage, where every node is mapped to an SP. 
In such scenarios, many of the common practically relevant protograph-based constructions of SC-LDPC codes fail by construction. 
For example, if a complete SP is erased in the protograph construction of~\cite[Fig.~3]{Kudekar-it11}, all check nodes that have connections to this SP have at least two erased connections and hence, the erased variable nodes of that SP are not recoverable. 
This situation has been studied in~\cite{Iyengar-icc10}, where the authors additionally provide protograph constructions that avoid this situation and also maximize the correctable burst length given some structural constraints of the code. 
To correct bursts encompassing an SP, the authors in~\cite{Mori-corr15} apply interleaving (therein denoted band splitting) to a protograph-based SC-LDPC code. 
Interleaving however increases system latency and if windowed decoding is used, this approach results in an increased required window length and thus complexity. 

Recently, in~\cite{ulHassan-isit14},\cite{ulHassan-itw15}, it has been shown that some well-designed protograph-based LDPC codes can increase the diversity order of block fading channels and are thus also good candidates for block erasure channels. 
The tradeoff with these constructions is, however, that they require large syndrome former memories if the burst length becomes large. 
In addition, closely related structures based on protographs have been proposed in~\cite{Jardel-comnet15,Dedeoglu-ict15}, which spatially couple the previously proposed root-check LDPC codes~\cite{Boutros-tit10} to improve the finite length performance and thresholds. 
The analysis of the randomly coupled SC-LDPC ensemble of~\cite{Kudekar-it11}, for the case where complete SPs are randomly erased, has been discussed in~\cite{Jule-isit13}. In this work, the authors notice the robustness of general SC-LDPC ensembles against bursts and use the randomly erased SPs as a model for block-fading channels. 
Based on an asymptotic analysis, relatively loos easymptotic lower and upper bounds for the bit and block erasure probabilities are derived in~\cite{Jule-isit13}.

In~\cite{Jule-isit13} and some of the other previously mentioned works, the transmitted bits are either received without error, or erased by the burst. 
Additionally, the burst erasure affects complete SPs. 
In \cite{ARS_ISTC16}, we considered a more general model. 
First we allowed the burst erasure to take on any length and assumed that its starting position can occur anywhere in the codeword. 
Bursts of random starting positions have also been investigated in~\cite{AndriyanovaBlackSea15}, although for different reasons and applications (synchronization). 
Additionally, we assumed that the parts of the codeword not affected by the burst are affected by a memoryless noise process. 
We used density evolution to find the maximum correctable burst length when a random, regular SC-LDPC ensemble is used for transmission over the BEC or the binary additive white Gaussian noise channel (BiAWGN). 
We empirically observed that the correctable burst length is minimal when the starting position of the burst is exactly at the boundary of an SP. 
This means that the burst is less likely to be recovered when the first affected SP is completely erased. 
This fact additionally motivates the analysis of the scenario in which a burst erases exactly one SP.

Our main focus in this paper is the finite-length (non-asymptotic) analysis of the random regular SC-LDPC ensemble~\cite{Kudekar-it11} over channels with burst erasures. 
We consider the random regular SC-LDPC ensemble because it is universally capacity-achieving over a wide range of memoryless channels~\cite{KudekarIT13} and we investigate if this ensemble can be beneficial as well if burst erasures occur in the channel. 
We start with the case where a complete SP is erased and derive a necessary condition on the required coupling width 
to recover from the burst. 
Then we give a lower bound on the block erasure probability for finite length codes based on a novel stopping set analysis. 
Subsequently we study the effects of expurgation, i.e., the removal of short cycles from the graph. We show that expurgation significantly leads to better performance in the finite length regime when burst erasures occur.
We then generalize these results to situations where the burst does not necessarily occur at the boundary of an SP and can span more than one SP.
Based on density evolution, we first derive a bound that relates the coupling width $w$ and the length of bursts, and then focus again on the finite block length regime. 
Additionally, we introduce random erasures in the parts of the codeword not affected by the burst, as a more realistic model for noisy distributed storage or more traditional communication schemes. 

Finally, we consider a more general non-single-burst scenario where the channel erases each spatial position independently with some probability.
We demonstrate how our analyses on the single-burst channel models can be used to closely estimate the decoding failure probability on this channel as well.

The paper is organized as follows: 
In Section~\ref{sec:prelims}, we review the essential technical background, 
motivate our problem and define the channel models. 
In Section~\ref{sec:spbcAnalysis}, we introduce the novel finite-length analysis of the random ensemble for the case when a complete SP is erased. 
Then, in Section~\ref{sec:expurgate}, we detail the effects of expurgating the ensemble on this channel.
In section~\ref{sec:general_channels}, we extend the finite-length analysis to some other  channel models including: a single burst of erasures of arbitrary length and starting position inside the codeword, the transmission of the bits over a binary erasure channel (BEC) before a burst erasure occurs, and finally 
the block erasure channel which erases each spatial position independently with some probability.
Finally, in Section~\ref{sec:conclusions} we conclude the paper highlighting directions for future research.

\section{Preliminaries \& Motivation}
\label{sec:prelims}

\subsection{Notation}
We use $\mathbb{N}$ to denote the set of positive integers and $[n]$ to denote the set $[n] = \{1,2,\ldots,n\}$. 
In a graph, we denote the neighborhood of a vertex $v_i$ by $\mathcal{N}(v_i)$, i.e., $\mathcal{N}(v_i)$ is the set of all nodes such that there exists an edge between $v_i$ and the node. 
We say that, for $n\in\mathbb{N}$, a function $f(n)$ is $O(g(n))$ if there exists an $n_0\in\mathbb{N}$ and a positive constant $\gamma$ such that $|f(n)| \leq \gamma|g(n)|$ for all integers $n>n_0$. For two real values $a$ and $b$, we say $a\lessapprox b$ if $a\leq b$ and $\frac{b}{a}\approx 1$, i.e. $b$ is a good approximation of $a$. Similarly, we define the sign $\gtrapprox$. 

The symmetric binary erasure channel with parameter $\epsilon$ is denoted by BEC($\epsilon$) and its transition probabilities are defined, for $x\in \{0,1\}$ and $y\in \{0,1,?\}$, where $?$ denotes an erasure, by
\begin{equation*}
W(y|x)=
\begin{cases}
1-\epsilon & \text{if } y=x\\
\epsilon   & \text{if } y=?
\end{cases} .
\end{equation*}

\subsection{The Random Regular SC-LDPC Ensemble}
\label{sec:randomSCLDPC}

We now briefly review how to sample a code from the random regular SC-LDPC ensemble~\cite{Kudekar-it11}, denoted as $\mathcal{C}_{\mathcal{R}}(d_v,d_c,w,L,M)$. 
We first lay out a set of positions indexed from $z=1$ to $L$ on a \emph{spatial dimension}. 
At each spatial position (SP) $z$, there are $M$ variable nodes (VNs) and $M\frac{d_v}{d_c}$ check nodes (CNs), 
where $M\frac{d_v}{d_c} \in \mathbb{N}$ and, $d_v$ and $d_c$ denote the variable and check node degrees, respectively.
Let $w>1$ denote the coupling (smoothing) parameter. 
Then, we additionally consider $w-1$ sets of $M\frac{d_v}{d_c}$ CNs in SPs $L+1,\dots,L+w-1$. 
Every CN is equipped with $d_c$ \emph{sockets} and imposes an even parity constraint on its $d_c$ neighboring VNs. 
Each VN in SP $z$ is connected to $d_v$ CNs from SPs $z,\dots,z+w-1$ as follows: 
each of the $d_v$ edges of this VN is allowed to randomly and uniformly connect to any of the $wMd_v$ free sockets arising from the CNs in SPs $z,\dots,z+w-1$, such that \emph{parallel edges} are avoided in the resultant bipartite graph. 
When a VN has multiple edges connected between itself and a particular CN, those edges are called parallel edges.
We avoid parallel edges as it is well known that for finite $M$, the presence of parallel edges can have detrimental effects on the decoding performance~\cite{Richardson-MCT08}.
This graph represents the code so that we have $N=LM$ code bits, distributed over $L$ SPs.   
Note that the CNs at the boundaries, i.e., at SPs $1,\ldots,w-1$ and $L+1,\ldots,L+w-1$, can have degree less than $d_c$, due to termination of the code. 
CNs of degree zero are removed from the code.
Because of the additional CNs in SPs $z>L$, the code rate is $r = 1-\frac{d_v}{d_c}-O(\frac{w}{L})$.
Throughout this work, we assume the two mild conditions of $d_v \geq 3$ and $wM \geq 2(d_v+1)d_c$ (see Appendix~\ref{sec:appProof1}).

A subset $\mathcal{A}$ of  VNs in a code is a \emph{stopping set} if all the neighboring CNs of (the VNs in) $\mathcal{A}$ connect to $\mathcal{A}$ at least twice~\cite[Def. 3.137]{Richardson-MCT08}. 
In such a case, if all VNs in $\mathcal{A}$ have been erased by the channel, then the BP decoder will fail since all the neighboring CNs are connected to at least two erased VNs. 
Therefore, such a set will stop the decoding process and hence is called a \emph{stopping set}. 
The cardinality of the set $\mathcal{A}$ is also its size.
A \emph{minimal stopping set} is one which does not contain a smaller size non-empty stopping set within itself. 

\subsection*{The Poisson SC-LDPC Ensemble ($\mathcal{C}_\mathcal{P}$)}

For the sake of comparison, we also consider another ensemble. 
We still assume a regular variable node degree $d_v$, but there is no limit placed on the check degree. 
For each VN at spatial position $z$, we assume that its edges can connect to any check node at spatial positions $z,\ldots, z+w-1$ such that parallel edges are avoided (and without constraining the check node degree). 
Check nodes are selected uniformly at random for each edge. 
This construction yields the so-called Poisson ensemble $\mathcal{C}_{\mathcal{P}}(d_v,d_c,w,L,M)$, where $d_c$ specifies the \emph{average} check node degree of this ensemble. 
By the described construction procedure, we do not have a regular check node degree but instead an irregular distribution that follows a binomial distribution (which converges to a Poisson distribution for $M\to\infty$). 
By fixing the number of CNs per SP to $M\frac{d_v}{d_c}$, we indeed get an average check node degree of~$d_c$.

\subsection{Motivation}\label{sec:betamax}

Consider the random regular SC-LDPC ensemble $\mathcal{C_R}(d_v,d_c,w,L,M)$ affected by a single burst of $b$ bits. Let $S\in[M]$ denote the random starting VN of the burst at some SP $z_0$.
Note that $z_0$ is arbitrary\footnote{When there is only a burst of erasures, the performance will be equivalent to the performance of the circular tail-biting ensemble with the same parameters~\cite{tavares2007tail}.}. We define the normalized quantities $s \triangleq \frac{S}{M}$ and $\beta \triangleq \frac{b}{M}$, so that $0 \leq s \leq 1$ and $0 \leq \beta \leq L$.
Using density evolution in the limit of $M$, we showed numerically in \cite{ARS_ISTC16} 
that for any fixed $s$, there is a maximum normalized $\beta(s)$ recoverable with an arbitrarily small probability of decoding failure. 
Then, a burst of length $\beta M$ and random starting position is recoverable in the limit of $M$ if 
\begin{equation*}
\beta<\beta_{\rm max} = \min_{0\leq s< 1} \beta(s).
\end{equation*} 
Figure~\ref{fig:beta} illustrates $\beta(s)/\beta_{\rm max}$ as a function of the normalized starting position $s$ for the $\mathcal{C}_{\mathcal{R}}(d_v=3,d_c=6,w,L,M)$ SC-LDPC ensemble with $w\in\{3,4,5\}$. 
Interestingly, we observed that when $\beta_{\rm max}>1$, $s=0$ (the full erasure of the position $z_0$) is the worst case scenario as $\beta(0)$ is a minimum of $\beta(s)$. 
We will also see later in Example~\ref{ex:b=M} that the error floor is also larger when $s=0$. 
\begin{figure}[t!]
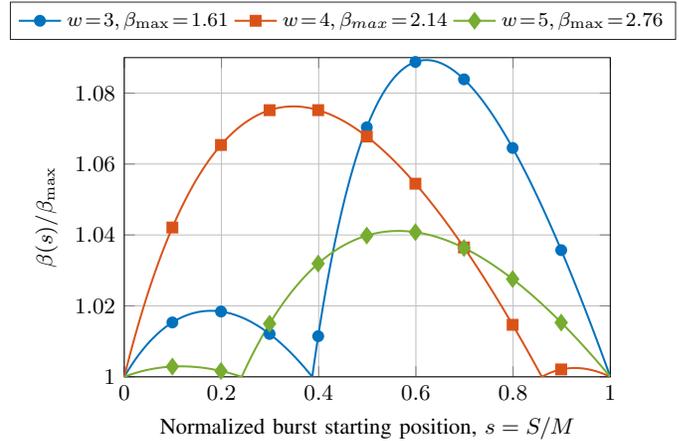

\newlength{\picwidth}
\newlength{\picheight}
\setlength{\picwidth}{0.4\textwidth}
\setlength{\picheight}{0.25\textwidth}
\centering
\myfigure{beta}
\caption{The normalized maximum correctable burst length for a given starting position $S=sM$  when the $C_{\mathcal{R}}(3,6,w,L,M)$ ensemble is used.}\label{fig:beta}
\end{figure}

The observations of Fig.~\ref{fig:beta} motivates our first focus on a very simple channel model, where the starting position of a burst is $S=1$ and the burst length is exactly $b=M$. Later, we will extend our results to arbitrary values of $S$ and $b$ and then, to the case of occurring more than a single-burst. 
We refer the interested reader to \cite{ARS_ISTC16} for the details of density evolution and focus in this paper on 
the error floor due to the occurrence of a random burst of length $b<\beta_{\rm max} M$ when $M$ is finite.

\subsection{Burst-Erasure Channel Models}
\label{sec:SPBC}

We consider four different channel models
introducing either a single burst of erasures or few random bursts of erasures during the transmission of a SC-LDPC codeword.
First, the \emph{Single Position Burst Channel} (SPBC) erases all $M$ VNs of exactly one SP $z \in [L]$ in the transmitted codeword, so that bit indices $E = \{(z-1)M+1,\ldots,z M\}$ are erased in the received word, while all other bits are received correctly. We additionally introduce random independent erasures to this model in Section~\ref{sec:floorBEC}.

The third model is the more general \emph{Random Burst Channel} (RBC) whose burst pattern is denoted by RBC($b$) so that bit indices $E = \{(z-1)M+S,\ldots,(z-1)M+S+b-1\}$ are erased with a random $S \in [M]$ with $\pr{S=i}=1/M$, $z\in [L]$, and $b<\beta_{\rm max}M $.

The fourth and final channel model is the \emph{Block Erasure Channel}, denoted by BLEC($p$), which completely erases each SP independently with probability $p$.
This channel model includes the case of multiple independent bursts, which can occur in some practical scenarios like distributed storage. We show how the basic single-burst models can be used as building structures to closely estimate the decoding failure probability in 
the BLEC($p$). A similar approach may be applicable for other more complex channel models.

The simple single-burst channel model has for instance been used in~\cite{Iyengar-icc10} to study the recoverability of single bursts in protograph-based SC-LDPC codes. While multiple other models exist for a correlated erasure channel, like the Gilbert-Elliott model~\cite{Iyengar-icc10}, 
the above models can also describe some realistic scenarios:
for instance, the BLEC can model simultaneous multiple node failures in a distributed storage scenario with nodes associated to SPs. 
Or, the SPBC can be used to model a slotted-ALOHA multiple access scheme where each user transmits an SC-LDPC codeword over $L$ time slots, but one SP might be erased in the case of a single collision event. 
Moreover, long burst erasures might occur in block fading scenarios, in optical communications which are subject to polarization dependent loss. 

\section{Error Analysis on the SPBC}
\label{sec:spbcAnalysis}

\newcommand{\Nsp}{\mathbb{N}_2^{\rm\scriptscriptstyle SP}}
\newcommand{\Nspone}{\mathbb{N}_1^{\rm\scriptscriptstyle SP}}

As highlighted above, we empirically observed in~\cite{ARS_ISTC16} that the worst burst scenario is when the starting position of the burst is the first bit of a spatial position in the codeword.
Hence, we first analyze the performance of SC-LDPC code ensembles on the SPBC in this section and later generalize to the RBC. 
We give lower bounds on the block erasure rate after decoding and also provide a comparison with the Poisson ensemble.

First, we start by giving a necessary condition for being able to correct a burst.

\begin{proposition}\label{prop:boundebp}
Consider the $\mathcal{C}_{\mathcal{R}}(d_v,d_c,w,L,M)$ SC-LDPC ensemble and transmission over the SPBC, where exactly one SP is erased and all other SPs are received correctly. 
A necessary condition for recovering the erased spatial position is that $w\geq \left\lceil 1/\eBP(d_v,d_c)\right\rceil$, where $\eBP(d_v,d_c)$ is the BP threshold of the underlying ($d_v,d_c$) LDPC ensemble.
\end{proposition}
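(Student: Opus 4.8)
The plan is to reduce the recovery of the single erased spatial position to a peeling process on a small residual graph, and to show that the density‑evolution recursion of that process coincides with the one of the underlying $(d_v,d_c)$‑regular LDPC ensemble on a BEC with erasure probability $1/w$.

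First I would observe that, since every spatial position other than $z_0$ is received correctly, belief‑propagation (peeling) on the full Tanner graph is equivalent to peeling on the residual graph $\mathcal{G}'$ that keeps only the $M$ variable nodes of $z_0$ together with the check nodes in spatial positions $z_0,\dots,z_0+w-1$, each such check node retaining only the edges incident to $z_0$ (its remaining, known neighbours are absorbed into its effective parity). Invoking the tail‑biting equivalence noted above (or, for the necessary‑condition direction, simply taking $z_0$ in the bulk of a long chain, which clearly suffices), each of the $wMd_v/d_c$ surviving check nodes carries $d_c$ edges spread over $w$ consecutive spatial positions, exactly one of which is $z_0$; asymptotically in $M$ each such edge falls in $z_0$ independently with probability $1/w$, so a check node keeps $\mathrm{Binomial}(d_c,1/w)$ edges in $\mathcal{G}'$, while all $M$ variable nodes of $\mathcal{G}'$ have degree $d_v$. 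Thus $\mathcal{G}'$ is, in law and up to $o(M)$ fluctuations, exactly the residual graph of a $(d_v,d_c)$‑regular LDPC code of length $wM$ transmitted over $\mathrm{BEC}(1/w)$ and conditioned on $\approx M$ erasures.

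Next I would write the peeling density‑evolution recursion for $\mathcal{G}'$. Letting $x$ denote the edge‑averaged probability that a variable‑to‑check message out of $z_0$ is still erased, a check‑to‑variable message is erased with probability $1-(1-x/w)^{d_c-1}$ (each of the other $d_c-1$ check edges is both incident to $z_0$ and erased with probability $x/w$), hence
\begin{equation*}
x_{\ell+1} \;=\; \bigl[\,1-(1-x_\ell/w)^{d_c-1}\,\bigr]^{\,d_v-1}, \qquad x_0 = 1 .
\end{equation*}
The substitution $x_\ell = w\,\tilde x_\ell$ turns this into $\tilde x_{\ell+1} = \tfrac1w\bigl[1-(1-\tilde x_\ell)^{d_c-1}\bigr]^{d_v-1}$ with $\tilde x_0 = 1/w$, which is precisely the density‑evolution recursion of the $(d_v,d_c)$‑regular ensemble on $\mathrm{BEC}(\epsilon)$ at $\epsilon = 1/w$. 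Therefore $x_\ell\to 0$, i.e.\ the burst is recovered with high probability, if and only if $1/w$ lies below the BP threshold $\eBP(d_v,d_c)$; and if $1/w>\eBP(d_v,d_c)$ the recursion has a strictly positive stable fixed point reachable from $1$, so decoding fails with probability bounded away from $0$. Consequently recovery requires $1/w\le\eBP(d_v,d_c)$, i.e.\ $w\ge 1/\eBP(d_v,d_c)$, and since $w\in\mathbb{N}$ this forces $w\ge\lceil 1/\eBP(d_v,d_c)\rceil$; equivalently, $w<\lceil 1/\eBP(d_v,d_c)\rceil$ implies $1/w>\eBP(d_v,d_c)$ and hence failure.

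The step I expect to be the main obstacle is making the identification of $\mathcal{G}'$ with the $\mathrm{BEC}(1/w)$ residual ensemble rigorous at finite $M$: one must control the effect of the no‑parallel‑edges constraint on the edge distribution, invoke a concentration argument (in the spirit of the standard SC‑LDPC concentration results) so that the empirical degree profile of $\mathcal{G}'$ is within $o(M)$ of the claimed one, and dispose of the boundary check nodes — most cleanly by passing to the tail‑biting ensemble, or alternatively by checking that boundary positions only reduce the check‑to‑variable erasure map pointwise and hence cannot enlarge the required $w$. Everything else (the density‑evolution bookkeeping, passing from the recursion to the threshold, and the final rounding to $\lceil 1/\eBP\rceil$) is routine.
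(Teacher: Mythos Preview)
Your argument is correct and lands on the same key recursion and the same substitution $\tilde x=x/w$ as the paper. The paper's route is slightly more direct: it starts from the full SC-LDPC density-evolution equation with $\epsilon_z=1$ and $\epsilon_i=0$ for $i\neq z$, notes that $x_i^{(\ell)}=0$ for all $i\neq z$, and collapses the double sum to obtain $x_z^{(\ell)}=\bigl[1-(1-x_z^{(\ell-1)}/w)^{d_c-1}\bigr]^{d_v-1}$ immediately, without ever constructing a residual graph. Because the proposition is stated and proved entirely at the density-evolution level (i.e., already in the $M\to\infty$ limit), the ``main obstacle'' you flag --- making the finite-$M$ residual graph match the $\mathrm{BEC}(1/w)$ ensemble, handling the no-parallel-edges constraint, and invoking concentration --- simply does not arise in the paper's proof. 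Your residual-graph picture is a pleasant interpretation of why the factor $1/w$ appears, but it is extra scaffolding; working purely with the DE recursion from the outset sidesteps all of the issues you anticipate.
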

\begin{proof}
Consider the $\mathcal{C}_{\mathcal{R}}(d_v,d_c,w,L,M)$ ensemble, where we only consider erasures of entire spatial positions (not bits), and so $M$ is immaterial. 
The DE equation of this ensemble is given by~\cite{Kudekar-it11}
\[
x_i^{(\ell)} = \epsilon_i\left(1 - \frac{1}{w}\sum_{j=0}^{w-1}\left(1 -\frac{1}{w}\sum_{k=0}^{w-1}x_{i+j-k}^{(\ell-1)}\right)^{d_c-1}\right)^{d_v-1} ,
\]
where $\epsilon_i$ denotes the erasure probability at SP $i$. For the SPBC, we have $\epsilon_z=1$ and $\epsilon_{\sim z} = 0$, where the subscript $\sim z$ denotes all SPs $i\in[L]\backslash\{z\}$. 
We can immediately see that $x_{\sim z}^{(\ell)}=0$ and then we can simplify the DE equation as
\begin{align*}
x_z^{(\ell)} &= \left(1 - \frac{1}{w}\sum_{j=0}^{w-1}\left(1 - \frac{1}{w}x_z^{(\ell-1)}\right)^{d_c-1}\right)^{d_v-1} \\
  &= \left(1 -\left(1 - \frac{1}{w}x_z^{(\ell-1)}\right)^{d_c-1}\right)^{d_v-1} .
\end{align*}
After a change of variable $\tilde{x}_z^{(\ell)} = x_z^{(\ell)}/w$, we get
\[
\tilde{x}_z^{(\ell)} = \frac1w\left(1 -\left(1 - \tilde{x}_z^{(\ell-1)}\right)^{d_c-1}\right)^{d_v-1},
\]
which resembles the well-known DE equation for regular $(d_v,d_c)$ LDPC ensembles. 
We see that we can recover the missing SP if and only if $1/w\leq \eBP(d_v,d_c)$, where $\eBP(d_v,d_c)$ is the BP threshold of the underlying regular uncoupled ensemble~\cite{Richardson-MCT08}. 
Rearranging this condition completes the proof.
\end{proof}
\begin{corollary}
Consider the $\mathcal{C}_{\mathcal{R}}(d_v,d_c,w,L,M)$ SC-LDPC ensemble and transmission over the SPBC. 
A relaxed necessary condition for recovering the erased spatial position is that $w\geq \left\lceil \frac{d_c}{d_v}\right\rceil$.
\end{corollary}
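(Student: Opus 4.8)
The plan is to obtain this relaxed condition as an immediate corollary of Proposition~\ref{prop:boundebp} by replacing the (channel-specific, possibly irrational) quantity $\eBP(d_v,d_c)$ with an explicit rational upper bound. Proposition~\ref{prop:boundebp} already supplies the necessary condition $w \geq \lceil 1/\eBP(d_v,d_c)\rceil$, and since $x\mapsto\lceil x\rceil$ is non-decreasing, it suffices to prove that $1/\eBP(d_v,d_c)\geq d_c/d_v$, equivalently that $\eBP(d_v,d_c)\leq d_v/d_c$.

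For this last inequality I would argue through the Shannon limit of the BEC. The regular $(d_v,d_c)$ LDPC ensemble has design rate $1-\tfrac{d_v}{d_c}$, and every code in the ensemble has actual rate at least this value (the parity-check matrix contributes at most $\tfrac{d_v}{d_c}$ checks per code symbol), so a strictly larger true rate can only tighten the capacity constraint. Reliable decoding over $\mathrm{BEC}(\epsilon)$ at rate $r$ forces $r\leq 1-\epsilon$, hence $\epsilon\leq 1-r\leq 1-(1-\tfrac{d_v}{d_c})=\tfrac{d_v}{d_c}$; taking the supremum over erasure fractions that BP can recover gives $\eBP(d_v,d_c)\leq\tfrac{d_v}{d_c}$. (Equivalently one may cite $\eBP \le \epsilon^{\star}_{\mathrm{MAP}} \le 1-r$, using that BP never beats MAP and MAP never beats capacity.) Chaining, $w \geq \lceil 1/\eBP(d_v,d_c)\rceil \geq \lceil d_c/d_v\rceil$, as claimed.

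I do not anticipate a genuine obstacle: all the analytical content lives in Proposition~\ref{prop:boundebp}, and this step is just a monotone substitution. The only thing to keep straight is the bookkeeping of inequality directions — namely that the \emph{design} rate \emph{lower}-bounds the true rate (so the capacity bound does not degrade) and that the ceiling function preserves ``$\geq$''. It is worth noting in the statement or the proof that this relaxed bound can be strict: e.g.\ for $(d_v,d_c)=(3,6)$ one has $\eBP\approx 0.4294$, so $\lceil 1/\eBP\rceil = 3 = \lceil d_c/d_v\rceil$ here coincide, whereas for ensembles with $\eBP$ well below $d_v/d_c$ the exact condition of Proposition~\ref{prop:boundebp} is the stronger one.
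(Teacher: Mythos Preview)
Your proposal is correct and matches the paper's own proof essentially line for line: the paper simply invokes $\eBP(d_v,d_c)\leq 1-R\leq d_v/d_c$ and plugs this into Proposition~\ref{prop:boundebp}. Your added commentary about inequality directions and strictness of the bound is accurate but goes beyond what the paper records.
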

\begin{proof}
By using the fact that $\eBP(d_v,d_c) \leq 1-R \leq \frac{d_v}{d_c}$, the relaxed condition is immediately obtained from Proposition~\ref{prop:boundebp}.
\end{proof}

\begin{remark}\label{rem:mult_SPs}
Note that this result is not only valid for a single burst, but in general for any combination of erased SPs out of the $L$ total SPs, provided that they are at least $w$ apart, i.e., there are at least $w-1$ non-erased SPs between two erased SPs. 
\end{remark}

\begin{remark} The analysis based on density evolution, e.g., Proposition~\ref{prop:boundebp},
estimates the ``bit error probability'' in the limit of $M$ while our main concern is ``block error probability'' when there is a burst of erasures. It has been proven for LDPC ensembles with $d_v\geq 3$ that a vanishing bit error probability guarantees a vanishing block error probability over the BEC~\cite[Lemma 3.166]{Richardson-MCT08}. Although it might not be true in general, our simulations suggest the same behaviour for SC-LDPC codes
when $d_v\geq 3$ and the channel is affected by a burst of erasures.   
\end{remark}

\subsection{The Random Regular SC-LDPC Ensemble ($\mathcal{C}_\mathcal{R}$)}
\label{sec:C_R}

\begin{sloppypar}
Let $P_\B^\SPBC$ denote the average block erasure (decoding failure) probability of the random $\mathcal{C}_{\mathcal{R}}(d_v,d_c,w,L,M)$ ensemble on the SPBC under BP decoding, i.e., the probability that the iterative decoder fails to recover the codeword. 
For large enough $M$, size-$2$ stopping sets (each of which also forms a weight-2 codeword) are the dominant structures in the graph that cause the BP decoder to fail~\cite{Olmos-isit11}. 
Hence, the number of size-$2$ stopping sets per SP, denoted $\Nsp$, is a good starting point for analyzing the performance of the ensemble on the SPBC.
We introduce the stopping set indicator function $U_{ij}$ with
\[
U_{ij} = \left\{\begin{array}{ll}
1 &, \text{if VNs }v_i\text{ and }v_j\text{ form a stopping set}\\
0 &, \text{otherwise}
\end{array}\right. .
\]
We clearly have $\mathbb{E}[U_{ij}] = \pr{U_{ij}=1}$. Thus, $\Nsp=\sum_{1\leq i<j\leq M} U_{ij}$ where the summation is over all $\binom{M}{2}$ pairs of VNs $v_i$ and $v_j$ from an SP.
\end{sloppypar}

\begin{lemma}
\label{lem:PU_indic}
Consider a code sampled uniformly from the $\mathcal{C}_{\mathcal{R}}(d_v,d_c,w,L,M)$ SC-LDPC ensemble.
The probability that two variable nodes from a same SP $z$ of this code form a stopping set amounts to
\begin{align}
P_{\mathcal{R}} = 
\frac{\left(1-\frac1{d_c}\right)^{d_v}}{\sum\limits_{\ell=0}^{d_v} \binom{d_v}{\ell}
\binom{wM\frac{d_v}{d_c}-d_v}{d_v-\ell} \left(1-\frac1{d_c}\right)^\ell }. 
\label{eq:lem1_pdef}
\end{align}
\end{lemma}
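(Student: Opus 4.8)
The plan is to compute $P_{\mathcal{R}} = \pr{U_{ij}=1}$ directly by conditioning on the structure that a size-$2$ stopping set $\{v_i, v_j\}$ imposes on the graph. The key observation is that for $\{v_i, v_j\}$ to form a stopping set, every check node neighboring $v_i$ or $v_j$ must be connected to the pair at least twice; since each of $v_i, v_j$ has exactly $d_v$ edges and parallel edges are forbidden, this forces $v_i$ and $v_j$ to have exactly the same neighborhood $\mathcal{N}(v_i) = \mathcal{N}(v_j)$, a set of $d_v$ distinct check nodes, with each such check connected once to $v_i$ and once to $v_j$. So the event $\{U_{ij}=1\}$ is precisely the event that, once $v_i$'s $d_v$ sockets are placed, $v_j$'s $d_v$ edges land on $d_v$ distinct sockets belonging to exactly the same $d_v$ check nodes that $v_i$ hit.

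First I would fix the socket assignment of $v_i$: since $v_i$ sits at SP $z$, its $d_v$ edges go to $d_v$ distinct check nodes among the $wM\frac{d_v}{d_c}$ checks in SPs $z,\dots,z+w-1$ (ignoring for now the boundary effect, or noting as the authors do that $z$ is arbitrary / the tail-biting equivalence makes every SP see a full window). Each of the $d_v$ checks hit by $v_i$ has $d_c$ sockets, one of which is now occupied by $v_i$, leaving $d_c-1$ free sockets per such check, i.e. $d_v(d_c-1)$ ``good'' sockets for $v_j$. The total number of free sockets available to $v_j$ in its window is $wM\frac{d_v}{d_c}\cdot d_c = wMd_v$ minus the $d_v$ sockets used by $v_i$ (since $v_j$ is in the same SP $z$ and sees the same window of checks), giving $wMd_v - d_v$ free sockets — but we must count carefully because $v_j$'s edges are placed sequentially with the no-parallel-edge constraint. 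The clean way is to count configurations of $v_j$'s $d_v$ edges as unordered $d_v$-subsets of checks (each check contributing one socket, since parallel edges are banned means at most one edge per check), weighted by the number of ways to choose which socket of that check is used.

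Then I would set up the ratio: the numerator counts the favorable configurations for $v_j$ — all $d_v$ edges go into the $d_v$ checks already used by $v_i$, one edge per check, choosing one of the $d_c-1$ remaining sockets in each, giving $(d_c-1)^{d_v}$ weighted configurations. The denominator is the total number of valid configurations for $v_j$: choose some $\ell$ of $v_j$'s edges to reuse checks from $v_i$'s neighborhood (there are $\binom{d_v}{\ell}$ ways to pick which of those $d_v$ checks, each contributing $d_c-1$ free sockets) and the remaining $d_v-\ell$ edges to go to checks \emph{not} in $v_i$'s neighborhood (there are $wM\frac{d_v}{d_c} - d_v$ such checks, each with all $d_c$ sockets free, and we pick a $(d_v-\ell)$-subset of them, each check contributing $d_c$ sockets). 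Summing over $\ell$ and pulling out a common factor of $d_c^{d_v}$ from numerator and denominator to convert $(d_c-1)^{d_v}$ and $(d_c-1)^\ell d_c^{d_v-\ell}$ into the powers of $(1-\frac1{d_c})$ appearing in~\eqref{eq:lem1_pdef} gives exactly the claimed formula. The main obstacle I anticipate is justifying the ``equally likely configurations'' step rigorously — i.e. that the sequential, parallel-edge-avoiding socket assignment induces the uniform distribution over these weighted $d_v$-subset configurations for $v_j$, and that boundary SPs (where checks may have degree $<d_c$ and the window is truncated) do not affect the count; here I would invoke the conditions $d_v\geq 3$ and $wM\geq 2(d_v+1)d_c$ and the tail-biting/arbitrary-$z$ remark from Section~\ref{sec:betamax} to reduce to the ``bulk'' case, deferring the careful argument to an appendix as the paper does (Appendix~\ref{sec:appProof1}).
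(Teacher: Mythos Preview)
Your proposal is correct and follows essentially the same combinatorial argument as the paper: condition on $v_i$'s neighborhood, count socket configurations for $v_j$ stratified by the number $\ell$ of check nodes shared with $\mathcal{N}(v_i)$, and form the ratio (the paper carries an explicit $d_v!$ edge-permutation factor in both $T_{ss}$ and $T$, which cancels and matches your unordered-subset framing). Your worries about uniformity and boundary effects are not addressed in the paper's proof of this lemma either---the paper simply invokes the uniform-socket construction of the ensemble and works at a bulk SP; Appendix~\ref{sec:appProof1} is the proof of Theorem~\ref{thm:thmlb}, not a deferred justification for this lemma.
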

\begin{proof}
Let $v_i$ and $v_j$ be two VNs randomly chosen from an SP $z$ of the $\mathcal{C}_{\mathcal{R}}(d_v,d_v,w,L,M)$ ensemble, i.e., $i,j \in \{(z-1)M+1,\ldots,zM\}$ for $z \in [L]$. 
Recall that this ensemble contains no parallel edges. 
We use a combinatorial argument to compute the probability $P_{\mathcal{R}}$ that these two VNs form a size-$2$ stopping set. 
We label all the sockets of CNs. 
Let $T$ denote the total number of possible sub-graphs from $\{v_i,v_j\}$ and let $T_{ss}$ denote the number of possible sub-graphs in which these VNs form a size-$2$ stopping set.
First, we connect the $d_v$ edges of $v_i$ to randomly chosen empty sockets of $d_v$ distinct CNs as described in Section~\ref{sec:randomSCLDPC}. 
A stopping set (and in this case, also a low-weight codeword) is formed if and only if the edges of $v_j$ are connected to the same CNs, i.e., $\mathcal{N}(v_i)$. 
Each of these CNs has $d_c-1$ free distinct sockets. 
Thus,
\[
T_{ss}  = d_v! (d_c-1)^{d_v}, 
\]
where $d_v!$ is due to the permutation of edges and $(d_c-1)^{d_v}$ is due to the different ways of 
connecting to free sockets of $\mathcal{N}(v_i)$. The counting argument is illustrated by an example in Fig.~\ref{fig:size2ss}. 
\begin{figure}[t!]
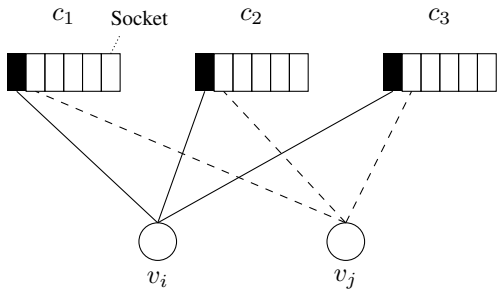

\centering
\myfigurenoresize{size2ss}
\caption{A size-$2$ stopping set in a code from the $(3,6)$ LDPC random ensemble. CNs $\{c_1,c_2,c_3\}$ and VNs $\{v_i,v_j\}$ have been labeled for convenience. CNs have been expanded to show all their $d_c=6$ sockets. The solid edges indicate definite connections and the dashed edges complete to form a stopping set. Parallel edges are not allowed in the ensemble.}\label{fig:size2ss}
\end{figure}
In general, provided that $v_i$ connects to some $d_v$ check nodes, $v_i$ and $v_j$ may connect to some $\ell$ common CNs, $0 \leq \ell\leq d_v$. 
On the one hand, there are $\binom{d_v}{\ell}(d_c-1)^\ell$ socket selections for the $\ell$ common CNs. 
On the other hand, there are $\binom{wM\frac{d_v}{d_c}-d_v}{d_v-\ell}d_c^{d_v-\ell}$ socket selections for all other distinct $wM\frac{d_v}{d_c}-d_v$ CNs. 
Including $d_v!$ permutation of edges, we have,
\[
T=d_v!\sum\limits_{\ell=0}^{d_v} \binom{d_v}{\ell}\binom{wM\frac{d_v}{d_c}-d_v}{d_v-\ell} (d_c-1)^\ell (d_c)^{d_v-\ell}.
\]
We get $P_{\mathcal{R}} \triangleq \pr{U_{ij}=1} = \frac{T_{ss}}{T}$, simplified further to \eqref{eq:lem1_pdef}.
\end{proof}
\begin{remark}
For large enough $M$, the quantity $T$ in the proof of Lemma~\ref{lem:PU_indic} can be well approximated by the dominating summand ($\ell=0$) which leads to the following approximations
 \begin{align}
\pr{U_{ij}=1} &\approx \frac{(1-1/d_c)^{d_v}}{\binom{wM\frac{d_v}{d_c} -d_v}{d_v}}
\approx d_v!\left(\frac{d_c-1}{(wM-d_c)d_v}\right)^{d_v}. \label{eq:p2approx}
\end{align}
\end{remark}
Since $\pr{U_{ij}=1}$ is identical for all pairs $(i,j)$ of VNs, we have $\mathbb{E}[\Nsp]=\binom{M}{2}\pr{U_{ij}=1}$ implying the following result.

\begin{theorem}\label{thm:thmlb}
Consider a code sampled uniformly from the $\mathcal{C}_{\mathcal{R}}(d_v,d_c,w,L,M)$ ensemble with 
$wM\geq 2(d_v+1)d_c$. 
If all variable nodes of a randomly chosen SP are erased, 
the (average) probability of BP decoding failure is lower-bounded by
\begin{align}
P_\B^\SPBC \geq \binom{M}{2}\left(1 - \frac{M^2}{(\frac{w}{d_c}M-3)^{d_v}}\right)
P_{\mathcal{R}},
\label{eq:lower_bound}
\end{align}
where 
$P_{\mathcal{R}}$ is the probability that two variable nodes from an SP of the code form a stopping set, given by~\eqref{eq:lem1_pdef}.
\end{theorem}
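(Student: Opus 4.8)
The plan is to turn the lower bound into an application of the inclusion–exclusion / Bonferroni inequality at second order, using the fact that the block decoding failure probability on the SPBC is at least the probability that the erased SP contains \emph{at least one} size-$2$ stopping set. Concretely, let $\Nsp = \sum_{1\le i<j\le M} U_{ij}$ as defined before the theorem. Since every size-$2$ stopping set inside the erased SP prevents the BP decoder from recovering that SP (all $M$ bits are erased, so if $\{v_i,v_j\}$ is a stopping set then it stays erased), we have $P_\B^\SPBC \ge \pr{\Nsp \ge 1}$. The second-moment (or truncated inclusion–exclusion) bound then gives
\begin{equation*}
\pr{\Nsp \ge 1} \;\ge\; \mathbb{E}[\Nsp] \;-\; \binom{\mathbb{E}[\Nsp]+\cdots}{} \;\ge\; \mathbb{E}[\Nsp] - \sum_{\{i,j\}\ne\{k,l\}} \pr{U_{ij}=1,\,U_{kl}=1},
\end{equation*}
so the task reduces to (i) writing $\mathbb{E}[\Nsp] = \binom{M}{2}P_{\mathcal{R}}$, which is immediate from Lemma~\ref{lem:PU_indic} and the sentence preceding the theorem, and (ii) upper-bounding the sum of pairwise joint probabilities $\pr{U_{ij}=1,U_{kl}=1}$ by something of the form $\binom{M}{2}P_{\mathcal{R}}\cdot \frac{M^2}{(\frac{w}{d_c}M-3)^{d_v}}$, so that factoring out $\binom{M}{2}P_{\mathcal{R}}$ yields exactly \eqref{eq:lower_bound}.

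The heart of the argument is step (ii): controlling the correlations between the events $\{U_{ij}=1\}$ for overlapping or disjoint VN pairs. I would split the pairs of pairs $(\{i,j\},\{k,l\})$ into the case where they share a VN (say $j=k$) and the case where all four VNs are distinct. In each case I would bound $\pr{U_{ij}=1,U_{kl}=1}$ via a conditional counting argument analogous to the proof of Lemma~\ref{lem:PU_indic}: condition on the edges of the "first" VN $v_i$, then count the socket configurations that force the remaining VNs to land on the same check nodes. The key point is that $\pr{U_{kl}=1 \mid U_{ij}=1} \le$ (something comparable to $P_{\mathcal{R}}$, up to the slight reduction in the number of available sockets caused by the already-placed edges); summing over the $O(M)$ choices of the fourth index $l$ (with $i,j,k$ fixed) and then over $O(M^2)$ choices of $\{i,j,k\}$, or equivalently bounding the whole double sum, produces the correction factor $\frac{M^2}{(\frac{w}{d_c}M-3)^{d_v}}$. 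Here $wM\frac{d_v}{d_c}-d_v \ge \frac{w}{d_c}M\cdot d_v - d_v$ and the hypothesis $wM \ge 2(d_v+1)d_c$ is what guarantees the correction factor is strictly less than $1$ (so the bound is non-vacuous) and that all the binomial coefficients appearing are well-defined and the approximations of the Remark after Lemma~\ref{lem:PU_indic} are valid.

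The main obstacle I anticipate is getting the constant "$3$" and the exponent $d_v$ in $(\frac{w}{d_c}M-3)^{d_v}$ exactly right from the combinatorics — that is, tracking precisely how many sockets/check-nodes are "used up" when we condition on one stopping set already being present, and bounding the resulting ratio of falling factorials $\binom{wM\frac{d_v}{d_c}-d_v}{d_v}\big/\binom{wM\frac{d_v}{d_c}-2d_v}{d_v}$-type quantities cleanly. The shared-VN case is slightly more delicate than the disjoint case because a VN can belong to at most one size-$2$ stopping set partner generically, but one still has to show the contribution of "bad" configurations (e.g.\ $v_j$ connected to the same $d_v$ CNs as \emph{both} $v_i$ and $v_l$) is negligible, which is where the $d_v\ge 3$ assumption and the parallel-edge-free structure are used. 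Once the two cases are bounded by a common expression, the algebra to collapse everything into the stated closed form is routine, and I would relegate the detailed socket-counting to an appendix, presenting only the reduction $P_\B^\SPBC \ge \pr{\Nsp\ge1} \ge \mathbb{E}[\Nsp] - \tfrac12\sum\pr{U_{ij}U_{kl}}$ and the final bound on the correction term in the main text.
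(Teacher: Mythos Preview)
Your proposal is essentially correct and follows the paper's strategy: reduce $P_\B^\SPBC$ to $\pr{\Nsp\ge 1}$, then control the second-order term via a combinatorial bound on $\mathbb{E}[U_{ij}U_{kl}]$ obtained by splitting into the shared-VN case ($|\{i,j\}\cap\{k,l\}|=1$) and the disjoint-VN case, which is exactly what the paper does in its auxiliary Lemma~\ref{lem:PU_joint}. The only cosmetic difference is that the paper invokes the second-moment inequality $\pr{X\ge 1}\ge(\mathbb{E}[X])^2/\mathbb{E}[X^2]$ rather than your Bonferroni truncation; both routes require the same joint bound $\mathbb{E}[U_{ij}U_{kl}]\le 2\mathbb{E}[U_{ij}]\big/\binom{wM\frac{d_v}{d_c}-2d_v}{d_v}$ and collapse to~\eqref{eq:lower_bound} by the same algebra (with $\binom{n}{d_v}\ge(n/d_v)^{d_v}$ producing the $(\frac{w}{d_c}M-3)^{d_v}$ denominator).
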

\begin{proof}
See Appendix~\ref{sec:appProof1}.
\end{proof}
\begin{remark}
Since a size-2 stopping set characterizes a low-weight codeword, Theorem~\ref{thm:thmlb} yields
a lower-bound on the average block erasure probability of MAP decoding, too.

Furthermore, $P_\B^\SPBC$ is the average block erasure probability over the code ensemble. 
In the limit of $M$, the block erasure probability of each instance of the ensemble concentrates to its average $P_\B^\SPBC$ (see \cite[Theorem 3.30]{Richardson-MCT08}).
\end{remark}

Theorem~\ref{thm:thmlb} also implies that for large enough $M$,
\begin{align}
P_\B^\SPBC & \geq  \mathbb{E}[\Nsp] \biggr( 1-O\biggr(\frac{1}{M^{d_v-2}}\biggr) \biggr)\approx \mathbb{E}[\mathbb{N}_2^{\rm\scriptscriptstyle SP}] \triangleq \lambda_{\rm SP}.
\end{align}
Note that following standard arguments~\cite{Olmos-isit11}, \cite[Appendix C]{Richardson-MCT08}, for large $M$, we can also approximate the bound on $P_\B^\SPBC$ by modeling $\mathbb{N}_{2}^{\rm SP}$ as a Poisson distribution with mean $\lambda_{\rm SP}$, i.e.,
\begin{equation}
\label{eq:SPBCLowerBound}
P_\B^\SPBC \gtrapprox 1-\pr{\mathbb{N}_{2}^{\rm SP}=0} \approx 1-e^{-\lambda_{\rm SP}} \approx \lambda_{\rm SP}.
\end{equation}
We observe that the average block erasure probability scales as $O(M^{2-d_v})$.

\subsection{Simulation Examples}
\label{sec:SPBCSim}

\begin{figure}[tb!]
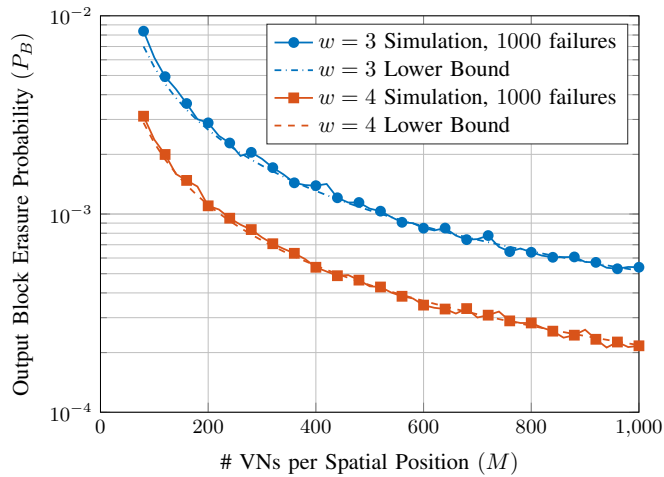

\centering
\myfigure{SPBC_w_3_w_4}
\caption{Monte Carlo simulations over the SPBC with the $\mathcal{C}_{\mathcal{R}}(3,6,w,L,M)$ ensemble for $w=3$ and $w=4$, along with their respective lower bounds~(\ref{eq:lower_bound}).}\label{fig:SPBC_w_3_w_4}
\end{figure}

\begin{figure}[tb!]
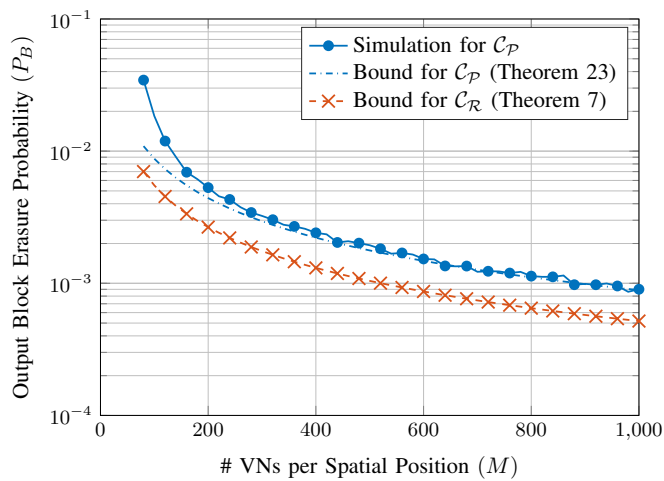

\myfigure{SPBC_poisson_w_3}
\caption{Monte Carlo simulations over the SPBC with the $\mathcal{C}_{\mathcal{P}}(3,6,3,L,M)$ ensemble, along with the lower bounds according to Theorem~\ref{thm:thmlb_poisson}. The bound for the $\mathcal{C}_{\mathcal{R}}$ ensemble is also plotted for comparison.}\label{fig:SPBC_poisson_w_3}
\end{figure}

To illustrate the accuracy of our bounds, we carried out Monte-Carlo simulations where we randomly erased a complete spatial position (i.e., all $M$ VNs) from the middle of the graph (to avoid boundary effects) for each transmitted codeword. 
At the receiver we performed BP decoding and estimated the error rate $P_\B^\SPBC$ averaged over the ensemble by counting $1000$ decoding failures in each experiment.
The simulation results for $\mathcal{C}_{\mathcal{R}}(3,6,w,L=100,M)$ with $w=3$ and $w=4$ are shown in Fig.~\ref{fig:SPBC_w_3_w_4} along with their respective lower bounds calculated using~\eqref{eq:lower_bound} and \eqref{eq:p2approx}. 
We observe that the bound indeed is very accurate for $M$ large enough ($\gtrapprox 180$), since large-size stopping sets (larger than $2$) vanish. 
The simulation curve is slightly unstable because counting $1000$ failures is not enough to keep the sample variance small as $P_\B^\SPBC$ decreases by $O(M^{2-d_v})$.
By counting $1000$ failures, the simulation results are within $\pm 6.2\%$ of the theoretical value with a $95\%$ confidence interval.

We repeated the experiment for the $\mathcal{C}_{\mathcal{P}}(3,6,3,L,M)$ Poisson ensemble. Using the technique of the second moment method, we lower-bounded the block error probability of this ensemble on the SPBC. 
The lower-bound and some related results are presented in Appendix~\ref{sec:poisson}.
The simulation results are given in Fig.~\ref{fig:SPBC_poisson_w_3} along with the lower bound calculated using \eqref{eq:p2poisson} and \eqref{eq:lower_bound_poisson} in Appendix~\ref{sec:poisson}. 
For comparison, we have also plotted the lower bound of Theorem~\ref{thm:thmlb} for a $\mathcal{C}_{\mathcal{R}}(3,6,3,L,M)$ random ensemble. 
As proved in Theorem~\ref{thm:SC_Poisson_Comp} of Appendix~\ref{sec:poisson}, we observe that the Poisson ensemble always has a larger probability of size-2 stopping sets than the corresponding random ensemble. 
Since the ensemble $\mathcal{C}_{\mathcal{P}}$ is hence shown to be suboptimal compared to $\mathcal{C}_{\mathcal{R}}$, we do not consider this ensemble in further discussions.
We discussed the simple non-expurgated ensembles here mainly to illustrate the proof technique based on the second moment method and to show that simple lower bounds on the performance exist. The removal of short cycles in the graph, as it is frequently done in practical code constructions, is considered in the next section.

\section{Effects of Expurgation on the SPBC}
\label{sec:expurgate}

\subsection{Minimal Stopping Set Size}
\label{sec:minimalSS}

As the performance of SC-LDPC codes over (burst and random) erasure channels is mainly dominated by size-$2$ stopping sets, it is well known that we can improve the burst erasure correction capability by expurgating the ensemble and thereby removing all small stopping sets. 
Observing that a size-$2$ stopping set, as shown in Fig.~\ref{fig:size2ss}, is built around 4-cycles, we can reduce the size of the minimal stopping sets by removing small cycles from the graph. 
For example, increasing the girth of the graph to $6$ leads to minimal stopping sets of size $s_{\rm min} = d_v+1$~\cite{Orlitsky-isit02}.

\begin{figure}[t!]
\centering
\includegraphics[width=0.85\columnwidth,keepaspectratio]{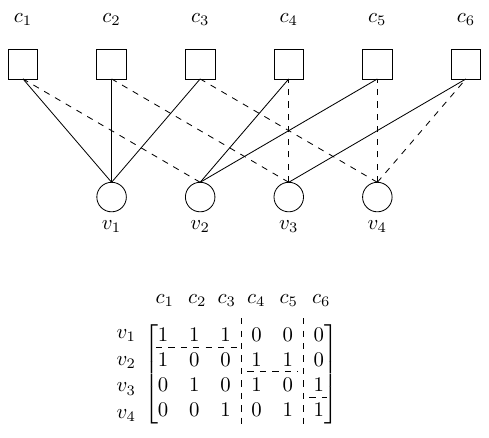}
\caption{A size-$4$ stopping set in a code from the expurgated $(3,6,w,L,M)$ random ensemble. CNs $\{c_1,c_2,c_3,c_4,c_5,c_6\}$ and VNs $\{v_1,v_2,v_3,v_4\}$ have been labeled for convenience. The solid edges indicate definite connections and the dashed edges complete one configuration to form a stopping set. Parallel edges are not allowed in the ensemble. The transposed bi-adjacency matrix is also shown with its pattern highlighted.}\label{fig:size4ss}
\end{figure}

In this section, we first show that all size $(d_v+1)$ stopping sets in an LDPC code share a common structure which we will use to find the probability of these stopping sets.
Consider a $(3,6)$ random LDPC code ensemble as an example. 
We immediately notice that size-$3$ stopping sets vanish for girth-6 graphs.
A size-$4$ stopping set is shown in Fig.~\ref{fig:size4ss} along with its (transposed) bi-adjacency matrix that describes the neighbors of each VN in the corresponding row. 
We can notice a pattern in this matrix which has been highlighted using dashed lines in the matrix: row $i \in \{ 1,2,\ldots,d_v \}$ has one subset of $(d_v-(i-1))$ columns with all $1$s and an identity matrix $I_{d_v-i+1}$ spanning these columns starting from row $i+1$. 
In the following lemma, we characterize this pattern and relate it to smallest size stopping sets (SSs) in graphs of girth 6.
\begin{lemma}\label{lem:girth6SS}
In a regular bipartite graph of girth $6$ (i.e., without parallel edges), the smallest stopping set is  of size $d_v+1$, involves exactly $\frac{1}{2}d_v(d_v+1)$ neighboring CNs and is characterized by a bi-adjacency matrix that contains as rows all $\binom{d_v+1}{2} = \frac{1}{2}d_v(d_v+1)$ permutations of the binary vector $\begin{pmatrix} 1 & 1 & 0 & \cdots & 0\end{pmatrix}$ of length $d_v+1$ containing two ``1''s and $d_v-1$ ``0''s.
\end{lemma}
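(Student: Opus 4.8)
The plan is to proceed in three stages: first establish the lower bound $d_v+1$ on the size of a stopping set, then argue that a stopping set of exactly this size must have \emph{every} pair of its VNs sharing a distinct common CN, and finally read off the bi-adjacency matrix from that combinatorial fact. For the lower bound, let $\mathcal{A}$ be a nonempty stopping set of size $s$. By the stopping-set definition (Def.~3.137 as cited), each CN in $\mathcal{N}(\mathcal{A})$ touches $\mathcal{A}$ at least twice. Counting edges from $\mathcal{A}$ in two ways: $\mathcal{A}$ emits exactly $s\,d_v$ edges, and these land on $|\mathcal{N}(\mathcal{A})|$ CNs each absorbing $\geq 2$ of them, so $|\mathcal{N}(\mathcal{A})| \leq \tfrac{1}{2}s\,d_v$. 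On the other hand, girth $6$ means no two VNs share two common CNs (a $4$-cycle would result) and no CN is doubly connected to one VN (parallel edges excluded); hence the $\binom{s}{2}$ pairs of VNs in $\mathcal{A}$ have \emph{distinct} common neighbors, and each CN touching $\mathcal{A}$ exactly $t$ times accounts for $\binom{t}{2}$ such pairs. A short convexity/counting argument — with $\sum$ over CNs of (their multiplicity into $\mathcal{A}$) equal to $s\,d_v$ and $\sum \binom{t_c}{2} \geq \binom{s}{2}$ — forces $s\,d_v \geq s(s-1)$, i.e. $s \geq d_v+1$, unless one exploits that equality needs all $t_c=2$.

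For the exact-size case $s = d_v+1$: now $s\,d_v = (d_v+1)d_v$ and the pair count is $\binom{d_v+1}{2} = \tfrac12 d_v(d_v+1)$, so the inequalities above are tight, which forces every CN in $\mathcal{N}(\mathcal{A})$ to be connected to exactly two VNs of $\mathcal{A}$ (multiplicity exactly $2$) and forces every one of the $\binom{d_v+1}{2}$ VN-pairs to have its own common CN. This is exactly $\tfrac12 d_v(d_v+1)$ CNs. Consequently each VN $v \in \mathcal{A}$ has degree $d_v$ into $\mathcal{N}(\mathcal{A})$ and shares a distinct CN with each of the other $d_v$ VNs, so its row in the (transposed) bi-adjacency matrix restricted to $\mathcal{N}(\mathcal{A})$ has a single $1$ in each of the $d_v$ columns corresponding to the pairs $\{v,v'\}$, $v' \neq v$, and $0$ elsewhere; aggregating over all VNs, the rows of the matrix are precisely the $\binom{d_v+1}{2}$ incidence vectors of $2$-subsets of the $(d_v+1)$-element VN set, i.e. all permutations of $(1\,1\,0\,\cdots\,0)$ of length $d_v+1$. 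One should also check the converse direction — that such a configuration actually \emph{is} a stopping set (immediate: each CN sees two erased VNs) — so the characterization is not vacuous.

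The main obstacle is the sharpness argument in the exact-size step: one must be careful that girth $6$ rules out not only $4$-cycles between two VNs but also that the $\binom{d_v+1}{2}$ common CNs are genuinely \emph{distinct as vertices} (two different VN-pairs could a priori share a CN only if that CN has multiplicity $\geq 3$ into $\mathcal{A}$, which the tightness excludes, but this chain of implications must be spelled out cleanly). A secondary subtlety is that the lemma implicitly assumes the ambient $d_c$ is large enough ($d_c \geq d_v$, automatically true here since $\tfrac12 d_v(d_v+1)$ CNs each need a free socket) for such a configuration to be embeddable; since the statement is about the \emph{structure} of any such stopping set rather than its existence, this does not affect correctness but is worth a remark. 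I would present the edge-counting identities as one display, the tightness deductions as prose, and the matrix description as the closing paragraph.
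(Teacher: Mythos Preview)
Your overall strategy is sound and in fact cleaner than the paper's, but the direction of your key inequality is reversed and the stated implication is false as written. Girth $6$ gives $\sum_c \binom{t_c}{2} \le \binom{s}{2}$ (each unordered VN-pair is the common pair of at most one CN), not $\ge$. Combined with $t_c \ge 2 \Rightarrow \binom{t_c}{2} \ge t_c/2$ and $\sum_c t_c = s\,d_v$, one obtains
\[
\tfrac{1}{2}\,s\,d_v \;\le\; \sum_c \binom{t_c}{2} \;\le\; \binom{s}{2},
\]
hence $d_v \le s-1$. Your line ``$s\,d_v \ge s(s-1)$, i.e.\ $s \ge d_v+1$'' has both the inequality and the deduction pointing the wrong way (as stated it would give $s \le d_v+1$). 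Once this is corrected, your tightness analysis for $s = d_v+1$ goes through exactly as you describe: equality throughout forces every $t_c = 2$ and yields precisely $\binom{d_v+1}{2}$ CNs, one per VN-pair, from which the bi-adjacency description follows.

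For comparison, the paper does not derive the bound $s \ge d_v+1$ itself but cites Orlitsky et al., and then establishes the exact CN count by an explicit pigeonhole contradiction: if some CN $c_1$ had three neighbors $v_1,v_2,v_3$ in the stopping set, the other $d_v-1$ CNs adjacent to $v_1$ would each need a second neighbor in $\mathcal{A}$ distinct from $v_1,v_2,v_3$ and from one another (else a $4$-cycle through $v_1$), forcing $|\mathcal{A}| \ge d_v+2$. Your double-counting argument is self-contained and more uniform---a single chain of inequalities delivers both the size lower bound and, via tightness, the structural rigidity---whereas the paper's route is more hands-on and leans on an external reference for the first step. Either approach is fine; just repair the sign.
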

\begin{proof}
From~\cite{Orlitsky-isit02}, we know that the smallest stopping set is of size $d_v+1$. 
Furthermore the stopping set connects to exactly $\frac{1}{2}(d_v+1)d_v$ check nodes. 
This can be seen as follows: the number of neighboring check nodes cannot be larger than $\frac{1}{2}(d_v+1)d_v$ as the total number of $(d_v+1)d_v$ edges must connect to every check node at least twice, which is only fulfilled if the number of check nodes is at most $\frac{1}{2}(d_v+1)d_v$. Conversely, if there are less than $\frac{1}{2}(d_v+1)d_v$ neighboring check nodes, by the pigeonhole principle, at least one check node connects to more than two neighboring variable nodes in the SS. 
Assume w.l.o.g. that there is a single CN, called $c_1$, with 3 neighbors $v_1$ $v_2$ and $v_3$ in the SS. 
The VN $v_1$ must connect to $d_v-1$ CNs $c_2,\ldots,c_{d_v}$ other than $c_1$ as parallel edges are avoided. These $d_v-1$ CNs in turn 
need to connect to $d_v-1$ \emph{distinct} VNs other than $v_1$, $v_2$ and $v_3$ to avoid a cycle of length 4. 
Hence, we need $3+d_v-1 = d_v+2$ VNs which contradicts the assumption that the SS consists of $d_v+1$ VNs.

From the previous considerations, all CNs must connect exactly twice to the variable nodes and $|\mathcal{N}(v_i)\cap\mathcal{N}(v_j)| < 2$, for all $i\neq j$ as cycles of length-4 are avoided. 
Thus, the bi-adjacency matrix describing the SS consists of rows, representing connections of CNs, of weight $2$ and no row can be used twice as this violates the previous assumption of avoiding cycles of length $4$. 
The fact that the SS consists of $\frac{1}{2}d_v(d_v+1)$ CNs and that there are only $\binom{d_v+1}{2} = \frac{1}{2}d_v(d_v+1)$ distinct binary vectors of length $d_v+1$ and weight 2 proves the lemma.
\end{proof}

\subsection{Estimation of Block Error Rate}
\label{sec:no4SPBCAnalysis}

We can use the same approach as in Lemma~\ref{lem:PU_indic} to calculate the probability of occurrence of the minimum size stopping sets, characterized by Lemma~\ref{lem:girth6SS}, within a spatial position of a code sampled uniformly from the ensemble.

\begin{lemma}\label{lem:girth6_PR}
For a code sampled uniformly from the expurgated $\mathcal{C_{R}}(d_v,d_c,w,L,M)$ SC-LDPC ensemble, constructed without allowing cycles of length $4$, the probability $P_{\mathcal{R},6}$ that $d_v+1$ variable nodes of the same spatial position $z$ form a stopping set is bounded by
\begin{align}
K_6 & \binom{wM\frac{d_v}{d_c}}{\frac{1}{2}d_v(d_v+1)} \binom{wM\frac{d_v}{d_c}}{d_v}^{-(d_v+1)} \nonumber \\ 
& \leq P_{\mathcal{R},6} \leq 
\tilde{K}_6 \binom{wM\frac{d_v}{d_c}-\frac{1}{2}d_v(d_v+1)}{\frac{1}{2}d_v(d_v+1)}^{-1} \label{eq:no4pSS1}\\
& \qquad \quad \ \leq K_6 \binom{wM\frac{d_v}{d_c}}{\frac{1}{2}d_v(d_v+1)}\binom{wM\frac{d_v}{d_c}-d_v^2}{d_v}^{-(d_v+1)} , \label{eq:no4pSS1_bad}
\end{align}
with 
\[
K_6 = \left(1-\frac{1}{d_c}\right)^{\frac{1}{2}d_v(d_v+1)}(\frac{1}{2}d_v(d_v+1))! 
\]
and
\[
\tilde{K}_6 = \left(1-\frac{1}{d_c}\right)^{\frac{1}{2}d_v(d_v+1)} \frac{(d_v!)^{d_v+1}}{(\frac{1}{2}d_v(d_v+1))!}
\]
being constants depending only on the ensemble parameters $d_v$ and $d_c$.
\end{lemma}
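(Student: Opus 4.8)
The plan is to reduce the claim to a socket-counting problem, exactly as in the proof of Lemma~\ref{lem:PU_indic}, and then use Lemma~\ref{lem:girth6SS} to pin down the ``good'' configurations. Fix $d_v+1$ variable nodes $v_1,\dots,v_{d_v+1}$ of the spatial position $z$ and write $n \triangleq wM\frac{d_v}{d_c}$ for the number of check nodes reachable from SP $z$, each carrying $d_c$ labelled sockets. Since the ensemble is expurgated (girth at least $6$), Lemma~\ref{lem:girth6SS} says that the \emph{only} way these nodes can form a stopping set is to realize the unique weight-$2$ pattern: a set of $m\triangleq\frac12 d_v(d_v+1)$ check nodes, in bijection with the $\binom{d_v+1}{2}=m$ pairs of variable nodes, with the check node of a pair $\{v_a,v_b\}$ joined to precisely $v_a$ and $v_b$. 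Hence $P_{\mathcal{R},6}=T_{ss}/T$, where $T_{ss}$ is the number of socket configurations of $v_1,\dots,v_{d_v+1}$ that realize this pattern and $T$ is the number of all admissible configurations of these nodes.

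The numerator is essentially exact. Realizing the pattern means: choosing the $m$ pattern check nodes ($\binom{n}{m}$ ways), bijecting them with the $m$ pairs ($m!$ ways), assigning each variable node's $d_v$ labelled edges to the $d_v$ pattern check nodes of the pairs through it ($d_v!$ per node, i.e.\ $(d_v!)^{d_v+1}$ in total), and, at each pattern check node, placing its two incident edges on two of the $d_c$ sockets ($d_c(d_c-1)=d_c^2(1-\tfrac1{d_c})$ per node). Collecting terms gives $T_{ss}=K_6\binom{n}{m}(d_v!)^{d_v+1}d_c^{2m}$ with $K_6=(1-\tfrac1{d_c})^{m}m!$, the factor $(1-\tfrac1{d_c})^m$ being the price of the second edge at each of the $m$ doubly-used check nodes.

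It then remains to sandwich $T$. For the lower bound on $P_{\mathcal{R},6}$ I would use the crude upper bound $T\le\bigl[d_v!\binom{n}{d_v}d_c^{d_v}\bigr]^{d_v+1}$ (each node just picks $d_v$ distinct check nodes and a socket in each, ignoring exhaustion and expurgation); dividing and cancelling the $d_c^{2m}=d_c^{d_v(d_v+1)}$ factors yields exactly $K_6\binom{n}{m}\binom{n}{d_v}^{-(d_v+1)}$. For the weaker upper bound~\eqref{eq:no4pSS1_bad} I would instead lower bound $T$ by revealing the $d_v+1$ nodes one at a time: when $v_k$ is revealed, the earlier $\le d_v$ nodes have touched at most $d_v^2$ check nodes, so $v_k$ can always connect to $d_v$ untouched check nodes (which also creates no $4$-cycle), giving $T\ge\bigl[d_v!\binom{n-d_v^2}{d_v}d_c^{d_v}\bigr]^{d_v+1}$ and hence~\eqref{eq:no4pSS1_bad}. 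The sharp middle bound in~\eqref{eq:no4pSS1} comes from a tighter version of this revelation argument: in any realization the first $d_v$ nodes already occupy all $m=\binom{d_v}{2}+d_v$ pattern check nodes, which forces $v_{d_v+1}$'s edges onto $d_v$ prescribed (already once-used) check nodes, and tracking the resulting falling factorials produces the $\binom{n-m}{m}^{-1}$ together with the constant $\tilde K_6=(1-\tfrac1{d_c})^{m}(d_v!)^{d_v+1}/m!$.

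The main obstacle is not the combinatorics of the pattern---Lemma~\ref{lem:girth6SS} already guarantees uniqueness, so no other stopping-set shape of size $d_v+1$ can contribute---but the global expurgation: conditioning on ``no $4$-cycle anywhere'' couples $T_{ss}$ and $T$, and neither equals the naive count above. For the upper bounds this is harmless, since $T_{ss}^{\mathrm{exp}}\le T_{ss}^{\mathrm{naive}}$ while the cycle-free lower bounds on $T$ still apply verbatim. For the lower bound one needs, in addition to $T^{\mathrm{exp}}\le T^{\mathrm{naive}}$, the comparison $T_{ss}^{\mathrm{exp}}/T^{\mathrm{exp}}\ge T_{ss}^{\mathrm{naive}}/T^{\mathrm{naive}}$; this should hold because forcing $v_1,\dots,v_{d_v+1}$ into a girth-$6$-respecting stopping set consumes their sockets efficiently and leaves \emph{fewer} free sockets on the pattern check nodes for any outside variable node to double-connect to, so it cannot raise the probability of a $4$-cycle elsewhere. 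Turning this monotonicity statement into a rigorous inequality (rather than a $1+o(1)$ estimate) is the delicate part; everything else is falling-factorial bookkeeping.
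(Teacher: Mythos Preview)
Your approach coincides with the paper's: write $P_{\mathcal{R},6}=T_{ss}/T$, compute $T_{ss}$ exactly from Lemma~\ref{lem:girth6SS}, and sandwich $T$ by relaxing or tightening the no-$4$-cycle constraint. Your formula for $T_{ss}$ and your crude upper bound $T\le\bigl[d_v!\binom{n}{d_v}d_c^{d_v}\bigr]^{d_v+1}$ match the paper (the paper suppresses the common factor $(d_v!)^{d_v+1}$, noting it cancels in the ratio), and your reveal-one-at-a-time lower bound $T\ge\bigl[d_v!\binom{n-d_v^2}{d_v}d_c^{d_v}\bigr]^{d_v+1}$ for~\eqref{eq:no4pSS1_bad} also matches.

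Where the proposal goes astray is the middle bound~\eqref{eq:no4pSS1}. Your explanation invokes ``pattern check nodes'' and says $v_{d_v+1}$'s edges are forced onto prescribed, already-once-used check nodes; but that is language appropriate to $T_{ss}$, not to $T$. The denominator $T$ counts \emph{all} girth-$6$-admissible local configurations of $v_1,\dots,v_{d_v+1}$, so in a generic term there are no ``pattern check nodes'' and nothing forces $v_{d_v+1}$ anywhere. In the paper the sharper bound comes from the \emph{same} revelation argument that gives you~\eqref{eq:no4pSS1_bad}, only without the final crudification: one keeps the exact product $\prod_{j=0}^{d_v}\binom{n-jd_v}{d_v}$ and observes that, since $d_v(d_v+1)=2m$,
\[
\prod_{j=0}^{d_v}\binom{n-jd_v}{d_v}
=\frac{n!}{(n-2m)!\,(d_v!)^{d_v+1}}
=\binom{n}{m}\binom{n-m}{m}\,\frac{(m!)^2}{(d_v!)^{d_v+1}}.
\]
Dividing $T_{ss}$ by $d_c^{2m}$ times this expression yields $\tilde K_6\binom{n-m}{m}^{-1}$ directly. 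So the middle bound requires no new combinatorial idea beyond~\eqref{eq:no4pSS1_bad}, just a factorial identity; the argument you sketch would not produce it.

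On the global-expurgation issue in your last paragraph: you are being more careful than the paper. The paper simply takes $T$ to be the number of \emph{local} configurations of the $d_v+1$ fixed variable nodes that contain no $4$-cycle (and no parallel edge), and $T_{ss}$ the number of those that are stopping sets; the coupling with the rest of the graph induced by conditioning on girth $\ge 6$ everywhere is not addressed. Thus the monotonicity comparison you outline is not needed for the paper's version of the proof, though your observation that a fully rigorous treatment of the global conditioning would require an additional argument is well taken.
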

\begin{proof}
Let $v_{i_1},\ldots, v_{i_{d_v+1}}$ denote $d_v+1$ randomly chosen VNs from an SP $z$ of a code sampled from the expurgated, girth-$6$, $\mathcal{C_{R}}(d_v,d_c,w,L,M)$ SC-LDPC ensemble. 
Recall that this code contains neither parallel edges nor cycles of length $4$. 
Let $T$ denote the total number of possible sub-graphs from $\{v_{i_1},\ldots, v_{i_{d_v+1}}\}$ and let $T_{ss}$ denote the number of sub-graphs in which these VNs form a stopping set.

We first compute $T_{ss}$. Consider the bi-adjacency matrix of a stopping set of size $d_v+1$ described in Lemma~\ref{lem:girth6SS}. and label the rows and column as shown in Fig.~\ref{fig:size4ss}. Any column-wise permutation of this matrix represents the bi-adjacency matrix of another stopping set with the same VNs, and vice versa. Thus, the number of such sub-graphs is equal to the number of distinct ``labeled'' permutations times the permutation of sockets of each CNs. In other words, 
\begin{equation*}
T_{ss} = \binom{wM\frac{d_v}{d_c}}{\frac{1}{2}d_v(d_v+1)}  (\frac{1}{2}d_v(d_v+1))!\left(d_c(d_c-1)\right)^{\frac{1}{2}d_v(d_v+1)},
\end{equation*} 
where the first term is the number of possibilities to select $\frac{1}{2}d_v(d_v+1)$ CNs. The second term $(\frac{1}{2}d_v(d_v+1))!$ counts the number of column-wise permutations of the bi-adjacency matrix. The last term is the number of ways that the edges are connected to two distinct sockets of each CN. 

Unfortunately, obtaining an exact expression for $T$ is tedious and does not lead to many new insights. We therefore give upper and lower bounds for $T$ which become tight for large $M$.

First, $T$, the total number of sub-graphs from the $d_v+1$ VNs such that 4-cycles and parallel edges are avoided is upper-bounded by the total number of sub-graphs including 4-cycles. 
Hence, for every VN, we (randomly) select $d_v$ CNs out of $wM\frac{d_v}{d_c}$ CNs. 
Now, assume that each edge of the VN is paired to one of the chosen $d_v$ CNs.
For every selected CN, there are at most $d_c$ possibilities to select a free socket to place the paired edge from the VN. 
Hence, we get the bound
\begin{align}
T \leq \left[\binom{wM\frac{d_v}{d_c}}{d_v}d_c^{d_v}\right]^{d_v+1}.\label{eq:ss4_lb}
\end{align}
We now proceed to the lower bound. Given a set of $(d_v+1)$ randomly chosen VNs $v_{i_1},\ldots, v_{i_{d_v+1}}$ from a spatial position, the VN $v_{i_1}$ has to connect its $d_v$ edges to CN sockets without introducing parallel edges or $4$-cycles.
So, the first VN has $wMd_v/d_c$ CNs to chose from. However, once all edges of $v_{i_1}$ are connected, the first edge of $v_{i_2}$ can either connect to one of the $d_v$ previously connected CNs or to one of the $(wM-d_c)d_v/d_c$ previously unconnected CNs. In the former case, none of the remaining edges can connect to another previously connected CN as a $4$-cycle would be formed. As the latter case is dominant, we lower bound $T$ by only considering that case where all edges connect to previously unconnected CNs. We thus get, following a similar line of reasoning as before,
\begin{align}
T \geq \left[d_c^{d_v}\right]^{d_v+1}\prod_{j=0}^{d_v}\binom{wM\frac{d_v}{d_c}-jd_v}{d_v}\label{eq:intermediate_ss4}
\end{align}
which we can further bound as
\begin{align}
T \geq \left[\binom{wM\frac{d_v}{d_c}-d_v^2}{d_v}d_c^{d_v}\right]^{d_v+1}.\label{eq:ss4_ub1}
\end{align}
We can also simplify~\eqref{eq:intermediate_ss4} using the fact that
\begin{align*}
\prod_{j=0}^{d_v} & \binom{wM\frac{d_v}{d_c}-jd_v}{d_v} \\
 &= \frac{(wM\frac{d_v}{d_c})!}{(wM\frac{d_v}{d_c}-d_v(d_v+1))!(d_v!)^{d_v+1}} \\
 &= \binom{wM\frac{d_v}{d_c}}{\frac{1}{2}d_v(d_v+1)}\binom{wM\frac{d_v}{d_c}-\frac{1}{2}d_v(d_v+1)}{\frac{1}{2}d_v(d_v+1)} \\
 & \hspace{4.5cm} \times\frac{([\frac{1}{2}d_v(d_v+1)]!)^2}{(d_v!)^{d_v+1}}
\end{align*}
which leads to
\begin{align}
T & \geq d_c^{d_v(d_v+1)}\binom{wM\frac{d_v}{d_c}}{\frac{1}{2}d_v(d_v+1)}\binom{wM\frac{d_v}{d_c}-\frac{1}{2}d_v(d_v+1)}{\frac{1}{2}d_v(d_v+1)} \nonumber \\
 & \hspace{4.75cm} \times\frac{([\frac{1}{2}d_v(d_v+1)]!)^2}{(d_v!)^{d_v+1}} . \label{eq:ss4_ub2}
\end{align}
Finally, simplification of $P_{\mathcal{R},6} = \frac{T_{ss}}{T}$ with the bounds~\eqref{eq:ss4_lb}, \eqref{eq:ss4_ub1}, and~\eqref{eq:ss4_ub2} leads to the result.  
Note that in the counting argument we skipped all $(d_v+1)!(d_v!)^{d_v+1}$ permutations of VNs and their sockets as they are included in both $T_{ss}$ and $T$ and henceforth cancel in the final expression.
\end{proof}

\begin{remark}
For a code sampled uniformly from the expurgated $\mathcal{C}_{\mathcal{R}}(d_v,d_c,w,L,M)$ ensemble with girth $6$,  we can approximate $T$ quite well by~\eqref{eq:ss4_ub2}, which is the dominant term in the expression of $T$ for large enough $M$ and get the approximation
\begin{align}
P_{\mathcal{R},6} \approx \tilde{K}_6 \binom{wM\frac{d_v}{d_c}-\frac{1}{2}d_v(d_v+1)}{\frac{1}{2}d_v(d_v+1)}^{-1}.\label{eq:pr6_approx}
\end{align}
\end{remark}

\begin{figure}[tbh!]
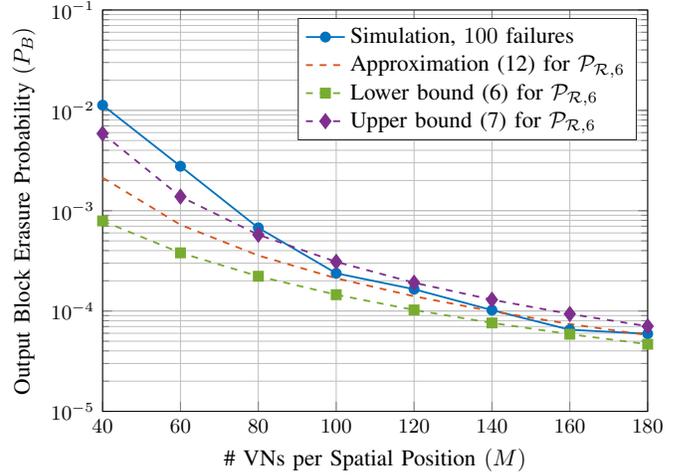

\myfigure{SPBC_no4cycles}
\caption{Monte Carlo simulations over the SPBC with the expurgated $\mathcal{C}_{\mathcal{R}}(3,6,3,L,M)$ ensemble along with the theoretical bounds and approximations.}\label{fig:SPBC_no4cycles}
\end{figure}

\begin{figure}[tbh!]
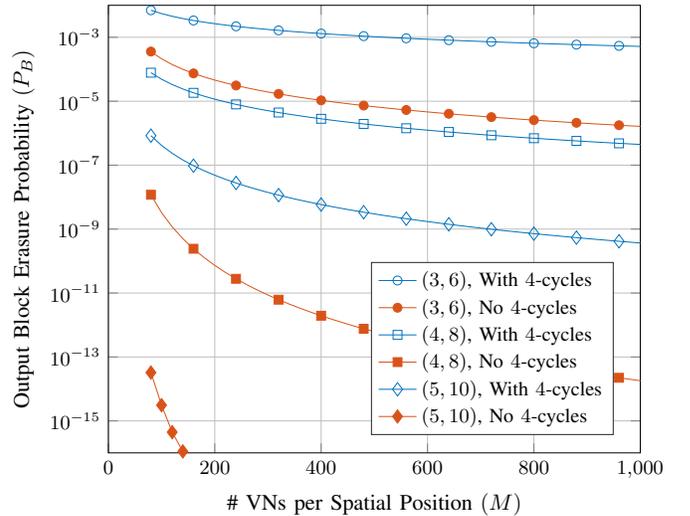

\myfigure{effect_of_no4cycles}
\caption{Approximations on $P_{B}^\SPBC$  for $\mathcal{C}_{\mathcal{R}}(d_v,d_c,w=d_v,L,M)$ ensembles with $d_v=3$, $d_v=4$, and $d_v=5$ in both the unexpurgated and expurgated (using~\eqref{eq:pr6_approx}) scenarios.}\label{fig:effect_of_no4cycles}
\end{figure}

We performed Monte Carlo simulations for the expurgated $\mathcal{C}_{\mathcal{R}}(3,6,3,L,M)$ ensemble and counted $100$ decoding failures on the SPBC. Figure~\ref{fig:SPBC_no4cycles} illustrates
the simulation averages for the block error probability, for varying $M$, their respective lower and upper bounds (using the bounds of $\mathcal{P}_{\mathcal{R},6}$ from Lemma~\ref{lem:girth6_PR}), and the approximation using~\eqref{eq:pr6_approx}. 
It is evident that both the bounds and the approximation become tight very quickly, which reassures that the decoder performance is indeed dominated by minimal stopping sets.

\begin{corollary}
The expected number of such stopping sets within a SP of the code is $\lambda_{SP} \triangleq \binom{M}{d_v+1} P_{\mathcal{R},6}$. 
Using similar arguments as in Section~\ref{sec:spbcAnalysis}, we have 
\begin{equation*}
\mathbb{N}_{d_v+1}^{SP} \sim \text{Poisson}(\lambda_{SP}) .
\end{equation*}
A tight approximation for the average block erasure probability on the SPBC, $P_{\B,6}^\SPBC$, is then obtained as
\begin{equation}
\label{eq:no4SPBCApprox}
P_{\B,6}^\SPBC \approx 1-e^{-\lambda_{\rm SP}}\approx \lambda_{\rm SP}.
\end{equation} 
\end{corollary}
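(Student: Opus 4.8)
The plan is to mirror exactly the Poisson-approximation argument already used for the non-expurgated case in Section~\ref{sec:spbcAnalysis} (culminating in~\eqref{eq:SPBCLowerBound}), but now with size-$(d_v+1)$ minimal stopping sets playing the role that size-$2$ stopping sets played there. First I would define the indicator random variables $U_{i_1\cdots i_{d_v+1}}$ that equal $1$ when the chosen $(d_v+1)$-tuple of VNs from the erased SP forms a minimal stopping set of the type characterized in Lemma~\ref{lem:girth6SS}, and $0$ otherwise. By Lemma~\ref{lem:girth6_PR}, each such indicator has expectation $P_{\mathcal{R},6}$ (independently of which tuple is chosen, by symmetry of the ensemble within a spatial position), so summing over all $\binom{M}{d_v+1}$ tuples gives $\mathbb{E}[\mathbb{N}_{d_v+1}^{\rm SP}] = \binom{M}{d_v+1}P_{\mathcal{R},6} = \lambda_{\rm SP}$ by linearity of expectation. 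This establishes the mean; the substance of the corollary is the distributional claim and the resulting approximation.

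For the Poisson convergence $\mathbb{N}_{d_v+1}^{\rm SP}\sim\text{Poisson}(\lambda_{\rm SP})$, I would invoke the same standard machinery referenced in the paper for the size-$2$ case, namely the method of moments / Chen--Stein argument as laid out in~\cite{Olmos-isit11} and~\cite[Appendix C]{Richardson-MCT08}. The key point to check is that the dependencies between the $U$-indicators are weak: two tuples that share no VNs are ``almost independent'' because the probability that both form stopping sets factorizes up to $O(1/M)$ corrections (each stopping set ties up only a fixed number $\frac{1}{2}d_v(d_v+1)$ of CNs out of $\Theta(M)$), while the number of tuples sharing at least one VN with a given tuple is $O(M^{d_v-1})$, which is a vanishing fraction of the $\Theta(M^{d_v+1})$ total. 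Hence the factorial moments $\mathbb{E}[\mathbb{N}_{d_v+1}^{\rm SP}(\mathbb{N}_{d_v+1}^{\rm SP}-1)\cdots]$ converge to powers of $\lambda_{\rm SP}$, which is precisely the condition for Poisson convergence. Since $P_{\mathcal{R},6}$ scales as $\Theta(M^{-\frac{1}{2}d_v(d_v+1)})$ by~\eqref{eq:pr6_approx} and $\binom{M}{d_v+1} = \Theta(M^{d_v+1})$, the mean $\lambda_{\rm SP}$ stays bounded (indeed tends to $0$ since $\frac{1}{2}d_v(d_v+1) > d_v+1$ for $d_v\geq 3$), so the regime is the correct one for the Poisson limit to be nondegenerate.

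Finally, the approximation~\eqref{eq:no4SPBCApprox} follows in two steps. Decoding fails essentially whenever at least one minimal stopping set lies entirely inside the erased SP, and — as argued for the $w=3$ expurgated simulations in Fig.~\ref{fig:SPBC_no4cycles} and justified by the fact that larger stopping sets are asymptotically negligible — this event is well-approximated by $\{\mathbb{N}_{d_v+1}^{\rm SP}\geq 1\}$. Thus $P_{\B,6}^\SPBC \approx \pr{\mathbb{N}_{d_v+1}^{\rm SP}\geq 1} = 1 - \pr{\mathbb{N}_{d_v+1}^{\rm SP}=0} \approx 1 - e^{-\lambda_{\rm SP}}$ using the Poisson law, and the final $\approx \lambda_{\rm SP}$ is just the first-order Taylor expansion valid because $\lambda_{\rm SP}\to 0$. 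The main obstacle is making the Poisson-convergence step fully rigorous: one must control not only pairwise overlaps but arbitrary-order joint moments, and — more delicately — argue that the non-minimal stopping sets (which exist in the expurgated ensemble but with lower-order probability) do not contribute to the decoding-failure event in the limit, so that the reduction ``$P_{\B,6}^\SPBC \approx \pr{\mathbb{N}_{d_v+1}^{\rm SP}\geq 1}$'' is legitimate. Since the statement is phrased as an approximation ($\approx$) and is corroborated by the Monte Carlo evidence already presented, a careful but not exhaustive treatment following the cited references suffices.
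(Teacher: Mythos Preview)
Your proposal is correct and follows exactly the approach the paper intends: the corollary is stated in the paper without a standalone proof, merely invoking ``similar arguments as in Section~\ref{sec:spbcAnalysis}'' and the references~\cite{Olmos-isit11},~\cite[Appendix C]{Richardson-MCT08}, and you have faithfully unpacked what those arguments are. One small slip: the number of $(d_v{+}1)$-tuples sharing at least one VN with a fixed tuple is $O(M^{d_v})$, not $O(M^{d_v-1})$, but this is still a vanishing fraction of the $\Theta(M^{d_v+1})$ total, so the weak-dependence conclusion and the rest of the argument are unaffected.
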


We can see from the bounds~\eqref{eq:no4pSS1} and~\eqref{eq:no4pSS1_bad} that the average block erasure probability in this case scales as $O(M^{-d_v^2/2 + d_v/2 + 1})$, or equivalently $O(M^{-\frac{1}{2}(d_v-2)(d_v+1)})$, which is in contrast to the non-expurgated case where the scaling behavior was $O(M^{-(d_v-2)})$. Hence, unsurprisingly, expurgating always improves the average block erasure probability in the SPBC case.
These scaling behaviours are illustrated in Fig.~\ref{fig:effect_of_no4cycles} for three $\mathcal{C}_{\mathcal{R}}(d_v,d_c,w=d_v,L,M)$ ensembles with $d_v=3$, $d_v=4$, and $d_v=5$. All three ensembles have the asymptotic design rate of $\frac{1}{2}$. 
We observe that:
\begin{itemize}
\item For the unexpurgated case, linearly increasing $d_v$ reduces the block error probability
by multiples of $1/M$.
\item When the ensemble is expurgated so that $girth = 6$,  unit increase in $d_v$ improves the performance by a factor of about $M^{-d_v}$.
\item  On the SPBC, block erasure rates less than $10^{-15}$ can be obtained by codes from an expurgated $\mathcal{C}_{\mathcal{R}}(5,10,5,L,M)$ ensemble, with $M$ as low as $120$. This encourages the use of short SC-LDPC codes in applications where single SPs may be erased (e.g., distributed storage of short files) and high reliability is required.
\end{itemize}

\section{Error Analysis on More General Burst Channel Models}
\label{sec:general_channels}
In this section, we study some other transmission scenarios with less restrictions on the channel model.
First, we consider the random burst channel (RBC) model 
in which the burst has a variable length and not necessarily erases complete SPs. 
Then, we consider the scenario where there is random background noise in addition to the burst erasure. 
For simplicity, we assume a binary erasure channel for the background noise and the SPBC model for the burst.
We present an extension of the analysis in previous sections to estimate the decoding failure probability in these scenarios. 
Finally, we consider the scenario of the block erasure channel (BLEC) in which each SP may be erased randomly with some probability. 
We show how such a channel model can be analyzed using the simple single-burst channel models, e.g. SPBC and RBC.

Although the expurgated ensembles are of more interest,
we present the error analysis for the $\mathcal{C}_{\mathcal{R}}(d_v,d_c,w,L,M)$ SC-LDPC ensemble with size-2 stopping sets because of two reasons. 
First, expurgation leads to very small decoding failure probabilities and it will be more challenging to illustrate the tightness of the error estimations in simulation. 
Second, a similar analysis can be conducted for the expurgated ensemble as we have shown for the SPBC in Section~\ref{sec:expurgate}, but it will be more involved and less insightful.

\subsection{Random Burst Channel (RBC)}
\label{sec:RBC}

We have seen in the previous section that the performance of SC-LDPC ensembles is asymptotically well characterized by stopping sets of size $2$.
We will now generalize this analysis to the case where there is a single burst of length $> M$ and with a random starting location within the codeword. 
Such a burst can span over multiple SPs because of its random starting position. 
In this case, stopping sets formed across coupled SPs will also contribute to decoding failures.
We will restrict ourselves to the case of a single burst. 
This case can then serve as a building block for the more complex case of having multiple  random bursts. 
For instance, the results can be easily generalized to the case with multiple bursts spaced apart by at least $w$ non-affected SPs.
We start by first giving a simple asymptotic condition on $w$ for burst erasure recovery. 
Although it is an asymptotically sufficient condition, we have observed in numerical experiments that it reflects the actual behavior quite well.
\begin{proposition}
Consider the $\mathcal{C}_{\mathcal{R}}(d_v,d_c,w,L,M)$ SC-LDPC ensemble and transmission over a channel where a random selection of $b$ consecutive VNs are erased and all other VNs are received without erasure. Let $\beta = \frac{b}{M}$ denote the normalized burst length. 
Furthermore, let $w \geq 1 + \lceil\beta\rceil$. 
Asymptotically (in the limit of $M$), the erased VNs can be recovered when $w\geq \left\lceil (1+\lceil\beta\rceil)/\eBP(d_v,d_c)\right\rceil$, where $\eBP(d_v,d_c)$ is the BP threshold of the underlying ($d_v,d_c$) uncoupled LDPC ensemble.
\end{proposition}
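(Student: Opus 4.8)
The plan is to mimic the density-evolution argument in the proof of Proposition~\ref{prop:boundebp}, but now accounting for the fact that the burst of normalized length $\beta$ can touch up to $\lceil\beta\rceil+1$ consecutive spatial positions. First I would reduce to the worst case: among all possible starting offsets $s$, the burst occupies some set of consecutive SPs, and the number of \emph{fully} erased SPs is maximized (equal to $\lceil\beta\rceil$, with possibly one more partially erased SP on each side) when the burst is aligned so that it straddles $\lceil\beta\rceil+1$ positions. Call this set of affected positions $\mathcal{Z}=\{z_0,z_0+1,\ldots,z_0+k\}$ with $k=\lceil\beta\rceil$ (the hypothesis $w\ge 1+\lceil\beta\rceil$ guarantees this block fits within one coupling window). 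As in Proposition~\ref{prop:boundebp}, the DE fixed point has $x_i^{(\ell)}=0$ for all $i\notin\mathcal{Z}$, so the DE recursion decouples and only involves the $x_i$ for $i\in\mathcal{Z}$.

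Next I would bound the coupled DE recursion on $\mathcal{Z}$ by a scalar recursion. Since there are at most $\lceil\beta\rceil+1$ affected positions and each $x_i\le 1$, replacing every $x_{i+j-k}^{(\ell-1)}$ appearing in the inner sum by $x^{(\ell-1)}:=\max_{i\in\mathcal{Z}} x_i^{(\ell-1)}$ gives a monotone upper-bounding recursion; the outer and inner averaging over the window each contribute at most a fraction $(\lceil\beta\rceil+1)/w$ of nonzero terms, so the same change of variable $\tilde x^{(\ell)}=\tfrac{\lceil\beta\rceil+1}{w}\,x^{(\ell)}$ used in Proposition~\ref{prop:boundebp} turns the bound into
\[
\tilde x^{(\ell)} \le \frac{\lceil\beta\rceil+1}{w}\Bigl(1-\bigl(1-\tilde x^{(\ell-1)}\bigr)^{d_c-1}\Bigr)^{d_v-1},
\]
which is exactly the uncoupled $(d_v,d_c)$ DE recursion scaled by $\tfrac{\lceil\beta\rceil+1}{w}$ in the channel-parameter slot. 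Hence this upper-bounding recursion converges to $0$ whenever $\tfrac{\lceil\beta\rceil+1}{w}\le\eBP(d_v,d_c)$, i.e. $w\ge\lceil(1+\lceil\beta\rceil)/\eBP(d_v,d_c)\rceil$, and by the standard sandwich/monotonicity argument the true DE then also goes to $0$, giving asymptotic recovery.

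The main obstacle — and the place where I would be most careful — is justifying that the coupled-to-scalar reduction is actually valid, i.e. that bounding the boundary/partially-erased positions the same way as fully-erased ones does not lose the claimed threshold. One has to check that the inner average $\tfrac1w\sum_{k} x_{i+j-k}^{(\ell-1)}$ really does have at most $\lceil\beta\rceil+1$ nonzero summands for every relevant $i,j$; this is where the assumption $w\ge 1+\lceil\beta\rceil$ enters, ensuring $\mathcal{Z}$ is contained in a single length-$w$ block so no two disjoint copies of the burst's footprint can fall in one averaging window. A secondary subtlety is that the statement only claims a \emph{sufficient} condition, so I do not need a matching converse; it suffices to exhibit one monotone upper bound that decays, and the paper already signals (``we have observed in numerical experiments that it reflects the actual behavior quite well'') that no tightness claim is being made. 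I would close by remarking, as in Remark~\ref{rem:mult_SPs}, that the same argument handles several bursts provided consecutive footprints are separated by at least $w-1$ clean SPs.
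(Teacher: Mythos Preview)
Your approach is essentially the same as the paper's: reduce to the worst case of $c=1+\lceil\beta\rceil$ fully erased consecutive SPs, observe that $x_i^{(\ell)}=0$ outside this block, and collapse the coupled DE to the scalar uncoupled recursion with effective channel parameter $c/w$. The only cosmetic difference is that the paper tracks $y^{(\ell)}=\tfrac{1}{w}\sum_{k=0}^{c-1}x_{z+k}^{(\ell)}$ rather than your $\tfrac{c}{w}\max_k x_{z+k}^{(\ell)}$; note also that your phrase ``the outer and inner averaging each contribute at most a fraction $(\lceil\beta\rceil+1)/w$'' is slightly misleading---only the inner average produces that factor, and the prefactor $c/w$ in the final recursion comes from rescaling both sides, not from the outer average (which becomes trivial once the inner sum is bounded uniformly).
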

\begin{proof}
Consider the burst of normalized length $\beta=\frac{b}{M}$. This burst spans at most $1+\left\lceil\frac{b-1}{M}\right\rceil$ SPs, which is upper bounded by $c=1+\lceil\beta\rceil$. 
If a burst erasing $c$ consecutive SPs is recoverable, then a random burst of length $b$ is recoverable as well. 
Let $z$ denote the first SP affected by the random burst.
Using similar arguments as in the proof of Proposition~\ref{prop:boundebp}, we can write, $\forall \ i \in \{0,\ldots,c-1\}$,
\begin{align*}
x_{z+i}^{(\ell)} &= \varepsilon_{z+i}\left(1 - \frac{1}{w}\sum_{j=0}^{w-1}\left(1 - \frac{1}{w}\sum_{k=0}^{w-1}x_{z+i+j-k}^{(\ell-1)}\right)^{d_c-1}\right)^{d_v-1}\\
&\overset{(a)}{\leq}\varepsilon_{z+i} \left(1 - \left(1 - \frac{1}{w}\sum_{k=0}^{c-1}x_{z+k}^{(\ell-1)}\right)^{d_c-1}\right)^{d_v-1}\\
&\overset{(b)}{\leq} \left(1 - \left(1 - \frac{1}{w}\sum_{k=0}^{c-1}x_{z+k}^{(\ell-1)}\right)^{d_c-1}\right)^{d_v-1},
\end{align*}
where $\varepsilon_{z+i}$ is the fraction of VNs erased in SPs $z+i$. 
We have $(a)$ as $x_z^{(l)}$ is nonzero only for $\{z,\ldots,z+c-1\}$, and thus, for any $i$ and $j$,
\begin{equation*}
\sum_{k=0}^{w-1}x_{z+i+j-k}^{(\ell-1)}\leq \sum_{k=0}^{c-1}x_{z+k}^{(\ell-1)}.
\end{equation*}
We have $(b)$ as $\varepsilon_{z+i}\leq 1$. The bound on $x_{z+i}^{(\ell)}$ is independent of $i$ and thus,
\begin{equation*}
\sum_{i=0}^{c-1}x_{z+i}^{(\ell)} \leq c\left(1 - \left(1 - \frac{1}{w}\sum_{k=0}^{c-1}x_{z+k}^{(\ell-1)}\right)^{d_c-1}\right)^{d_v-1} .
\end{equation*}
The substitution $y_z^{(\ell)} = \frac{1}{w}\sum_{k=0}^{c-1}x_{z+k}^{(\ell-1)}$ yields
\[
y_z^{(\ell)} \leq \frac{c}{w}\left(1 - \left(1 - y_z^{(\ell-1)}\right)^{d_c-1}\right)^{d_v-1} .
\]
If $c/w < \eBP(d_v,d_c)$, then $\lim_{\ell\to\infty} y_z^{(\ell)}=0$ and thus, 
$\lim_{\ell\to\infty} x_z^{(\ell)}=0$ for any $z$. It implies that the decoding failure (output bit error probability) converges to zero as well in the asymptotic limit of $M$ and $\ell$. 
\end{proof}
In the remainder of this section, we assume that asymptotic recovery is possible and we are concerned with decoding failures that occur when $M$ is finite.
Let $S\in[M]$ denote the random starting VN of the burst at some SP $z_0$. We define the normalized quantities $s=\frac{S}{M}$ and $\beta=\frac{b}{M}$. 
Recall from Section \ref{sec:betamax} that $\beta_{\rm max}$ is the largest $\beta$ asymptotically recoverable.
In the case of the RBC, besides stopping sets within a single spatial position, size-$2$ stopping sets formed across coupled SPs will also cause decoding failures. 
We first characterize the probability of such stopping sets in the following lemma. 

\begin{lemma}\label{lem:prob_rbc}
Consider the $C_{\mathcal{R}}(d_v,d_c,w,L,M)$ ensemble. Let $v_i$ denote a randomly chosen VN in spatial position 
$z\leq L$, and  $v_j$ denote a random VN in spatial position $z+k$, for a non-negative integer $k$ with $z+k\leq L$.
The probability that these two random VNs form a stopping set of size 2 is independent of $z$ and
amounts to
\begin{align}\label{eq:prob_rbc}
q_k=P_{\mathcal{R}}\left(1-\frac{k}{w}\right)^{d_v}, \quad k\in\{0,1,\ldots,w-1\},
\end{align}
where $P_{\mathcal{R}}$ is given by \eqref{eq:lem1_pdef}. For $k\geq w$, $q_k=0$.
\end{lemma}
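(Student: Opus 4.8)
The plan is to mimic the combinatorial counting argument of Lemma~\ref{lem:PU_indic}, but now accounting for the fact that $v_i$ lives in SP $z$ and $v_j$ in SP $z+k$, so their sets of reachable check nodes only partially overlap. First I would identify the CN pools: the $d_v$ edges of $v_i$ must go to CNs in SPs $z,\dots,z+w-1$, while the $d_v$ edges of $v_j$ must go to CNs in SPs $z+k,\dots,z+k+w-1$. The intersection of these two windows consists of the CNs in SPs $z+k,\dots,z+w-1$, i.e.\ $w-k$ common spatial positions out of the $w$ positions in each window (for $0\le k\le w-1$; for $k\ge w$ the windows are disjoint and no common CN can be shared, forcing $q_k=0$ since a size-2 stopping set requires both VNs to be adjacent to a common set of CNs, each at least twice).

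Next I would count as in Lemma~\ref{lem:PU_indic}. Condition on the placement of the $d_v$ edges of $v_i$ onto $d_v$ distinct CNs (the denominator normalization and the permutation factor $d_v!$ will cancel exactly as before, and the socket counts $(d_c-1)^{d_v}$ vs.\ $d_c^{d_v-\ell}(d_c-1)^\ell$ enter identically). The key new ingredient: for $\{v_i,v_j\}$ to form a size-2 stopping set, $v_j$ must connect to exactly the same $d_v$ CNs as $v_i$; this is only possible if \emph{all} $d_v$ neighbors of $v_i$ happen to lie in the overlap region, i.e.\ in the $w-k$ shared SPs. Since $v_i$'s edges are uniform over its length-$w$ window of CN sockets, a natural way to express this is that each edge independently (up to the no-parallel-edge correction) lands in the shared sub-window with ``probability'' $\tfrac{w-k}{w}$, contributing the factor $\bigl(1-\tfrac{k}{w}\bigr)^{d_v}$; conditioned on that event, the remaining count of completing $v_j$'s edges to those same CNs is exactly the count that produced $P_{\mathcal{R}}$ in Lemma~\ref{lem:PU_indic}. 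Hence $q_k=P_{\mathcal{R}}\bigl(1-\tfrac{k}{w}\bigr)^{d_v}$. Independence of $z$ is immediate because the bulk CN counts $wM\tfrac{d_v}{d_c}$ in each window are the same for any interior $z$ (and the statement restricts to $z+k\le L$, so neither window is truncated by the right boundary; the left-boundary SPs $1,\dots,w-1$ with reduced-degree CNs are handled by the same convention used throughout, or simply excluded as the lemma implicitly assumes an interior placement).

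I expect the main obstacle to be making the ``probability $\tfrac{w-k}{w}$ per edge'' heuristic rigorous at finite $M$ without parallel edges: one must verify that when the exact socket-counting ratios from Lemma~\ref{lem:PU_indic} are restricted to the sub-window of $(w-k)M\tfrac{d_v}{d_c}$ CNs, the ratio $T_{ss}/T$ factors cleanly as $P_{\mathcal{R}}$ times $\bigl(\tfrac{w-k}{w}\bigr)^{d_v}$ with \emph{no} lower-order corrections. The cleanest route is to write $T_{ss}=d_v!\,(d_c-1)^{d_v}\cdot(\text{number of ways to place }v_i\text{'s }d_v\text{ edges into distinct CNs all within the shared }(w-k)\text{ SPs})$ and $T=d_v!\sum_\ell \binom{d_v}{\ell}\binom{wM\frac{d_v}{d_c}-d_v}{d_v-\ell}(d_c-1)^\ell d_c^{d_v-\ell}$ exactly as before; then observe that the ``placing $v_i$'s edges'' factor equals $\binom{(w-k)M\frac{d_v}{d_c}}{d_v}d_c^{d_v}\big/\binom{wM\frac{d_v}{d_c}}{d_v}d_c^{d_v}$ times the total, and this ratio of binomials is \emph{not} exactly $\bigl(\tfrac{w-k}{w}\bigr)^{d_v}$ at finite $M$ --- it is only asymptotically so. Consequently I anticipate that the lemma's equality~\eqref{eq:prob_rbc} is meant in the same large-$M$ (dominant-term) sense as the approximation~\eqref{eq:p2approx} used for $P_{\mathcal{R}}$ elsewhere in the paper, and the proof should state this, deriving $q_k \approx P_{\mathcal{R}}\bigl(1-\tfrac{k}{w}\bigr)^{d_v}$ from the ratio $\binom{(w-k)M\frac{d_v}{d_c}}{d_v}\big/\binom{wM\frac{d_v}{d_c}}{d_v}\to\bigl(\tfrac{w-k}{w}\bigr)^{d_v}$ and quoting the $k\ge w$ case as an exact structural impossibility.
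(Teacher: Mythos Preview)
Your approach is the same as the paper's: identify that a size-$2$ stopping set requires $\mathcal{N}(v_i)=\mathcal{N}(v_j)$, which forces all of $v_i$'s check-node neighbours to lie in the overlap window of $w-k$ spatial positions, and then observe that conditioned on this event the probability that $v_j$ hits exactly the same $d_v$ CNs is $P_{\mathcal{R}}$ by the argument of Lemma~\ref{lem:PU_indic}. The paper, however, does not redo any socket counting; its proof is three lines. It simply asserts that ``the edges of the variable nodes uniformly connect to $w$ neighboring SPs'' and hence the probability that all $d_v$ edges of $v_i$ land in the shared $w-k$ positions is exactly $\bigl(\tfrac{w-k}{w}\bigr)^{d_v}$, then multiplies by $P_{\mathcal{R}}$. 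So you are over-engineering the argument by going back to the full $T_{ss}/T$ count.

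Your scrupulousness about whether $\bigl(\tfrac{w-k}{w}\bigr)^{d_v}$ is exact or only asymptotic is well placed, and in fact sharper than the paper. Under the socket-based description in Section~\ref{sec:randomSCLDPC} with the no-parallel-edge constraint, the probability that all $d_v$ edges of $v_i$ fall in the $(w-k)$-SP overlap is $\prod_{i=0}^{d_v-1}\frac{(w-k)Md_v-id_c}{wMd_v-id_c}$, which equals $\bigl(\tfrac{w-k}{w}\bigr)^{d_v}$ only in the limit. The paper nonetheless states~\eqref{eq:prob_rbc} as an equality without comment; this is consistent with the standard Kudekar--Richardson--Urbanke convention in which each edge independently selects a target SP uniformly from $\{0,\dots,w-1\}$ (making the factor exact), and in any case the discrepancy is absorbed into the $\gtrapprox$ statements of Theorem~\ref{thm:thmapproxRBC} where $q_k$ is used. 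So your anticipated ``obstacle'' is real but the paper simply sidesteps it.
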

\begin{proof}
From the definition of the ensemble, we know that $\mathcal{N}(v_i)$ is part of SPs $\{z,z+1,\dots,z+w-1\}$ and 
$\mathcal{N}(v_j)$ is part of SPs $\{z+k,z+k+1,\dots,z+k+w-1\}$.
A size-2 stopping set is formed if and only if $\mathcal{N}(v_i)=\mathcal{N}(v_j)$. 
For $k\geq w$, this condition cannot be fulfilled and thus, $q_k=0$.
For $k<w$, all check nodes of $\mathcal{N}(v_i)$ must be from a subset $\{z+k,\dots,z+w-1\}$. As the edges of the variable nodes uniformly connect to $w$ neighboring SPs, the probability of such a selection for $v_i$ is $(\frac{w-k}{w})^{d_v}$. 
Now, the probability that $v_j$ connects exactly to the same CNs as $v_i$, 
i.e., $\mathcal{N}(v_i)=\mathcal{N}(v_j)$, 
is equal to $P_{\mathcal{R}}$ (by the same argument as in Lemma~\ref{lem:PU_indic}). 
Hence, $q_k=\left(\frac{w-k}{w}\right)^{d_v}P_{\mathcal{R}}$. We immediately see that $q_0 > q_1 > \cdots > q_{w-1}$.
\end{proof}

The average number of size-2 stopping sets between VNs lying in SPs $z$ and $z+k$, where $k \in [w-1]$, is given by
\begin{equation}
\label{eq:lambdas}
\lambda_0 \triangleq \binom{M}{2} q_0 \qquad\text{and}\qquad \hspace{2.5mm} \lambda_k \triangleq M^2 q_k ,\quad k \in [w-1].
\end{equation}
Again, we see that $\lambda_k \sim O(M^{2-d_v})$.
To verify, consider the $C_{\mathcal{R}}(3,6,3,100,M=64)$ SC-LDPC ensemble. By averaging over all the SPs of $1000$ random code instances of the ensemble, the empirical average number of size-$2$ stopping sets is obtained as $(L\lambda_0,(L-1)\lambda_1,(L-2)\lambda_2)\approx(0.876,0.488,0.060)$, which is reasonably close to $(0.829,0.494,0.061)$ computed using~\eqref{eq:lambdas}, even for small $M=64$.

We now estimate the average decoding failure probability, $P_\B^{\RBC}$ when there is a burst of length $b\ll \beta_{\rm max}M$ with a random starting bit $M(z_0-1)+S$, $1\leq S\leq M$. For a given $(S,z_0,b)$, the number of erased VNs in SP $z$ is equal to
\begin{align}
\label{eq:m_z}
m_z &= \left\{\begin{array}{ll}
0 & z < z_0 \\
\min\{b,M-S+1\} & z = z_0 \\
\max\{0,\min\{M,b+S-1-(z-z_0)M\}\} & z>z_0.\end{array}\right.
\end{align}
Our error estimation is again based on the average number of size-2 stopping sets formed among erased VNs.
Let $\mathbb{N}_2(S,z_0,b)$ denote the number of size-2 stopping sets formed by VNs erased by the burst. BP decoding fails if $\mathbb{N}_2(S,z_0,b)\geq 1$. Thus,
\begin{equation}
P_\B^{\RBC}\geq \mathbb{P}\{\mathbb{N}_2(S,z_0,b)\geq 1\}\overset{(i)}{\approx} \mathbb{E}[\mathbb{N}_2(S,z_0,b)].
\label{eq:lowerbound}
\end{equation} 
There are two approaches to justify $(i)$. The first approach is to 
use arguments 
similar to Theorem~\ref{thm:thmlb} to lower-bound  $\mathbb{P}\{\mathbb{N}_2(S,z_0,b)\geq 1\}$ in terms of the average number of size-2 stopping sets and
a much smaller correction term. However, the derivation will be more 
involved than Theorem~\ref{thm:thmlb} and will not lead to new insights, which is why we omit it here.
An alternative is to use standard arguments~\cite[App. C]{Richardson-MCT08} to approximate the distribution of size-2 stopping sets by a
joint Poisson distribution. The decoding error then corresponds approximately to the average number of stopping sets.

\begin{theorem}\label{thm:thmapproxRBC}
Consider the $\mathcal{C}_{\mathcal{R}}(d_v,d_c,w,L,M)$ SC-LDPC ensemble affected by a burst of length $b=\beta M$ with a random starting bit  $M(z_0-1) + S$, where $1 \leq S \leq M$ and $ \beta\ll \beta_{\rm max}$.
The number of erased VNs in SP $z$ is given by $m_z$ in~\eqref{eq:m_z}.
Then the expected number of size-$2$ stopping sets formed by VNs erased by the burst is given by
\begin{align}
 & \mathbb{E}[\mathbb{N}_2(S,z_0,b)] \nonumber \\
 & \gtrapprox \frac{L-\lceil \beta\rceil}{(L-\beta)M+1}\sum_{S=1}^{M} \sum_{z=1}^{\lceil \beta\rceil+1}\left(\binom{m_z}{2}q_0+\sum_{k=1}^{w-1} m_z m_{z+k}q_k\right)\nonumber\\
&\approx
\frac{1}{M}\sum_{S=1}^{M} \sum_{z=1}^{\lceil \beta\rceil+1}\left(\binom{m_z}{2}q_0+\sum_{k=1}^{w-1} m_z m_{z+k}q_k\right).
\label{eq:errorfloor}
\end{align}
\begin{proof}
To find an expression for $\mathbb{E}[\mathbb{N}_2(S,z_0,b)]$, we first note that the starting position $M(z_0-1)+S$ of the burst is chosen uniformly among the bits in $[LM-b+1]$. Then,
\begin{align*}
& \mathbb{E}[\mathbb{N}_2(S,z_0,b)] \\
&= \frac{1}{LM-b+1}\sum_{k=1}^{\mathclap{LM-b+1}} 
\mathbb{E}[\mathbb{N}_2(S,z_0,b)\mid M(z_0-1)+S=k]\nonumber\\
&\overset{(i)}{\gtrapprox} \frac{M}{LM-b+1}\sum_{z_0=1}^{L-\lceil \beta\rceil}\frac{1}{M}\sum_{S=1}^{M} \mathbb{E}[\mathbb{N}_2(S,z_0,b)]\nonumber \\
&\overset{(ii)}{=}\frac{L-\lceil \beta\rceil}{(L-\beta)M+1} 
\sum_{S=1}^{M} \mathbb{E}[\mathbb{N}_2(S,z_0=1,b)]\nonumber\\
&\overset{(iii)}{=}\frac{L-\lceil \beta\rceil}{(L-\beta)M+1}\sum_{S=1}^{M} \sum_{z=1}^{\lceil \beta\rceil+1}\left(\binom{m_z}{2}q_0+\sum_{k=1}^{w-1} m_z m_{z+k}q_k\right),
\end{align*}
where $(i)$ is because we neglect a small positive contribution $(O(\frac{1}{L}))$ of starting positions larger than $(L-\lceil \beta\rceil)M$ for non-integer $\beta$. Recall that $\beta=b/M$. We have $(ii)$ as $\frac{1}{M}\sum_{S=1}^{M} \mathbb{E}[\mathbb{N}_2(S,z_0,b)]$ is identical for different $z_0$. Let us justify $(iii)$ using Lemma~\ref{lem:prob_rbc}: for a given starting bit $1\leq S\leq M$, 
$\binom{m_z}{2}q_0$ is the average number of size-2 stopping sets formed between erased VNs in SP $z$ and
$m_zm_{z+k}q_k$ is the average number of size-2 stopping sets formed between erased VNs in all pairs of SP $z$ and $z+k$, $k\in[w-1]$. We have $(iii)$ by summing the average number of size-2 stopping sets formed between erased VNs in all pairs of SP $z$ and $z+k$. 
\end{proof}
\end{theorem}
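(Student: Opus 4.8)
The plan is to treat $\mathbb{E}[\mathbb{N}_2(S,z_0,b)]$ as a plain expectation and evaluate it by averaging over the random starting location of the burst. Since the leading erased bit $M(z_0-1)+S$ is uniform on $[LM-b+1]$, I would first write
\[
\mathbb{E}[\mathbb{N}_2(S,z_0,b)] \;=\; \frac{1}{LM-b+1}\sum_{t=1}^{LM-b+1}\mathbb{E}\bigl[\mathbb{N}_2 \mid M(z_0-1)+S=t\bigr],
\]
and then reindex the sum by $(z_0,S)$ via $t=M(z_0-1)+S$ with $1\le S\le M$. For a clean lower bound I would keep only those starting positions with $z_0\le L-\lceil\beta\rceil$, for which the whole range $1\le S\le M$ is available and the burst stays in the ``bulk''. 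Every discarded term is nonnegative, so this can only decrease the estimate, and the discarded positions form an $O(1/L)$ fraction of all $LM-b+1$ starting positions, which is exactly what the symbol $\gtrapprox$ in the statement records. Using $LM-b+1=(L-\beta)M+1$, this yields
\[
\mathbb{E}[\mathbb{N}_2(S,z_0,b)] \;\gtrapprox\; \frac{1}{(L-\beta)M+1}\sum_{z_0=1}^{L-\lceil\beta\rceil}\ \sum_{S=1}^{M}\mathbb{E}[\mathbb{N}_2(S,z_0,b)].
\]

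The second step is a translation-invariance argument: because the pair-probabilities $q_k$ of Lemma~\ref{lem:prob_rbc} do not depend on the spatial position (exactly so in the bulk, away from the terminated SPs), the inner sum $\sum_{S=1}^{M}\mathbb{E}[\mathbb{N}_2(S,z_0,b)]$ is the same for every bulk $z_0$. I would therefore replace the $z_0$-sum by $(L-\lceil\beta\rceil)$ copies of the representative term $z_0=1$, producing the prefactor $\frac{L-\lceil\beta\rceil}{(L-\beta)M+1}$; for $L\gg\beta$ this is $\approx\frac1M$, which gives the second, simpler form in the statement.

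It remains to evaluate $\mathbb{E}[\mathbb{N}_2(S,z_0{=}1,b)]$ for a fixed $S$. By \eqref{eq:m_z} the burst erases $m_z$ VNs in SP $z$, and $m_z=0$ unless $z\in\{1,\ldots,\lceil\beta\rceil+1\}$. A size-$2$ stopping set among erased VNs is of one of two types: both VNs in the same SP $z$, of which there are $\binom{m_z}{2}$ pairs each forming a stopping set with probability $q_0$; or the two VNs in SPs $z$ and $z+k$ with $k\in[w-1]$, of which there are $m_z m_{z+k}$ pairs each a stopping set with probability $q_k$ (and $q_k=0$ for $k\ge w$, while $m_{z+k}=0$ whenever the burst does not reach SP $z+k$, so those terms are harmless). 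Summing over $z$ and invoking linearity of expectation together with Lemma~\ref{lem:prob_rbc} gives $\sum_{z=1}^{\lceil\beta\rceil+1}\bigl(\binom{m_z}{2}q_0+\sum_{k=1}^{w-1}m_z m_{z+k}q_k\bigr)$, i.e. the bracketed expression; combining with the prefactor completes the derivation. (The vertex-labeling and socket-permutation factors cancel between numerator and denominator of each pair-probability, exactly as in Lemma~\ref{lem:PU_indic}.)

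The main obstacle I anticipate is making the boundary bookkeeping of the first step airtight. Two things must be checked at once: that dropping the starting positions with $z_0>L-\lceil\beta\rceil$ genuinely costs only an $O(1/L)$ multiplicative factor (so that $\gtrapprox$, not merely $\geq$, is justified), and that the position-independence claimed in Lemma~\ref{lem:prob_rbc}---which rests on ignoring the reduced-degree CNs at SPs $1,\ldots,w-1$ and $L+1,\ldots,L+w-1$---does not corrupt the translation-invariance step when the representative $z_0=1$ is itself near a boundary. Both effects are benign in the limit of large $L$ and are absorbed into the $\approx$, but spelling them out carefully (or, equivalently, selecting a genuinely interior representative $z_0$) is the delicate part; the rest is elementary counting. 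A secondary point, already discussed around \eqref{eq:lowerbound}, is that converting $\mathbb{E}[\mathbb{N}_2]$ into a bound on the decoding-failure probability requires the Poisson heuristic or a second-moment correction in the spirit of Theorem~\ref{thm:thmlb}; that lies outside this particular statement but is worth flagging.
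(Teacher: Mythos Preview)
Your proposal is correct and follows essentially the same approach as the paper: write the expectation as a uniform average over the starting bit, drop the $O(1/L)$ fraction of boundary starting positions to get the $\gtrapprox$, exploit translation invariance of the $q_k$'s to collapse the $z_0$-sum, and then count size-$2$ stopping sets via Lemma~\ref{lem:prob_rbc} by splitting into same-SP pairs ($\binom{m_z}{2}q_0$) and cross-SP pairs ($m_z m_{z+k} q_k$). Your discussion of the boundary bookkeeping and the choice of representative $z_0$ is more careful than the paper's, which simply asserts the corresponding steps (i)--(iii) without dwelling on these subtleties.
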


We verify Theorem~\ref{thm:thmapproxRBC} (and in particular~\eqref{eq:errorfloor}) for some specific choices of $b$ in the following examples.
\begin{example}[$b=M$]
\label{ex:b=M}
This is a generalized version of the SPBC where the starting bit of the burst is not constrained to occur at the exact boundary of the SP. In this case, the nonzero terms of \eqref{eq:errorfloor} are
$\binom{m_1}{2}q_0$, $\binom{m_2}{2}q_0$ and $m_1m_2q_1$. 
Then~\eqref{eq:lowerbound} has the following closed form
 \begin{align*}
P_\B^{\RBC}(M) \gtrapprox \frac{(M-1)(2M-1)}{6}q_0+ \frac{M^2-1}{6}q_1 ,
\end{align*}
which is obtained by summing all contributions after simplification of the sums. For large enough $M$, this term is well approximated by $\binom{M}{2}\frac{2q_0+q_1}{3}$. In contrast, the decoding error of the SPBC is approximately $P_\B^\SPBC\gtrapprox\binom{M}{2}q_0\gtrapprox P_\B^{\RBC}(M)$,  as $q_1 < q_0$.
\end{example}

\begin{example}[Tightness of \eqref{eq:errorfloor} for $b<\beta_{\rm max}M$] We plot the decoding failure probability of the $\mathcal{C_R}(3,6,w,L,M)$ ensemble for different finite values of $M$ and for $w=3,4$ in Fig.~\ref{fig:finite363} and Fig.~\ref{fig:finite364}.
The maximum normalized correctable burst length, $\beta_{\rm max}$, is also illustrated in both figures. 
For each pair of $M$ and $\beta$, we choose a random instance from the code ensemble and generate a random 
burst with length $b=\beta M$. 
The decoding failure probability, $P_\B^{\RBC}$, is averaged over 
all trials until 1000 decoding failures occur. We also plot the error floor estimation \eqref{eq:errorfloor} for each $M$.
Moreover, Fig.~\ref{fig:RBC_Bound} compares the error floor estimation with simulations for a larger range of $M$ and a fixed $\beta=1.25$. For each value of $M$, we performed Monte-Carlo simulations and counted $1000$ decoding failures  to assess the average block erasure probability $P_\B^{\RBC}$.

These figures show that for $b<\beta_{\rm max}M$, the error floor is well estimated by \eqref{eq:errorfloor} even for small $M=100$. It implies that the size-2 stopping sets are the main cause of decoding error. 
We also observe that the decoding error increases very fast for $b/M$ close to $\beta_{\rm max}$. As $M$ increases, the waterfall region becomes sharper around $\beta_{\rm max}$.
\end{example}

\begin{figure}[htb!]
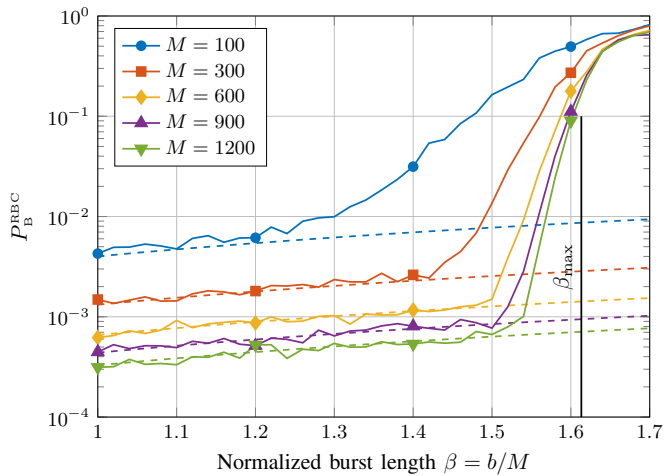

\centering
\myfigure{FigRBC_dv3_dc6_w3}
\caption{Monte Carlo simulations over the RBC with the $\mathcal{C}_\mathcal{R}(3,6,3,100,M)$ ensemble with burst length $b=\beta M$. 
Solid lines represent simulation results and dashed lines the error bound \eqref{eq:errorfloor}.}\label{fig:finite363} 
\end{figure}

\begin{figure}[tbh!]
\myfigure{FigRBC_dv3_dc6_w4}
\caption{Monte Carlo simulations over the RBC with the $\mathcal{C}_\mathcal{R}(3,6,4,100,M)$ ensemble with burst length $b=\beta M$. 
Solid lines represent simulation results and dashed lines the error bound \eqref{eq:errorfloor}.}\label{fig:finite364}
\end{figure}

\begin{figure}[tbh!]
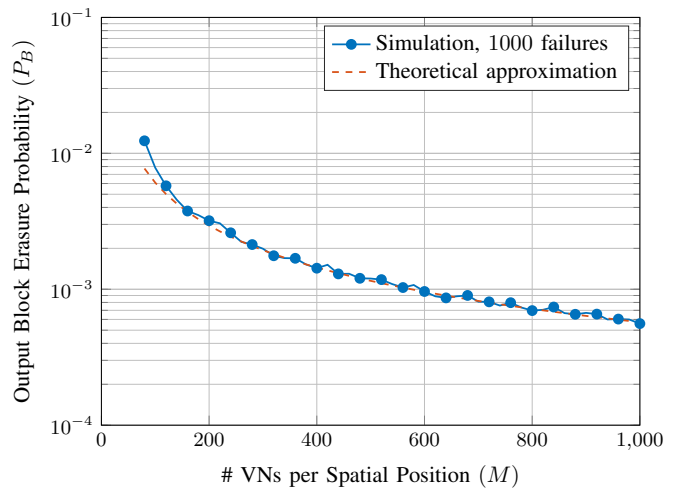

\centering
\myfigure{RBC_Bound}
\caption{Monte Carlo simulations over the RBC with the $\mathcal{C}_\mathcal{R}(3,6,3,100,M)$ ensemble with burst length $b=1.25M$, along with the approximation~\eqref{eq:errorfloor}.}\label{fig:RBC_Bound}
\end{figure}

\subsection{Single Burst Erasure in a BEC (SPBC+BEC)}
\label{sec:floorBEC}

In a more general scenario, we can consider a communication channel degraded by both memoryless noise and burst erasures.
We showed in~\cite{ARS_ISTC16} that SC-LDPC codes are asymptotically immune to a certain maximum burst length, which heavily depends on the distortion amount of the additional memoryless noise. Moreover, we observed that the behavior of these codes is very close on the BEC and the binary-input additive white Gaussian noise channel (BiAWGNC).
Here, for simplicity, we characterize the error floor when the channel is a BEC.
However, it is to be noted that the error floor on both these channels may not be the same.

\begin{sloppypar}
Consider a BEC with (random) erasure probability $\eps$ that, in addition, also introduces a single burst of erasures.
We assume that the burst erases a complete spatial position, i.e., a scenario similar to the SPBC. 
For instance, this setup can model a noisy distributed storage system with a single node failure and a noisy channel between storage nodes and users. 
As before, we focus on the number of erased size-2 stopping sets, $\mathbb{N}_2(\eps)$, because on the one hand,
\begin{equation*}
P_\B = \pr{\text{at least one stopping set erased}}\geq\pr{\mathbb{N}_2(\eps)\geq 1} ,
\end{equation*}
and on the other hand, $\pr{\mathbb{N}_2(\eps)\geq 1}$ converges to the expectation $\mathbb{E}[\mathbb{N}_2(\eps)]$ for large $ML$. Consider the $C_{\mathcal{R}}(d_v,d_c,w,L,M)$ ensemble in which
VNs of SP $z_0\in\{w,\ldots, L-w\}$ are erased by the burst error. The rest of the VNs are randomly erased with probability $\eps$.
We compute $\mathbb{E}[\mathbb{N}_2(\eps)]$ by summation of three distinct contributions:
\begin{itemize}
\item[(i)] Only BEC: A VN $v_i$ of SP $z$ and a VN $v_j$ of SP $z+k$ form a stopping set contributing to the error rate if and only if  both VNs are erased. The events of forming a stopping set and being erased are independent and thus the probability of having such an erased stopping set amounts to $\eps^2 q_k$, where $q_k$ is given by~\eqref{eq:prob_rbc}. Summing over all pairs of VNs (taking into account boundary effects), yields
\begin{equation*}
E^{(i)}=\eps^2\sum_{k=0}^{w-1} (L-k)\lambda_k,
\end{equation*}
where $\lambda_k$ is the average number of size-2 stopping sets given by \eqref{eq:lambdas}.
\item[(ii)] SPBC and BEC at SP $z_0$: The VNs of SP $z_0$ are all erased. We already counted the contribution of a fraction of these VNs in $E^{(i)}$ by assuming that the BEC erased the bits in SP $z_0$ equally likely. In particular, a pair of VNs in SP $z_0$ erased by the BEC will not be considered here.
Using \eqref{eq:prob_rbc}, the probability that a pair of VNs (in SP $z_0$) is not both erased by BEC, and
forms a stopping set is $(1-\eps^2)q_0$. Summing over all $\binom{M}{2}$ distinct pairs gives
\begin{equation*}
E^{(ii)}=(1-\eps^2)\lambda_0.
\end{equation*}
\item[(iii)] BEC and SPBC between SPs: A VN in SP $z_0$, erased by the SPBC and not by the BEC, can also form a stopping set with VNs erased by the BEC in SP $z_0\pm k$, $1\leq k<w$. 
The probability of such a stopping set
is $\eps(1-\eps)q_k$, and summing over all possible pairs gives
\begin{equation*}
E^{(iii)}=2\eps(1-\eps)\sum_{k=1}^{w-1} \lambda_k.
\end{equation*}
\end{itemize}
Summing all disjoint contributions leads to
\begin{align*}
\mathbb{E}[\mathbb{N}_2(\eps)]&=E^{(i)}+E^{(ii)}+E^{(iii)}\nonumber\\
&=\lambda_0 + \eps(2-\eps) \sum_{k=1}^{w-1} \lambda_k + \eps^2 \sum_{k=0}^{w-1} (L-k-1)\lambda_k
\end{align*}
and thus,
\begin{equation}
\label{eq:bec_spbc_ss2}
P_\B \geq\pr{\mathbb{N}_2(\eps)\geq 1}\approx \mathbb{E}[\mathbb{N}_2(\eps)].
\end{equation}
In case there is no contribution from the additional BEC (i.e., $\eps=0$), $\mathbb{E}[\mathbb{N}_2(0)]=\mathbb{E}[\Nsp]$ and the channel reduces to the SPBC model ($E^{(ii)}$). 
From Sec.~\ref{sec:spbcAnalysis}, we know that $P_\B \approx \mathbb{E}[\mathbb{N}_2(0)]$ for  large enough $M$. 
The same approximation is also valid if $\eps\ll \eps_{\rm \scriptscriptstyle BP}$, the BP threshold of the $C_{\mathcal{R}}(d_v,d_c,w,L,M)$ ensemble~\cite{Kudekar-it11}. 
The reason is that size-$2$ stopping sets become the dominant stopping sets when $\eps\ll \eps_{\rm \scriptscriptstyle BP}$  
and $M$ is large (see also~\cite{Olmos-isit11}). Moreover, $\mathbb{N}_2(\eps)$ converges to a Poisson distribution for large $M$. 
\end{sloppypar}

Note that $E^{(i)}=O(\varepsilon^2L)$ and it can become the dominant term when $L$ is large.
To see all the contributions, we compute the decoding failure of $\mathcal{C}_\mathcal{R}(3,6,3,L=10,M)$ ensemble over a combined SPBC and BEC. Two different erasure probabilities are considered: $\epsilon=0.1$ and $\epsilon=0.2$.
Fig.~\ref{fig:spbc_bec} illustrates the results of both simulations.
We plot the empirical block erasure probabilities as well as its approximation from~\eqref{eq:bec_spbc_ss2} in terms of $M$.
The simulation results are averaged over $1000$ decoder failures.
The results suggest that the approximation becomes tight beyond $M \gtrapprox 150$. 

\begin{figure}[t!]
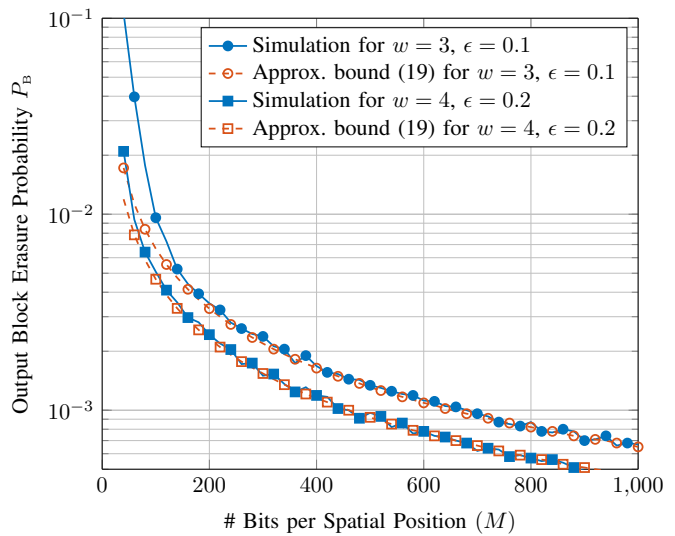

\centering
\myfigure{SPBC_with_random_erasures}
\caption{Monte Carlo simulations over the SPBC with random erasures at rate $\epsilon$ with the $\mathcal{C}_\mathcal{R}(3,6,w,10,M)$ ensemble, along with the approximate bound~\eqref{eq:bec_spbc_ss2}.}\label{fig:spbc_bec}
\end{figure}

\begin{figure}[t!]
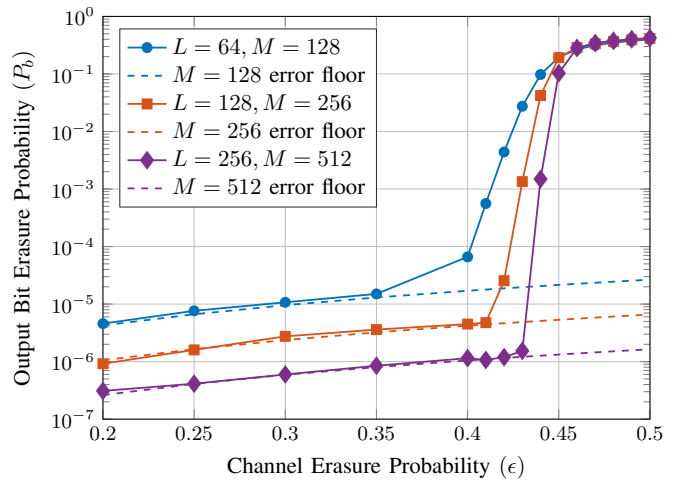

\centering
\myfigure{bec_floors}
\caption{Expected error floors for the $\mathcal{C}_\mathcal{R}(3,6,3,L,M)$ ensemble on the BEC.}
\label{fig:bec_floors}
\end{figure}

If we neglect the contributions of the SPBC (considering only $E^{(i)}$), 
we find an approximation for the error floor of the ensemble over a BEC. For $\eps\ll \eps_{\rm \scriptscriptstyle BP}$,
\begin{equation*}
P_B\approx \mathbb{E}[\mathbb{N}_2(\eps)] = E^{(i)}=\eps^2\sum_{k=0}^{w-1} (L-k)\lambda_k,
\end{equation*}
and the bit erasure probability of decoding is
\begin{equation}
\label{eq:errorfloor_bec}
P_b\approx \frac{2\mathbb{E}[\mathbb{N}_2(\eps)]}{LM}.
\end{equation}
Similar approximations are shown in~\cite{Olmos-isit11} for the error floor of particular 
protograph-based SC-LDPC ensembles over the BEC. 
Fig.~\ref{fig:bec_floors} shows the error floor of $\mathcal{C}_\mathcal{R}(3,6,3,L,M)$ ensemble in terms of $\eps$ with $M = 128,256,512$ and $L=M/2$. 
The dashed-lines are computed according to \eqref{eq:errorfloor_bec}. 
We see that the approximation becomes tight for small $\eps$ when, indeed, size-$2$ stopping sets dominate the performance.

\subsection{Block Erasure Channel (BLEC)}
\label{sec:blec}

Now we consider a more practical scenario where 
multiple bursts may simultaneously affect the transmission. 
The channel erases (all the VNs lying in) each SP independently with some probability $p$.
We call this model \emph{block erasure channel} and denote by BLEC($p$). 
This channel model is more challenging to analyze because there are $Lp$ SPs erased on average, with arbitrary locations. Two such erasures can be spaced closely together (within $w$ SPs) causing \emph{correlated} decoding failures or sufficiently spaced apart causing \emph{uncorrelated} decoding failures.
Our analysis consists of three steps:
\begin{itemize}
\item[(i)]
The first step is to partition the SPs into \textit{segments} of uncorrelated decoding failures. The key observation is that decoding failures from two consecutive erased SPs are independent if there are at least $w-1$ non-erased SPs between them. Thus, a segment is defined as 
the sequence of consecutive SPs such that the last 
$w-1$ SPs are non-erased and those are the only
$w-1$ consecutive non-erased SPs in the segment.  
Let $\tau_i$ denote the length of segment $i$. Since each SP is erased independently with probability $p$, $\tau_i$ is an i.i.d. random process with a bounded variance and average length given by the following lemma.
\begin{lemma}
\label{lem:blec_nw}
Consider the $\mathcal{C}_{\mathcal{R}}(d_v,d_c,w,L,M)$ SC-LDPC ensemble on the block erasure channel BLEC($p$). Assume that $L\gg w$. Define a segment as the sequence of consecutive SPs such that the last $w-1$ SPs are non-erased and those are the only $w-1$ consecutive non-erased SPs in the segment.  The expected number of spatial positions in a segment is given by
\begin{align}
\label{eq:blec_nw}
\mathbb{E}[\tau_i] & = \frac{1}{p} \left( \frac{1}{(1-p)^{w-1}} - 1 \right) \nonumber \\
 & \approx (w-1) \left( 1 + \frac{1}{2} (w-1)p + \frac{5}{6} (w-1)^2 p^2 \right) .
\end{align}
\end{lemma}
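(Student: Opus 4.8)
\noindent\emph{Proof plan.} The plan is to reduce $\mathbb{E}[\tau_i]$ to the classical ``waiting time for a run of successes'' in a Bernoulli sequence, and then to evaluate that waiting time by first-step analysis.

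First I would fix the probabilistic model. Since each SP is erased independently with probability $p$, the erasure pattern of the $L$ spatial positions is an i.i.d.\ sequence $X_1,\dots,X_L$ with $\pr{X_z=0}=1-p$ (SP $z$ non-erased) and $\pr{X_z=1}=p$ (SP $z$ erased). By construction a segment is a maximal block of consecutive SPs that is cut off at the first instant at which $w-1$ consecutive non-erased SPs have been observed, after which the running count of non-erased SPs is restarted for the next segment. Hence the segment boundaries form a renewal process: conditioned on segment $i$ having just terminated, the remaining SPs are again i.i.d.\ and independent of the past, so $\tau_1,\tau_2,\dots$ are i.i.d., each equal in distribution to the number $T$ of SPs one must scan, from a fresh start, until seeing $w-1$ consecutive non-erased SPs. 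The first and last segments of the chain require a separate (trivial) bookkeeping, but their effect on the average is $O(w/L)$ and is therefore negligible under the hypothesis $L\gg w$; the same picture also yields the tail estimate $\pr{T>k(w-1)}\le\bigl(1-(1-p)^{w-1}\bigr)^{k}$, so $\tau_i$ has all moments finite and in particular a bounded variance. It then remains only to compute $\mathbb{E}[T]$.

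To evaluate $\mathbb{E}[T]$ I would carry out first-step analysis on the state $k\in\{0,1,\dots,w-1\}$ given by the current length of the run of non-erased SPs. Writing $e_k$ for the expected number of further SPs needed to reach state $w-1$ from state $k$, we have $e_{w-1}=0$ and, for $0\le k\le w-2$,
\[
e_k = 1 + (1-p)\,e_{k+1} + p\,e_0 ,
\]
since the next SP is either non-erased (probability $1-p$, moving to state $k+1$) or erased (probability $p$, resetting to state $0$). Iterating this recursion down from $e_{w-1}=0$ gives
\[
e_0 = \bigl(1+p\,e_0\bigr)\sum_{j=0}^{w-2}(1-p)^{j} = \bigl(1+p\,e_0\bigr)\,\frac{1-(1-p)^{w-1}}{p},
\]
and solving this linear equation for $e_0=\mathbb{E}[T]=\mathbb{E}[\tau_i]$ gives exactly $\tfrac1p\bigl((1-p)^{-(w-1)}-1\bigr)$, the closed form in the statement.

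Finally, for the displayed approximation I would Taylor-expand $(1-p)^{-(w-1)}=\sum_{j\ge 0}\binom{w+j-2}{j}p^{\,j}$ about $p=0$, subtract $1$, divide by $p$, factor out $(w-1)$, and replace each coefficient by its dominant behaviour in $w$; this converts the exact series into the stated expansion. The one step that requires genuine care is the renewal reduction in the first paragraph---verifying that the greedy segmentation really does produce i.i.d.\ segment lengths and that the boundary segments are asymptotically irrelevant once $L\gg w$; given that, both the run-waiting-time identity and the Taylor estimate are routine.
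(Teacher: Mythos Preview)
Your argument is correct, and your reduction to the classical ``waiting time for a run of $w-1$ successes'' together with first-step analysis on the run-length state is clean; the recursion $e_k=1+(1-p)e_{k+1}+p\,e_0$ indeed telescopes to $e_0=(1+pe_0)\frac{1-(1-p)^{w-1}}{p}$ and hence to the stated closed form.

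The paper, however, takes a different route: it builds a probability generating function. Writing $G(x)=px\cdot\frac{1-(1-p)^{w-1}x^{w-1}}{1-(1-p)x}$ for the block ``at most $w-2$ non-erased SPs followed by one erased SP'', it forms $F(x)=(1-p)^{w-1}x^{w-1}\sum_{k\ge0}G(x)^k$, checks $F(1)=1$, and reads off $\mathbb{E}[\tau_i]=F'(1)$. Your Markov-chain computation is more elementary and self-contained for this lemma, and it additionally makes the i.i.d.\ renewal structure and the finite-variance claim explicit, which the paper only asserts. The generating-function approach, on the other hand, pays off in the very next lemma of the paper, where the same $G(x)$ is reused to count segments containing at least three consecutive erased SPs; so the paper's choice is partly an investment in machinery. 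As for the Taylor approximation, neither you nor the paper spells it out in detail (the paper's proof stops at the exact expression), so your one-line sketch is at least on par.
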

The proof is outlined in Appendix~\ref{sec:appblec}. 
\item[(ii)]
There are $N$ segments if we have\footnote{In the boundary, there are $w-1$ extra SPs with only check nodes. We can assume that those SPs have known VNs (non-erased) with trivial zero values.} $\sum_{i=1}^N\tau_i=L+w-1$. If $P_e^{(i)}$ denotes the decoding failure probability of segment $i$, the average probability of decoding failure is
\begin{equation*}
P_\B^\BLEC=1-\mathbb{E}\left[\prod_{i=1}^N \left(1-P_e^{(i)}\right)\right],
\end{equation*}
where the expectation is taken over different channel realizations and the code ensemble.
To simplify the above expression, we apply the tower rule of expectation to get
\begin{align}
P_\B^\BLEC&=1-\mathbb{E}_N\left[\mathbb{E}\left[
\prod_{i=1}^N \left(1-P_e^{(i)}\right) \bigg\vert N\right]\right]\nonumber\\
&=1-\mathbb{E}_N\left[\left(1-\mathbb{E}\left[P_e^{(1)}\right]\right)^N\right]\nonumber\\
&\overset{(i)}{\lessapprox} 1-\left(1-\mathbb{E}\left[P_e^{(1)}\right]\right)^{\mathbb{E}\left[N\right]}\nonumber\\
&\overset{(ii)}{\lessapprox} \mathbb{E}\left[N\right]\mathbb{E}\left[P_e^{(1)}\right]\nonumber\\
&=\frac{(L+w-1)}{\mathbb{E}[\tau_i]}\mathbb{E}\left[P_e^{(1)}\right],
\label{eq:pe_blec}
\end{align}
where $(i)$ follows from Jensen's inequality and $(ii)$ is because 
$(1-x)^k\geq 1-kx$ for $x\geq 0$ and the last identity is because $\mathbb{E}[N]=(L+w-1)/\mathbb{E}[\tau_i]$. The inequality $(i)$ becomes a good approximation
as $N$ concentrates to $\mathbb{E}[N]$ for large $L$, and the inequality $(ii)$
is a good approximation if $\mathbb{E}\left[N\right]\mathbb{E}\left[P_e^{(1)}\right]\ll 1$.
 
\item[(iii)]
Finally, we need to estimate $P_{\B, \rm seg}=\mathbb{E}\left[P_e^{(1)}\right]$,
the block error probability of a segment, which can be written as
\begin{equation*}
P_{\B,\rm seg} = \sum_{k=1}^{\infty} Q_k,
\end{equation*}
where $Q_k$ is the average probability of segment block error
when $k$ SPs (in arbitrary 
locations) have been erased. In fact, each $Q_k$ is the expectation of the
existence of a stopping set (SS) over all possible combinations of $k$ erased SPs in a segment.
While the probability of each of those combinations depends on $w$ and the block erasure probability $p$ of the channel,
the occurrence of an SS depends on the code ensemble and in particular~$M$.
We can simply estimate the dominant term of each $Q_k$ separately following the methodology introduced in the previous sections.

For ease of exposition, we exemplarily consider the specific ensemble $\mathcal{C}_\mathcal{R}(3,6,w=4,L,M)$. The results can be easily modified for other ensembles.
The DE analysis of a single-burst channel, shown in Fig.~\ref{fig:beta}, yields  
$\beta(0)=\beta_{\rm max}=2.14$ implying that BP decoding is not successful if three or more consecutive SPs are erased in the ensemble and thus, $Q_k>0$ for $k\geq 3$, even asymptotically.
For $k<3$, $Q_k$ tends to zero in the limit of $M$. 
In this case, we estimate $Q_k$ using our single-burst models.  The case of $k=1$
is basically the SPBC scenario when a single SP is erased times 
the probability of erasing only a SP in a segment. Therefore,
\begin{align*}
Q_1 &= P_{\B}^{\SPBC}(1-p)^{w-1} G(1)\\
    &= P_{\B}^{\SPBC}(1-p)^{w-1} (1 - (1-p)^{w-1}),
\end{align*}
where $G(x)$ is defined in Appendix~\ref{sec:appblec}. For $k=2$,
\begin{align}
Q_2 &= \sum_{z=0}^{w-2}\pr{\text{$z$ SPs between 2 erased SPs}}\nonumber\\
&\hspace{0.5cm}\times\pr{\text{having an SS}|\text{$z$ SPs between 2 erased SPs}}\nonumber\\
&\overset{(i)}{\gtrapprox} \pr{z=0}
\pr{\text{having an SS}|z=0}\nonumber\\
&= p(1-p)^{w-1} G(1)
\pr{\text{having an SS}|z=0}\label{eq:2spbc}\\
&\overset{(ii)}{\gtrapprox} p(1-p)^{w-1} G(1)
\left(2\binom{M}{2}q_0 + M^2q_1\right)\label{eq:2spbc_bound},
\end{align}
where $q_0$ and $q_1$ are obtained using Lemma~\ref{lem:prob_rbc} and $G(1)=1 - (1-p)^{w-1}$ as defined in Appendix~\ref{sec:appblec}. 
and the inequality $(i)$ is obtained by only considering the
dominant term of the sum, when 2 consecutive SPs are erased ($z=0$). We define 
$P^{\rm 2PBC}_{\rm B}=\pr{\text{having an SS}|z=0} = 2\binom{M}{2}q_0+M^2q_1$.
Let us justify $(ii)$ using the bound for the RBC. Consider a burst of length $b=2M$.
We bound the probability of error in \eqref{eq:errorfloor} by averaging over a random, uniformly distributed starting bit $S$.
Here, $S=1$. Hence, we have $(ii)$ by considering only the term for $S=1$ in \eqref{eq:errorfloor} (without the averaging factor $\frac{1}{M}$). 
  
Note that one can also estimate all remaining $w-1$ terms of $Q_2$ following the same methodology. We neglect these terms as \eqref{eq:2spbc} is simple and much larger than the other terms.  
For $k\geq 3$, BP decoding fails if there are at least 3 consecutive erased SPs out of $k$ erased SPs in a segment. 
These combinations become the dominant term of $Q_k$ for large values of $M$. Let $Q_k^{\rm cons}$ denote the probability of observing $k$ erased SPs with at least 3 consecutive erased SPs (in this case, there is certainly a stopping set and decoding is impossible). Then
\begin{align*}
 \sum_{k=3}^{\infty} Q_k &\gtrapprox \sum_{k=3}^{\infty}Q_k^{\rm cons}\\
 &=p^2\frac{1-(1-p)^{w-1}}{(1-p)^{w-1}+p},
\end{align*}
where the last equality is the result of Lemma~\ref{lem:q3} in Appendix~\ref{sec:appblec}.
As a result,
\begin{equation*}
P_{\B,\rm seg} \approx \underset{\rm finite-length\ error}{\underbrace{Q_1 + Q_2 }} + \underset{\rm DE\ error}{\underbrace{\sum_{k=3}^{\infty}Q_k^{\rm cons}}}.
\end{equation*}
Let us now summarize the above analysis. 
The error events are divided into two subsets: the first subset, named here ``finite-length error'', includes the error events with vanishing probability in the limit of $M$. The quantities $Q_1$ and $Q_2$ are its dominant terms. The second subset, named here ``DE error'', are the error events that cause the BP decoder to always fail for any value of $M$.
Combining steps (i), (ii), and (iii), we have,
\begin{align}
\label{eq:p_blec}
 P_{\B}^{\BLEC}\!\!&\approx (L+w-1)p(1-p)^{w-1} \bigg( (1-p)^{w-1} \big[ P_{\B}^{\SPBC} +   \nonumber \\ 
 &  p P^{\rm 2PBC}_{\rm B}  \big]
+\frac{p^2}{(1-p)^{w-1}+p} \bigg) ,
\end{align}
where $P_{\B}^{\SPBC}$ and $P^{\rm 2PBC}_{\rm B}$ are given in \eqref{eq:lower_bound} and 
\eqref{eq:2spbc_bound}, respectively.
\end{itemize}

\begin{figure}[tbh!]
\myfigure{BLEC_w_4}
\caption{Monte Carlo simulations over the BLEC with the $\mathcal{C}_{\mathcal{R}}(3,6,4,30,M)$ ensemble, along with their respective approximations according to~(\ref{eq:p_blec}).}\label{fig:BLEC_w_4}
\end{figure}

\begin{figure}[tbh!]
\myfigure{BLEC_w_4_vary_p}
\caption{Monte Carlo simulations over the BLEC with the $\mathcal{C}_{\mathcal{R}}(3,6,4,30,M)$ ensemble, along with their respective approximations according to~(\ref{eq:p_blec}).}\label{fig:BLEC_w_4_vary_p}
\end{figure}

We performed Monte Carlo simulations over the BLEC to verify the tightness of the above estimation for the $\mathcal{C}_{\mathcal{R}}(3,6,4,30,M)$ ensemble.
The simulation results are obtained by counting $100$ decoding failures.
Figure~\ref{fig:BLEC_w_4} illustrates the goodness of  
approximation \eqref{eq:p_blec} in terms of $M$ for two different values of $p=0.01$ and $p=0.03$.
We observe that the approximation well match the simulation results.
Note that the approximation requires the estimation of $P^{\rm 2PBC}_{\rm B}$ for a burst of length $2M$. For this purpose, we used computer simulations to numerically compute this value. Another option would be to use the
error-floor expression \eqref{eq:2spbc_bound}, which would however result in a worse approximation for small values of $M\approx 100$.

We also plot the average block error probability in terms of the channel parameter $p$ in Fig.~\ref{fig:BLEC_w_4_vary_p}. We see again that
the approximations track the actual performance closely for $p$ even as large as $0.1$ and for $M$ as small as $100$.

The above analysis indicates that
the error analysis of a more complex erasure channel model, such as the
BLEC$(p)$, is feasible by breaking down the channel model
into combinations of single-burst channel models.

\section{Conclusions}\label{sec:conclusions}

In this paper, we have analyzed random regular SC-LDPC ensembles on burst-erasure channels. 
Using density evolution, we have shown the asymptotically correctable maximum burst erasure length. 
However, for finite-length codes, some burst erasures shorter than this length may not be recoverable. 
The reason is that due to finite-length effects of certain graph structures, known from conventional LDPC decoding, the decoding could stop prematurely. 
Based on these structures, we have derived tight lower bounds or approximations on the block erasure probability of SC-LDPC code ensembles when the burst has one of the following characteristics:
\begin{itemize}
\item it erases a single random spatial position completely, or
\item it erases a random spatial position completely in addition to a memoryless background noise (e.g. BEC), or
\item it has a fixed length but starts at a random bit and erases a sequence of spatial positions, or
\item it erases each spatial position independently with some probability $p$. 
\end{itemize}
Moreover, we have shown that expurgating the codes can considerably improve the block erasure rates and guarantee virtually error-free performance (e.g. $P_{\B}^{\SPBC} \approx 10^{-15}$) for very short block lengths (e.g. $M \approx 120$).

Directions for future research include the construction of new, possibly more structured burst-resilient ensembles based on this analysis and a more detailed analysis of the block erasure channel.

\section*{Acknowledgment}

We would like to thank the reviewers and the associate editor for their valuable comments. Specifically, we extend our sincere thanks to one of the reviewers for encouraging us to consider an analysis of the block erasure channel.

\appendices
\allowdisplaybreaks

\section{Proof of Theorem~\ref{thm:thmlb}}
\label{sec:appProof1}

Before presenting the proof of Theorem~\ref{thm:thmlb}, we give the following Lemma, which we will need in our proof besides Lemma~\ref{lem:PU_indic}.
\begin{lemma}\label{lem:PU_joint}
Consider one SP of a code sampled from the $\mathcal{C}_{\mathcal{R}}(d_v,d_c,w,L,M)$ SC-LDPC ensemble. Recall the indicator function $U_{ij}=1$ if VNs $v_i$ and $v_j$ in that SP form a size-2 stopping set. Assuming $d_c > 2$ and $wM \geq 2(d_v+1)d_c$, we have the following bound on the joint expectation
\begin{align*}
\mathbb{E}[U_{ij}U_{kl}]\leq \frac{2\mathbb{E}[U_{ij}]}{\binom{wM\frac{d_v}{d_c}-2d_v}{d_v}}\,.
\end{align*}
\end{lemma}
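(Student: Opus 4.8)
The plan is to peel off one variable node and reduce the joint expectation to a single conditional probability. Writing $\mathbb{E}[U_{ij}U_{kl}]=\mathbb{E}[U_{ij}]\cdot\mathbb{P}\{U_{kl}=1\mid U_{ij}=1\}$, it suffices to prove $\mathbb{P}\{U_{kl}=1\mid U_{ij}=1\}\le 2\big/\binom{wM\frac{d_v}{d_c}-2d_v}{d_v}$. The lemma is applied to two distinct pairs, so we may assume $\{k,l\}\neq\{i,j\}$; then at least one of $v_k,v_l$ lies outside $\{v_i,v_j\}$, and since $U_{kl}=U_{lk}$ I would relabel so that $v_l\notin\{v_i,v_j\}$. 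Note $v_l$ is then distinct from $v_i,v_j$ and from $v_k$ as well.

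Next I would condition one step further. Given $U_{ij}=1$, the nodes $v_i$ and $v_j$ share a common neighbourhood $A:=\mathcal{N}(v_i)=\mathcal{N}(v_j)$ with $|A|=d_v$, and every check node of $A$ carries exactly the two edges coming from $v_i$ and $v_j$. Conditioning additionally on $C:=\mathcal{N}(v_k)$ (with $|C|=d_v$, and $C=A$ forced if $v_k\in\{v_i,v_j\}$), the event $U_{kl}=1$ forces $\mathcal{N}(v_l)=C$. I would estimate this last probability by the same combinatorial ratio used in Lemma~\ref{lem:PU_indic}: it equals $T^{(l)}_{ss}/T^{(l)}$, where $T^{(l)}$ counts the ways of attaching the $d_v$ edges of $v_l$ to sockets of $d_v$ distinct check nodes that are still free after the edges of $v_i,v_j,v_k$ have been placed, and $T^{(l)}_{ss}$ counts those attachments that use exactly the check nodes of $C$.

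For the numerator, every check node of $C$ already carries between one and three of the edges of $v_i,v_j,v_k$, hence has at most $d_c-1$ free sockets (zero free sockets, which forces $T^{(l)}_{ss}=0$, can occur only in the degenerate case $d_c=3$ with a check node in $A\cap C$ and $v_k\notin\{v_i,v_j\}$); thus $T^{(l)}_{ss}\le d_v!\,(d_c-1)^{d_v}$. For the denominator, the edges of $v_i,v_j,v_k$ touch at most $|A\cup C|\le 2d_v$ check nodes, so at least $wM\frac{d_v}{d_c}-2d_v$ check nodes are untouched and fully free; routing $v_l$ only through these already gives $T^{(l)}\ge d_v!\binom{wM\frac{d_v}{d_c}-2d_v}{d_v}d_c^{d_v}$, where the standing hypothesis $wM\ge 2(d_v+1)d_c$ is exactly what makes $wM\frac{d_v}{d_c}-2d_v\ge d_v$ so that this binomial coefficient is nondegenerate (and large). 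Taking the ratio yields $\mathbb{P}\{\mathcal{N}(v_l)=C\mid U_{ij}=1,\mathcal{N}(v_k)=C\}\le (1-1/d_c)^{d_v}\big/\binom{wM\frac{d_v}{d_c}-2d_v}{d_v}$, a bound uniform in $C$, which therefore survives averaging over $\mathcal{N}(v_k)$ and gives $\mathbb{P}\{U_{kl}=1\mid U_{ij}=1\}\le (1-1/d_c)^{d_v}\big/\binom{wM\frac{d_v}{d_c}-2d_v}{d_v}$. Multiplying by $\mathbb{E}[U_{ij}]$ finishes the argument, in fact with a constant $(1-1/d_c)^{d_v}<1$ sharper than the stated $2$, so the factor $2$ is merely a safe over-estimate.

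The hard part is not the overlap bookkeeping — which is routine once the two cases $v_k\in\{v_i,v_j\}$ and $v_k\notin\{v_i,v_j\}$, and the degenerate $d_c=3$ situation, are spelled out — but making rigorous the assumption, already implicit in Lemma~\ref{lem:PU_indic}, that the conditional law of the edge configuration of a ``fresh'' variable node, given the configurations of finitely many others, is the uniform law over admissible socket attachments, so that $T^{(l)}_{ss}/T^{(l)}$ is literally a probability. This is exact in the configuration-model reading of the ensemble and holds up to vanishing corrections for the random regular SC-LDPC ensemble under $wM\ge 2(d_v+1)d_c$; I would state which model is used and confine the error control to that one place.
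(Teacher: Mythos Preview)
Your argument is correct and in fact cleaner than the paper's. Both proofs start from $\mathbb{E}[U_{ij}U_{kl}]=\mathbb{E}[U_{ij}]\,\mathbb{P}\{U_{kl}=1\mid U_{ij}=1\}$, but then diverge. The paper splits into two cases according to whether $\{i,j\}\cap\{k,l\}$ is empty or not, and in the four-distinct-node case it marginalises explicitly over the overlap $b=|\mathcal{N}(v_i)\cap\mathcal{N}(v_k)|$, computing $T_{\text{ss},b}$ and a lower bound on $\tilde{T}$ for each $b$ and then summing; the constant $2$ emerges at the end from bounding $\sum_b P_{\text{corr},b}$ and invoking $wM\ge 2(d_v+1)d_c$ to control $\bigl(1+\tfrac{d_c-2}{wM-2d_c}\bigr)^{d_v}\le 2$. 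You instead condition one level deeper on $\mathcal{N}(v_k)=C$ and bound $\mathbb{P}\{\mathcal{N}(v_l)=C\}$ uniformly in $C$ and in the socket placements, which swallows both the case split and the $b$-sum in one stroke and delivers the sharper constant $(1-1/d_c)^{d_v}$ in place of~$2$; the hypothesis $wM\ge 2(d_v+1)d_c$ is then only needed to keep the binomial coefficient nondegenerate, not to produce the constant. The paper's decomposition over $b$ has the minor advantage of making the correlation structure explicit, which could matter if one wanted higher moments or a tighter asymptotic, but for the stated bound your route is shorter. Your closing caveat about the configuration-model reading of the ensemble is well placed: the paper's own proofs of Lemma~\ref{lem:PU_indic} and of this lemma rely on the same uniform-over-sockets assumption without flagging it, so you are being more, not less, careful than the original.
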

\begin{proof}
We have to distinguish two cases. 
In the first case,  variable nodes $v_i$ and $v_j$, $j\neq i$ form a stopping set of size two and variable nodes $v_k$ and $v_l$, $l\neq k\neq i$ form another stopping set, where w.l.o.g., we assume $k=j$, i.e., the two stopping sets share a variable node. In this case, we have
\begin{align*}
\mathbb{E}\left[U_{ij}U_{jl}\right] &= \pr{U_{ij}=1,U_{jl}=1}\\
&= \pr{U_{jl}=1|U_{ij}=1}\pr{U_{ij}=1} \\
&= \mathbb{E}[U_{ij}]\cdot \pr{U_{jl}=1|U_{ij}=1}
\end{align*}
where $\pr{U_{jl}=1|U_{ij}=1}$ can be computed using similar arguments as in the proof of Lemma~\ref{lem:PU_indic}, except that some sockets have already been occupied by the edges emerging from the VNs $v_i$ and $v_j$ and for forming an SS, node $v_l$ can choose among $d_v(d_c-2)$ sockets. We have
\begin{equation*}
\pr{U_{jl}=1|U_{ij}=1} =\\
 \frac{(1-2/d_c)^{d_v}}{\sum_{\ell=0}^{d_v}\binom{d_v}{\ell}\binom{wM\frac{d_v}{d_c}-d_v}{d_v-\ell}(1-2/d_c)^\ell}.
\end{equation*}
We upper bound $\pr{U_{jl}=1|U_{ij}=1}$ by considering only the dominant term ($\ell=0$) in the denominator
\begin{align*}
\pr{U_{jl}=1|U_{ij}=1} &\leq \frac{(1-2/d_c)^{d_v}}{\binom{wM\frac{d_v}{d_c}-d_v}{d_v}}
\leq \frac{1}{\binom{wM\frac{d_v}{d_c}-d_v}{d_v}}.
\end{align*}
Hence, for $i\in[M]$, $j\in[M]\backslash\{i\}$ and $l\in[M]\backslash\{i,j\}$ we have
\[
\mathbb{E}\left[U_{ij}U_{jl}\right] \leq \frac{\mathbb{E}[U_{ij}]}{\binom{wM\frac{d_v}{d_c}-d_v}{d_v}}.
\]
Now, we consider the second, more general case, where $i\in[M]$, $j\in[M]\backslash\{i\}$ and $k\in[M]\backslash\{i,j\}$ and $l\in[M]\backslash\{i,j,k\}$. If $v_i$ and $v_j$ form a stopping set of size two, we have $|\mathcal{N}(v_i)\cap\mathcal{N}(v_j)| = d_v$. Thus, we have
\begin{align*}
 & \mathbb{E}\left[U_{ij}U_{kl}\right] \\
 & = \pr{U_{ij}=1} \pr{U_{kl}=1|U_{ij}=1}\\
&\! \stackrel{(a)}= \pr{U_{ij}=1} \sum_{b=0}^{d_v-1} \pr{U_{kl}=1, |\mathcal{N}(v_i)\cap\mathcal{N}(v_k)|=b \big\vert U_{ij}=1} \\
& \triangleq \pr{U_{ij}=1}\sum_{b=0}^{d_v-1} P_{\text{corr},b} ,
\end{align*}
where in $(a)$ we marginalize over the number $b$ of check nodes shared between both size-2 stopping sets $\{v_i,v_j\}$ and $\{v_k,v_l\}$, $b\leq d_v$. Let $T_{\text{ss},b}$ denote the number of 
favorable sub-graphs from these VNs provided that $b$ check nodes between both stopping sets are shared and let $\tilde{T}$ denote the number of all sub-graphs from these VNs provided that
$U_{ij}=1$ and $|\mathcal{N}(v_i)\cap\mathcal{N}(v_l)\cap \mathcal{N}(v_k)|\leq d_v$. Thus, 
\[
P_{\text{corr},b}\triangleq\! \pr{U_{kl}\!=\!1, |\mathcal{N}(v_i)\cap\mathcal{N}(v_k)|=b \big\vert U_{ij}=1}=T_{\text{ss},b}/\tilde{T}.
\]
First, we compute $T_{\text{ss},b}=T_{\{i,j\},k,b}T_{\{i,j,k\},l,b}$ as follows:
$T_{\{i,j\},k,b}$ is the number of sub-graphs in which $v_k$ is connected to randomly chosen $b$ CNs of $\mathcal{N}(v_i)$, i.e., CNs  with already $2$ connections from $v_i$ and $v_j$.
$T_{\{i,j,k\},l,b}$ is the number of ways to connect $v_l$ to any of these sub-graphs such that $U_{kl}=1$. We have
\begin{align*}
T_{\{i,j\},k,b} &= d_v!\binom{d_v}{b}\binom{wM\frac{d_v}{d_c}-d_v}{d_v-b}(d_c-2)^{b}(d_c)^{d_v-b},
\end{align*}
where $d_v!$ is due to the permutation of labeled edges, and the rest is counting the different ways of choosing labeled sockets of CNs.
Similarly, we get
\begin{align*}
T_{\{i,j,k\},l,b} &= d_v!(d_c-3)^b (d_c-1)^{d_v-b}.
\end{align*}
This yields
\begin{align*}
& T_{\text{ss},b} \\
&= T_{\{i,j\},k,b}T_{\{i,j,k\},l,b}  \\
&= (d_v!)^2\binom{d_v}{b}\binom{wM\frac{d_v}{d_c}-d_v}{d_v-b}\left(d_c(d_c-1)\right)^{d_v-b} \\
& \hspace{5.5cm} \times\left((d_c-2)(d_c-3)\right)^b\\
&\leq (d_v!)^2\binom{d_v}{b}\binom{wM\frac{d_v}{d_c}-d_v}{d_v-b}
\left(d_c(d_c-1)\right)^{d_v}\left(\frac{d_c-2}{d_c}\right)^{2b}.
\end{align*}
The total number of possibilities $\tilde{T}$ can be lower bounded by its subset which is $|\mathcal{N}(v_i)\cup\mathcal{N}(v_k)\cup \mathcal{N}(v_l)|=3d_v$ (no common CNs).
\begin{align*}
\tilde{T} &\geq (d_v!)^2\binom{wM\frac{d_v}{d_c}-d_v}{d_v}\binom{wM\frac{d_v}{d_c}-2d_v}{d_v}(d_c)^{2d_v}.
\end{align*}
Hence $P_{\text{corr},b}$ can be bounded as
\begin{align*}
P_{\text{corr},b}  &=  \frac{T_{\text{ss},b}}{T_b} \\
  & \leq  \frac{\binom{d_v}{b}\binom{wM\frac{d_v}{d_c}-d_v}{d_v-b}}{\binom{wM\frac{d_v}{d_c}-d_v}{d_v}\binom{wM\frac{d_v}{d_c}-2d_v}{d_v}} \left(1-\frac{1}{d_c}\right)^{d_v} \left(1-\frac{2}{d_c}\right)^{2b}\\
  &\stackrel{(a)}\leq
  \frac{\binom{d_v}{b}d_c^b}{(wM-2d_c)^b\binom{wM\frac{d_v}{d_c}-2d_v}{d_v}} \left(1-\frac{1}{d_c}\right)^{d_v} \left(1-\frac{2}{d_c}\right)^{2b},
\end{align*}
where $(a)$ is as $\binom{wM\frac{d_v}{d_c}-d_v}{d_v-b}\leq \binom{wM\frac{d_v}{d_c}-d_v}{d_v}\left(\frac{d_v}{wM\frac{d_v}{d_c}-2d_v}\right)^b$. Finally, we get
\begin{align*}  
\sum_{b=0}^{d_v-1} P_{\text{corr},b} & \leq  \frac{(1-1/d_c)^{d_v}}{\binom{wM\frac{d_v}{d_c}-2d_v}{d_v}}\sum_{b=0}^{d_v} \binom{d_v}{b} \left(\frac{d_c-2}{wM - 2 d_c}\right)^{b} \\
  & =\frac{(1-1/d_c)^{d_v}}{\binom{wM\frac{d_v}{d_c}-2d_v}{d_v}} \left( 1 + \dfrac{d_c-2}{wM - 2 d_c} \right)^{d_v}\\
  &\overset{(a)}{\leq}  \frac{2}{\binom{wM\frac{d_v}{d_c}-2d_v}{d_v}}
\end{align*}
where $(a)$ is because $(1+\frac{d_c-2}{wM-2d_c})^{d_v}\leq (1+\frac{1}{2d_v})^{d_v}<2$ for $wM\geq 2(d_v+1)d_c$. Taking the larger upper-bound of both cases proves the claim.
\end{proof}

\subsection*{Proof of Theorem~\ref{thm:thmlb}:}
Consider one SP of a code sampled from the $\mathcal{C}_{\mathcal{R}}(d_v,d_c,w,L,M)$ ensemble.
Recall that $\mathbb{N}_2^{\rm\scriptscriptstyle SP}=\sum_{1\leq i<j\leq M} U_{ij}$ and $\mathbb{E}[\mathbb{N}_2^{\rm\scriptscriptstyle SP}]=\binom{M}{2}\pr{U_{ij}=1} = \binom{M}{2}\mathbb{E}[U_{ij}]$. We then have 
\begin{align}
P_\B^\SPBC & =  \pr{\text{At least one stopping set in a SP}}\nonumber \\
			& \geq  \pr{\Nsp \geq 1} \nonumber \\
			& \overset{(a)}{\geq}  \frac{\mathbb{E}[\Nsp]^2}{\mathbb{E}\left[\left(\Nsp\right)^2\right]}\label{eq:thm1prooft1}		
\end{align}
where $(a)$ is the application of the second moment method~\cite[Lemma C.8]{Richardson-MCT08}. Furthermore,
\begin{align*}
\mathbb{E}\left[\left(\Nsp\right)^2\right] &=
\mathbb{E}\left[ \left(\sum_{1\leq i<j\leq M} U_{ij}\right)^2\right]\\
	&= \sum_{1\leq i<j\leq M} \mathbb{E}[U_{ij}^2]
	+ \sum_{\substack{(i,j)\neq(k,l)\\ i<j,k<l}} \mathbb{E}[U_{ij}U_{kl}]\\
	&\leq \binom{M}{2}\mathbb{E}[U_{ij}] + \frac{2\binom{M}{2} \left( \binom{M}{2}-1 \right)}{\binom{wM\frac{d_v}{d_c}-2d_v}{d_v}}\mathbb{E}[U_{ij}] ,
\end{align*}
where in the last step, $\sum_{1\leq i<j\leq M} \mathbb{E}[U_{ij}^2]=\binom{M}{2}\mathbb{E}[U_{ij}]$ as
$U_{ij}\in\{0,1\}$ and the second term is the application of Lemma~\ref{lem:PU_joint} over the remaining $\binom{M}{2} \left( \binom{M}{2}-1 \right)$
combinations. 
Finally, we can write~\eqref{eq:thm1prooft1} as
\begin{align*}
P_\B^\SPBC &\geq \frac{\binom{M}{2}^2\mathbb{E}[U_{ij}]^2}{\binom{M}{2}\mathbb{E}[U_{ij}] + \frac{2\binom{M}{2} \left( \binom{M}{2}-1 \right)}{\binom{wM\frac{d_v}{d_c}-2d_v}{d_v}}\mathbb{E}[U_{ij}]} \\
& \geq \frac{\binom{M}{2}\mathbb{E}[U_{ij}]}{1 + \frac{2\binom{M}{2}}{\binom{wM\frac{d_v}{d_c}-2d_v}{d_v}}} \\
& \stackrel{(a)}{\geq} \mathbb{E}[\Nsp]\left(1 - \frac{2\binom{M}{2}}{\binom{wM\frac{d_v}{d_c}-2d_v}{d_v}}\right) \\
& \geq \mathbb{E}[\Nsp]\left(1 - \frac{M^2}{\left(\frac{wM}{d_c}-3\right)^{d_v}}\right) 
\end{align*}
where  $(a)$ is due to the fact that $\frac{1}{1+\tau}\geq 1-\tau$ for $\tau > -1$.\hfill\qed

\section{The Poisson Ensemble}
\label{sec:poisson}

\begin{lemma}\label{lem:PU_poisson}
Consider a code sampled uniformly from the Poisson ensemble $\mathcal{C}_{\mathcal{P}}(d_v,d_c,w,L,M)$. 
The probability that two variable nodes from a spatial position form a stopping set amounts to
\begin{equation}
\label{eq:p2poisson}
P_{\mathcal{P}} = \binom{wM\frac{d_v}{d_c}}{d_v}^{-1}.
\end{equation}
\end{lemma}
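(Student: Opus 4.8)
The plan is to reuse the labeled sub-graph counting scheme from the proof of Lemma~\ref{lem:PU_indic}, but to exploit the drastic simplification the Poisson construction brings: since no constraint is imposed on the check node degree, each edge of a variable node simply selects one of the $N_c \triangleq wM\frac{d_v}{d_c}$ check nodes reachable from its spatial position uniformly at random, and the only local restriction (forbidding parallel edges) merely forces the $d_v$ check nodes chosen by a single VN to be distinct. In particular, two distinct variable nodes draw their neighborhoods \emph{independently}, because nothing (no socket bookkeeping, no degree budget) couples their choices.

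First I would fix a generic (interior) spatial position $z$ and two variable nodes $v_i,v_j$ lying in it; both reach the same pool of $N_c$ check nodes spread over the SPs $z,\dots,z+w-1$. Conditioning on the absence of parallel edges, the neighborhood $\mathcal{N}(v_i)$ is uniformly distributed over the $\binom{N_c}{d_v}$ size-$d_v$ subsets of this pool (selecting $d_v$ check nodes i.i.d.\ uniformly and conditioning on them being distinct yields exactly the uniform law on $d_v$-subsets), and likewise for $\mathcal{N}(v_j)$, independently of $\mathcal{N}(v_i)$.

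Next I would invoke the characterization already used for the regular ensemble: $v_i$ and $v_j$ form a size-$2$ stopping set (equivalently a weight-$2$ codeword) if and only if $\mathcal{N}(v_i)=\mathcal{N}(v_j)$. Combining this with independence and uniformity,
\[
P_{\mathcal{P}} = \pr{\mathcal{N}(v_j)=\mathcal{N}(v_i)}
= \sum_{\mathcal{S}} \pr{\mathcal{N}(v_i)=\mathcal{S}}\,\pr{\mathcal{N}(v_j)=\mathcal{S}}
= \binom{N_c}{d_v}\binom{N_c}{d_v}^{-2} = \binom{N_c}{d_v}^{-1},
\]
which is \eqref{eq:p2poisson}. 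Equivalently, in the language of Lemma~\ref{lem:PU_indic}, one may write $P_{\mathcal{P}} = T_{ss}/T$ with $T = \bigl[d_v!\binom{N_c}{d_v}\bigr]^2$ (ordered attachments of all $2d_v$ labeled edges to distinct check nodes per VN) and $T_{ss} = (d_v!)^2\binom{N_c}{d_v}$ (those configurations in which $v_j$ hits exactly the set $\mathcal{N}(v_i)$, in any order); the ratio again collapses to $\binom{N_c}{d_v}^{-1}$. Note that, in contrast to Lemma~\ref{lem:PU_indic}, there are no $d_c$-dependent socket factors here precisely because check degrees are unconstrained, which is why the expression is so clean.

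The only point requiring care is the uniformity-and-independence claim of the second paragraph; I would state explicitly that ``parallel-edge-free'' is a per-VN event, so it cannot introduce correlation between $\mathcal{N}(v_i)$ and $\mathcal{N}(v_j)$, and that conditioning an i.i.d.\ uniform draw on distinctness of its coordinates produces the uniform law on subsets. I would also remark that we restrict to an interior SP so that the pool size is exactly $N_c = wM\frac{d_v}{d_c}$ (near the terminated boundaries some degree-zero CNs have been removed, shrinking the pool), matching the convention used elsewhere in the paper.
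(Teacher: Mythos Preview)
Your proposal is correct and follows essentially the same approach as the paper: both arguments fix $v_i$'s neighborhood and compute the probability that $v_j$ connects to exactly the same $d_v$ check nodes out of $N_c = wM\frac{d_v}{d_c}$, obtaining $d_v!/\bigl(N_c(N_c-1)\cdots(N_c-d_v+1)\bigr)=\binom{N_c}{d_v}^{-1}$. The paper phrases this as a direct sequential count of $v_j$'s edge assignments, whereas you first abstract it to ``$\mathcal{N}(v_i)$ and $\mathcal{N}(v_j)$ are independent uniform $d_v$-subsets'' and then compute the collision probability; the extra remarks on why independence holds and on interior versus boundary SPs are welcome but not present in the paper's proof.
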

\begin{proof}
Let $v_i$ and $v_j$ be two VNs randomly chosen from an SP $z$ of the $\mathcal{C}_{\mathcal{P}}(d_v,d_v,w,L,M)$ ensemble, i.e., $i,j \in \{(z-1)M+1,\ldots,zM\}$ for $z \in [L]$.
The computation of $P_{\mathcal{P}} \triangleq \mathbb{E}[U_{ij}]$  for this ensemble is much simpler as we do not need to distinguish sockets. 
Assume that the edges of $v_j$ are assigned to CNs sequentially after edges of $v_i$ have been assigned to CNs. 
The first edge can connect to any of the $(wM\frac{d_v}{d_c})$ CNs from SPs $z,z+1,\ldots,z+w-1$. 
As we avoid parallel edges in the construction, the second edge has one CN less to choose from, the third edge has two CNs less to choose from and so on. 
But, there is exactly one way in which the edges can connect exactly to the same CNs as $v_i$, with $d_v!$ possible permutations of the edge arrangements. 
Hence the probability of $v_j$ forming a stopping set with $v_i$ is
\begin{align*}
P_{\mathcal{P}} & \triangleq \pr{U_{ij}=1} \\
 & = \frac{d_v!}{\left(wM\frac{d_v}{d_c}\right)\left(wM\frac{d_v}{d_c}-1\right)\cdots \left(wM\frac{d_v}{d_c}-d_v+1\right) }.
\end{align*}
which leads to the statement after simplification.
\end{proof}

\begin{theorem}\label{thm:SC_Poisson_Comp}
Consider the ensembles $\mathcal{C}_\mathcal{R}(d_v,d_c,w,L,M)$ and $\mathcal{C}_{\mathcal{P}}(d_v,d_c,w,L,M)$. 
For two VNs from the same spatial position, the probability that they form a stopping set is larger for the Poisson ensemble, i.e., $P_{\mathcal{P}} \geq P_{\mathcal{R}}$.
\end{theorem}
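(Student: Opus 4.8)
The plan is to reduce the claimed inequality to an elementary comparison of a weighted sum against an unweighted one via Vandermonde's identity. First I would fix the shorthand $N \triangleq wM\frac{d_v}{d_c}$ for the number of check nodes inside a coupling window of width $w$, and set $\alpha \triangleq 1 - \tfrac{1}{d_c}$, noting that $0 < \alpha < 1$ for any finite $d_c$. With this notation, Lemma~\ref{lem:PU_poisson} reads $P_{\mathcal{P}} = \binom{N}{d_v}^{-1}$ and Lemma~\ref{lem:PU_indic} reads $P_{\mathcal{R}} = \alpha^{d_v}\big/\sum_{\ell=0}^{d_v}\binom{d_v}{\ell}\binom{N-d_v}{d_v-\ell}\alpha^{\ell}$. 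Both denominators are strictly positive, so the inequality $P_{\mathcal{P}} \geq P_{\mathcal{R}}$ is equivalent to
\[
\sum_{\ell=0}^{d_v}\binom{d_v}{\ell}\binom{N-d_v}{d_v-\ell}\,\alpha^{\ell} \;\geq\; \alpha^{d_v}\binom{N}{d_v}.
\]

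Next I would invoke the Vandermonde convolution $\sum_{\ell=0}^{d_v}\binom{d_v}{\ell}\binom{N-d_v}{d_v-\ell} = \binom{N}{d_v}$, which lets me rewrite the right-hand side as $\sum_{\ell=0}^{d_v}\binom{d_v}{\ell}\binom{N-d_v}{d_v-\ell}\,\alpha^{d_v}$, i.e., exactly the left-hand side with every weight $\alpha^{\ell}$ replaced by $\alpha^{d_v}$. Since $0<\alpha<1$, we have $\alpha^{\ell}\geq\alpha^{d_v}$ for every $\ell\in\{0,\dots,d_v\}$, and all the coefficients $\binom{d_v}{\ell}\binom{N-d_v}{d_v-\ell}$ are nonnegative, so the inequality follows term by term. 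In fact the $\ell=0$ term is strictly positive --- it equals $\binom{N-d_v}{d_v}>0$ because $N\geq 2d_v$ under the standing hypothesis $wM\geq 2(d_v+1)d_c$ --- and $\alpha^{0}>\alpha^{d_v}$, so the inequality is in fact strict, which matches the claim in the main text that $\mathcal{C}_{\mathcal{P}}$ is strictly suboptimal.

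I do not anticipate any real obstacle: once the two probabilities are expressed in terms of $N$ and $\alpha$, the whole argument is Vandermonde plus monotonicity of $\alpha^{\ell}$ in $\ell$. The only things to double-check are bookkeeping, namely that $N\geq 2d_v$ so that none of the binomial coefficients degenerate (guaranteed by $wM\geq 2(d_v+1)d_c$) and that $\alpha\in(0,1)$ so the exponent comparison goes in the right direction. A brief sanity check against the large-$M$ approximations $P_{\mathcal{R}}\approx\binom{N-d_v}{d_v}^{-1}(1-1/d_c)^{d_v}$ (remark after Lemma~\ref{lem:PU_indic}) and $P_{\mathcal{P}}=\binom{N}{d_v}^{-1}$ confirms the ordering in that regime as well.
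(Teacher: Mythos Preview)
Your proposal is correct and essentially identical to the paper's proof: the paper also reduces $P_{\mathcal{R}}\leq P_{\mathcal{P}}$ to the termwise inequality $\alpha^{d_v}\sum_{\ell}\binom{d_v}{\ell}\binom{N-d_v}{d_v-\ell}\leq\sum_{\ell}\binom{d_v}{\ell}\binom{N-d_v}{d_v-\ell}\alpha^{\ell}$ (using $\alpha^{\ell}\geq\alpha^{d_v}$) and then applies Vandermonde's identity to identify the unweighted sum with $\binom{N}{d_v}$. The paper presents this as a two-line chain of inequalities while you spell out the mechanism more explicitly, but the underlying argument is the same.
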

\begin{proof}
We upper bound~\eqref{eq:lem1_pdef} as
\begin{align*}
P_{\mathcal{R}} &= \frac{\left(1-\frac{1}{d_c}\right)^{d_v}}{\sum\limits_{\ell=0}^{d_v} \binom{d_v}{\ell}
\binom{wM\frac{d_v}{d_c}-d_v}{d_v-\ell} \left(1-\frac{1}{d_c}\right)^\ell} \\
&\leq \frac{1}{\sum\limits_{\ell=0}^{d_v} \binom{d_v}{\ell}
\binom{wM\frac{d_v}{d_c}-d_v}{d_v-\ell}} \stackrel{(a)}{=} \frac{1}{\binom{wM\frac{d_v}{d_c}}{d_v}} = P_{\mathcal{P}},
\end{align*}
where $(a)$ is due to Vandermonde's identity.
\end{proof}

\begin{lemma}\label{lem:PUpoisson_joint}
Consider one SP of a code sampled from the $\mathcal{C}_{\mathcal{P}}(d_v,d_c,w,L,M)$ SC-LDPC ensemble. Define the indicator function $U_{ij}=1$ if VNs $v_i$ and $v_j$ in that SP form a size-2 stopping set. If $wM \geq 2(d_v+1)d_c$, we have the following bound on the joint expectation
\begin{align*}
\mathbb{E}[U_{ij}U_{kl}] \leq \frac{2\mathbb{E}[U_{ij}]}{\binom{wM\frac{d_v}{d_c}-d_v}{d_v}}
\end{align*}
\end{lemma}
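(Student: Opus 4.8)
The plan is to exploit the fact that, in the Poisson ensemble, the decoder-relevant data of a variable node is entirely captured by its neighbourhood \emph{set}, which is far easier to reason about than in $\mathcal{C}_{\mathcal{R}}$ since there are no sockets to track. First I would record the key distributional fact: because the $d_v$ edges of each VN in spatial position $z$ are thrown one at a time uniformly at random onto the $wM\frac{d_v}{d_c}$ check nodes of SPs $z,\dots,z+w-1$, subject only to the per-VN no-parallel-edge constraint, the neighbourhood $\mathcal{N}(v_i)$ is uniformly distributed over the $d_v$-subsets of those $wM\frac{d_v}{d_c}$ check nodes, and the neighbourhoods of distinct VNs are mutually independent. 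In particular $U_{ij}=1$ iff $\mathcal{N}(v_i)=\mathcal{N}(v_j)$, which recovers $\mathbb{E}[U_{ij}]=P_{\mathcal{P}}=\binom{wM\frac{d_v}{d_c}}{d_v}^{-1}$ from Lemma~\ref{lem:PU_poisson}.

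Next I would evaluate the joint expectation using the same case split as in Lemma~\ref{lem:PU_joint}. If the two size-$2$ stopping sets are vertex-disjoint ($i,j,k,l$ all distinct), independence of the four neighbourhoods immediately gives $\mathbb{E}[U_{ij}U_{kl}]=\pr{\mathcal{N}(v_i)=\mathcal{N}(v_j)}\,\pr{\mathcal{N}(v_k)=\mathcal{N}(v_l)}=P_{\mathcal{P}}^2$. If they share one VN, say $k=j$, I would condition on $\mathcal{N}(v_j)$: given its (fixed) value, $U_{ij}$ depends only on $v_i$'s independent edges and $U_{jl}$ only on $v_l$'s independent edges, each event being the probability that an independent uniform $d_v$-subset equals the fixed set $\mathcal{N}(v_j)$, i.e., $P_{\mathcal{P}}$; hence $\mathbb{E}[U_{ij}U_{jl}\mid\mathcal{N}(v_j)]=P_{\mathcal{P}}^2$ and therefore $\mathbb{E}[U_{ij}U_{jl}]=P_{\mathcal{P}}^2$. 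Thus in every case with $(i,j)\neq(k,l)$ we get $\mathbb{E}[U_{ij}U_{kl}]=P_{\mathcal{P}}^2$.

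Finally I would close the estimate by writing $P_{\mathcal{P}}^2=\mathbb{E}[U_{ij}]\binom{wM\frac{d_v}{d_c}}{d_v}^{-1}$ and invoking monotonicity of the binomial coefficient, $\binom{wM\frac{d_v}{d_c}-d_v}{d_v}\le\binom{wM\frac{d_v}{d_c}}{d_v}$, which is legitimate because the hypothesis $wM\ge 2(d_v+1)d_c$ forces $wM\frac{d_v}{d_c}\ge 2d_v$, so the left-hand binomial is well defined and the sequence is increasing. This gives $\mathbb{E}[U_{ij}U_{kl}]\le \mathbb{E}[U_{ij}]\binom{wM\frac{d_v}{d_c}-d_v}{d_v}^{-1}\le \frac{2\,\mathbb{E}[U_{ij}]}{\binom{wM\frac{d_v}{d_c}-d_v}{d_v}}$, the last step being slack by a factor of two, retained only so the bound matches the form of Lemma~\ref{lem:PU_joint}. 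There is no genuine obstacle here; the only point requiring care is verifying that the no-parallel-edge construction really produces a uniform, mutually independent family of neighbourhood sets, and a fallback (should one prefer a self-contained counting proof) is to repeat the socket-counting argument of Lemma~\ref{lem:PU_joint} verbatim with the $(d_c-2)(d_c-3)$-type factors replaced by plain falling-factorial products, which is longer but entirely routine.
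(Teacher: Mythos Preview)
Your argument is correct and in fact cleaner than the paper's. The crucial observation you make---that in the Poisson ensemble the neighbourhood sets $\mathcal{N}(v_i)$ are mutually independent uniform $d_v$-subsets because there is no socket constraint coupling distinct VNs---lets you compute $\mathbb{E}[U_{ij}U_{kl}]=P_{\mathcal{P}}^2$ \emph{exactly} in both the overlapping and disjoint cases, after which the claimed bound is immediate (and slack). The paper instead mirrors the counting argument of Lemma~\ref{lem:PU_joint} for the regular ensemble: it splits into the same two cases, marginalises over the number $b$ of check nodes shared between the two stopping sets, lower-bounds the total number of configurations $\tilde{T}$ by the subcount with no overlap, sums via Vandermonde's identity to obtain $\sum_b P_{\mathcal{P},\text{corr},b}\le \binom{wM\frac{d_v}{d_c}}{d_v}\big/\binom{wM\frac{d_v}{d_c}-d_v}{d_v}^{2}$, and then invokes the hypothesis $wM\ge 2(d_v+1)d_c$ to bound the ratio $\binom{wM\frac{d_v}{d_c}}{d_v}/\binom{wM\frac{d_v}{d_c}-d_v}{d_v}$ by $2$. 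That approach has the virtue of being structurally parallel to the $\mathcal{C}_{\mathcal{R}}$ proof, but it discards the independence and so ends up with a looser intermediate inequality; your route is shorter, yields the exact value of the joint moment, and shows that the factor of $2$ and the passage to $\binom{wM\frac{d_v}{d_c}-d_v}{d_v}$ are both pure slack retained only for formal uniformity with Lemma~\ref{lem:PU_joint}.
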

\begin{proof} We follow lines of the proof of Lemma~\ref{lem:PU_joint}. Again, we distinguish two cases.
In the first case,  variables $v_i$ and $v_j$, $j\neq i$, form a stopping set of size two and variable nodes $v_k$ and $v_l$ form another stopping stop, where w.l.o.g., we assume $k=j$, i.e., the two stopping sets share a variable node. In this case, we have
\begin{align*}
\mathbb{E}\left[U_{ij}U_{jl}\right] = \mathbb{E}[U_{ij}]\cdot \pr{U_{jl}=1|U_{ij}=1}
\end{align*}
where $\pr{U_{jl}=1|U_{ij}=1}$ can be computed using similar arguments as in the proof of Lemma~\ref{lem:PU_poisson} and we have
\[
\pr{U_{jl}=1|U_{ij}=1} = \frac1{\binom{wM\frac{d_v}{d_c}}{d_v}}.
\]
In the second, more general case, where $i\in[M]$, $j\in[M]\backslash\{i\}$ and $k\in[M]\backslash\{i,j\}$ and $l\in[M]\backslash\{i,j,k\}$, we have with $P_{\mathcal{P}} \triangleq \pr{U_{ij}=1}$,
\begin{align*}
 & \mathbb{E}\left[U_{ij}U_{kl}\right] \\
 &= P_{\mathcal{P}}\cdot \pr{U_{kl}=1|U_{ij}=1}\\
&\stackrel{(a)}= P_{\mathcal{P}}\sum_{b=0}^{d_v} \pr{U_{kl}=1, |\mathcal{N}(v_i)\cap\mathcal{N}(v_k)|=b \mid U_{ij}=1} \\
&\triangleq P_{\mathcal{P}}\sum_{b=0}^{d_v} P_{\mathcal{P},\text{corr},b} ,
\end{align*}
where in $(a)$ we marginalize over the number $b$ of check nodes shared by both size-2 stopping sets $\{v_i,v_j\}$ and $\{v_k,v_l\}$, $b\leq d_v$. 
Let $T_{\text{ss},b}$ denote the number of 
favorable sub-graphs from these VNs provided that $b$ check nodes between both stopping sets are shared and let $\tilde{T}$ denote the number of all sub-graphs from these VNs provided that
$U_{ij}=1$ and $|\mathcal{N}(v_i)\cap\mathcal{N}(v_l)\cap \mathcal{N}(v_k)|\leq d_v$. Thus, 
\[
P_{\mathcal{P},\text{corr},b} \triangleq \! \pr{U_{kl}\!=\!1, |\mathcal{N}(v_i)\cap\mathcal{N}(v_k)|=b \big\vert U_{ij}\!=\!1}\!=\!T_{\text{ss},b}/\tilde{T}.
\]
First, we compute $T_{\text{ss},b}=T_{\{i,j\},k,b}T_{\{i,j,k\},l,b}$ as follows:
$T_{\{i,j\},k,b}$ is the number of sub-graphs in which $v_k$ is connected to randomly chosen $b$ CNs of $\mathcal{N}(v_i)$, i.e., CNs  with already $2$ connections from $v_i$ and $v_j$.
$T_{\{i,j,k\},l,b}= d_v!$ is the number of ways to connect $v_l$ to $\mathcal{N}(v_k)$, i.e., $U_{kl}=1$. We have
\begin{align*}
T_{\{i,j\},k,b} &= d_v!\binom{d_v}{b}\binom{wM\frac{d_v}{d_c}-d_v}{d_v-b},
\end{align*}
where $d_v!$ is due to the permutation of labeled edges, and the rest of the expression counts the different ways of choosing CNs. This yields
\begin{align*}
T_{\text{ss},b} = (d_v!)^2\binom{d_v}{b}\binom{wM\frac{d_v}{d_c}-d_v}{d_v-b}.
\end{align*}
The total number of possibilities $\tilde{T}$ can be lower bounded by its subset which is $|\mathcal{N}(v_i)\cup\mathcal{N}(v_k)|=|\mathcal{N}(v_i)\cup \mathcal{N}(v_l)|=2d_v$ (no common CNs with $v_i$ and $v_j$), with
\begin{align*}
\tilde{T} &\geq \left(d_v!\binom{wM\frac{d_v}{d_c}-d_v}{d_v}
\right)^2.
\end{align*}
Using Vandermonde's identity, we have,
\begin{align*}
\sum_{b=0}^{d_v} P_{\mathcal{P}\text{corr},b}
&\leq \frac{1}{\binom{wM\frac{d_v}{d_c}-d_v}{d_v}
^2}\sum_{b=0}^{d_v} \binom{d_v}{b}\binom{wM\frac{d_v}{d_c}-d_v}{d_v-b}\\
&\leq \frac{\binom{wM\frac{d_v}{d_c}}{d_v}}{\binom{wM\frac{d_v}{d_c}-d_v}{d_v}^2}.
\end{align*}
We further have
\begin{align*}
\frac{\binom{wM\frac{d_v}{d_c}}{d_v}}{\binom{wM\frac{d_v}{d_c}-d_v}{d_v}}
&\leq \left(\frac{wM\frac{d_v}{d_c}-d_v}{wM\frac{d_v}{d_c}-2d_v}\right)^{d_v} \\
& \leq \left(1+\frac{1}{2d_v}\right)^{d_v} 
\leq \sqrt{e}<2.
\end{align*}
where the second inequality is because $wM\geq 2(d_v+1)d_c$ by assumption. Thus, 
\begin{align*}
\sum_{b=0}^{d_v} P_{\mathcal{P}\text{corr},b}
\leq \frac{2}{\binom{wM\frac{d_v}{d_c}-d_v}{d_v}}.
\end{align*}
The statement of the lemma follows by combining both cases.
\end{proof}

\begin{theorem}\label{thm:thmlb_poisson}
Consider a code sampled uniformly from the $\mathcal{C}_{\mathcal{P}}(d_v,d_c,w,L,M)$ ensemble with $wM \geq d_c^2$.
If a randomly chosen spatial position of this code is completely erased, the (average) probability of BP decoding failure is lower-bounded by 
\begin{align}
P_\B^\SPBC \geq \binom{M}{2}\left(1 - \frac{M^2}{(\frac{w}{d_c}M-2)^{d_v}}\right)P_{\mathcal{P}} , \label{eq:lower_bound_poisson}
\end{align}
where $P_{\mathcal{P}}$ is the probability that two variable nodes from  a spatial position of the code form a stopping set, given by~\eqref{eq:p2poisson}.
\end{theorem}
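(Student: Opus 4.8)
The plan is to repeat, almost verbatim, the second-moment argument used in the proof of Theorem~\ref{thm:thmlb}, substituting the two ensemble-specific ingredients by their Poisson counterparts: the single-pair stopping-set probability $P_{\mathcal{P}}$ from Lemma~\ref{lem:PU_poisson} in place of $P_{\mathcal{R}}$, and the joint bound of Lemma~\ref{lem:PUpoisson_joint} in place of Lemma~\ref{lem:PU_joint}. I would fix one spatial position of a code drawn from $\mathcal{C}_{\mathcal{P}}(d_v,d_c,w,L,M)$, recall $\Nsp=\sum_{1\le i<j\le M}U_{ij}$, and note that by exchangeability of VN pairs $\mathbb{E}[\Nsp]=\binom{M}{2}\mathbb{E}[U_{ij}]=\binom{M}{2}P_{\mathcal{P}}$. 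Since a BP decoding failure on the SPBC occurs whenever the erased SP contains at least one size-$2$ stopping set, $P_\B^{\SPBC}\ge\pr{\Nsp\ge1}$.

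Next I would invoke the second-moment (Paley--Zygmund) inequality $\pr{\Nsp\ge1}\ge\mathbb{E}[\Nsp]^2/\mathbb{E}[\Nsp^2]$ from~\cite[Lemma~C.8]{Richardson-MCT08} and expand $\mathbb{E}[\Nsp^2]=\sum_{i<j}\mathbb{E}[U_{ij}^2]+\sum_{(i,j)\neq(k,l)}\mathbb{E}[U_{ij}U_{kl}]$. The diagonal sum equals $\binom{M}{2}\mathbb{E}[U_{ij}]$ because $U_{ij}\in\{0,1\}$, and each of the $\binom{M}{2}\bigl(\binom{M}{2}-1\bigr)$ off-diagonal terms is controlled by Lemma~\ref{lem:PUpoisson_joint}, giving $\mathbb{E}[\Nsp^2]\le\binom{M}{2}\mathbb{E}[U_{ij}]\bigl(1+2(\binom{M}{2}-1)/\binom{wM\frac{d_v}{d_c}-d_v}{d_v}\bigr)$. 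Plugging this into Paley--Zygmund and cancelling one factor of $\binom{M}{2}\mathbb{E}[U_{ij}]$ yields $P_\B^{\SPBC}\ge\binom{M}{2}\mathbb{E}[U_{ij}]\big/\bigl(1+2\binom{M}{2}/\binom{wM\frac{d_v}{d_c}-d_v}{d_v}\bigr)$.

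Finally I would tidy up the denominator exactly as in Appendix~\ref{sec:appProof1}: the elementary inequality $1/(1+\tau)\ge1-\tau$ for $\tau>-1$ converts the bound to $\mathbb{E}[\Nsp]\bigl(1-2\binom{M}{2}/\binom{wM\frac{d_v}{d_c}-d_v}{d_v}\bigr)$, and then the estimates $2\binom{M}{2}\le M^2$ together with $\binom{n}{k}\ge(n/k)^k$ applied with $n=wM\frac{d_v}{d_c}-d_v$ and $k=d_v$ (so that $n/k=\frac{w}{d_c}M-1\ge\frac{w}{d_c}M-2$) turn the correction term into $M^2/(\frac{w}{d_c}M-2)^{d_v}$, which, after reinserting $P_{\mathcal{P}}$ for $\mathbb{E}[U_{ij}]$, is precisely~\eqref{eq:lower_bound_poisson}. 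I do not expect a genuinely hard step here: all the combinatorial work has already been carried out in Lemmas~\ref{lem:PU_poisson} and~\ref{lem:PUpoisson_joint} (and the comparison $P_{\mathcal{P}}\ge P_{\mathcal{R}}$ of Theorem~\ref{thm:SC_Poisson_Comp} explains why the bound is larger than in the random-ensemble case). The only points needing mild care are checking that the size condition on $wM$ (here $wM\ge d_c^2$, together with the paper's standing assumption) meets the hypothesis of Lemma~\ref{lem:PUpoisson_joint}, and verifying that the correction term stays strictly below $1$ so the bound is non-vacuous, which again follows from the same size condition since $d_v\ge3$.
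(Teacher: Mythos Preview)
Your proposal is correct and follows the paper's own proof essentially verbatim: the paper also invokes the second-moment inequality, plugs in Lemma~\ref{lem:PUpoisson_joint} for the off-diagonal terms, applies $1/(1+\tau)\ge 1-\tau$, and then crudely bounds the binomial to arrive at~\eqref{eq:lower_bound_poisson}. The only cosmetic difference is that the paper's displayed intermediate step writes $\binom{wM\frac{d_v}{d_c}-2d_v}{d_v}$ (apparently carried over from the $\mathcal{C}_{\mathcal{R}}$ proof) whereas Lemma~\ref{lem:PUpoisson_joint} actually gives $\binom{wM\frac{d_v}{d_c}-d_v}{d_v}$, which is what you use; your version is the cleaner one and still lands on the stated $(\tfrac{w}{d_c}M-2)^{d_v}$ after the harmless weakening $\tfrac{w}{d_c}M-1\ge\tfrac{w}{d_c}M-2$.
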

\begin{proof}
We follow a similar argument as in the proof of Theorem~\ref{thm:thmlb}. We get with Lemma~\ref{lem:PUpoisson_joint}
\begin{align*}
P_\B^\SPBC &\geq \frac{\binom{M}{2}\mathbb{E}[U_{ij}]}{1 + \frac{\binom{M}{2}}{\binom{wM\frac{d_v}{d_c}-2d_v}{d_v}}} \geq \mathbb{E}[\Nsp]\left(1 - \frac{M^2}{\left(\frac{wM}{d_c}-2\right)^{d_v}}\right)
\end{align*}
and by computing $\mathbb{E}[\Nsp]$ as $\binom{M}{2}P_{\mathcal{P}}$ using Lemma~\ref{lem:PU_poisson}.\hfill
\end{proof}

\section{}
\label{sec:appblec}
\subsection*{Proof of Lemma~\ref{lem:blec_nw}:}
We use the method of generating functions to compute this quantity. 
Specifically, we construct a function $G(x)$ where the exponent of $x$ in each term signifies the number of SPs involved in a segment with one erased SP. 
For convenience, we exclude the final sequence of $w-1$ unerased SPs in $G(x)$.
So finally we will have to multiply $G(x)$ by $(1-p)^{w-1} x^{w-1}$ to account for these.
Let $E$ denote an erased SP and $N$ denote a non-erased SP.
Then, starting from the first SP in a segment, we have the following possible outcomes with their respective probabilities and generating function terms:
\begin{center}
\begin{tabular}{rcc}
Pattern & Probability & Term in $G(x)$ \\
\hline
$E$ & $p$ & $px$ \\
$NE$ & $(1-p) p$ & $(1-p) p x^2$ \\
$NNE$ & $(1-p)^2 p$ & $(1-p)^2 p x^3$ \\
$\vdots$ & $\vdots$ & $\vdots$ \\
$\underset{w-2}{\underbrace{NN\ldots N}} E$ & $(1-p)^{w-2} p$ & $(1-p)^{w-2} p x^{w-1}$ \\
\end{tabular}
\end{center}
So we have
\begin{align*}
G(x) &= px + (1-p) p x^2 + \ldots + (1-p)^{w-2} p x^{w-1} \\
  &= px \cdot \frac{1 - (1-p)^{w-1} x^{w-1}}{1 - (1-p)x}
\end{align*}
Note that $G(x) < 1 \ \forall \ x \leq 1$ and $G(1) = 1 - (1-p)^{w-1}$.
Also for $k \geq 0$, $G(x)^k$ generates all the possible sequences with $k$ SPs erased and where the separation between two consecutive erased SPs is strictly less than $w-1$.
Hence every possible sequence ending with $w-1$ $N$s can be enumerated as
\[ F(x) = \sum_{k=0}^{\infty} G(x)^k (1-p)^{w-1} x^{w-1} = (1-p)^{w-1} x^{w-1} \frac{1}{1 - G(x)} . \]
Note that $F(1)=1$ which implies that we have counted all the possible sequences.
Also, implicitly we have set $L \rightarrow \infty$ for convenience.
In fact, we can rewrite $F(x)$ as
\[ F(x) = \sum_{n=w-1}^{\infty} p_n x^n , \]
where $p_n$ is the probability of all sequences of length $n$.
Then we can compute the required quantity as 
\[ \mathbb{E}[\tau_i] = \frac{d}{dx} F(x)\biggr\rvert_{x=1} = F'(1) = \frac{1}{p} \left( \frac{1}{(1-p)^{w-1}} - 1 \right).\qed \]

\begin{lemma}
\label{lem:q3}
Assume the transmission of a code from the $\mathcal{C}_{\mathcal{R}}(d_v,d_c,w,L=\infty,M)$ SC-LDPC ensemble over the block erasure channel BLEC($p$). Consider a segment of spatial positions, as defined in Section~\ref{sec:blec}. Let $Q_k^{\rm cons}$ denote the probability of incurring $k$ erased SPs with at least 3 consecutive erased SPs. Then
\begin{align*}
\sum_{k=3}^{\infty}Q_k^{\rm cons}=p^2\frac{1-(1-p)^{w-1}}{(1-p)^{w-1}+p}.
\end{align*}
\begin{proof}
We need to count all possible combinations of erased SPs in a segment
with at least $3$ consecutive erased SPs. We use again the method of generating functions with the same $G(x)$ as defined in the proof of Lemma~\ref{lem:blec_nw}. All possible combinations are depicted as follows:

\begin{align*}
\underset{k_1 \geq 0}{
\underbrace{
\left\{
\begin{array}{c}
E \\
NE \\
NNE \\
\vdots \\
N\ldots NE 
\end{array}
\right\}
\cdots
\left\{
\begin{array}{c}
E \\
NE \\
NNE \\
\vdots \\
N\ldots N E
\end{array}
\right\}
}}
\left\{
\begin{array}{c}
E \\
NE \\
NNE \\
\vdots \\
N\ldots N E
\end{array}
\right\}
EE \\
\underset{k_2 \geq 0}{
\underbrace{
\left\{
\begin{array}{c}
NE \\
NNE \\
\vdots \\
N\ldots N E
\end{array}
\right\}
\cdots
\left\{
\begin{array}{c}
NE \\
NNE \\
\vdots \\
N\ldots N E
\end{array}
\right\}
}}
\underset{w-1}{\underbrace{NN \ldots N}} .
\end{align*}
The middle terms guarantee to have 3 consecutive erased SPs. 
The $k_1$ and $k_2$ enumerate all possible erasure sequences 
that can happen before and after the consecutive erased SPs. 
Note that we must exclude the event of ``E'' (only single erased SP) in the $k_2$ sequences (or, alternatively, in the $k_1$ sequences) to generate all possible combinations once. 

As a result,
\begin{align*}
\sum_{k=3}^{\infty}Q_k^{\rm cons} &=
\left( \sum_{k_1=0}^{\infty} G(x)^{k_1} \right) G(x) (px)^2 \\
 & \qquad \times\left( \sum_{k_2=0}^{\infty} (G(x)-px)^{k_2} \right) (1-p)^{w-1} x^{w-1} \biggr\vert_{x=1}\\
&=\frac{1}{1-G(1)}G(1)p^2\frac{1}{1-G(1)+p}(1-p)^{w-1} \\
&=p^2\frac{1-(1-p)^{w-1}}{(1-p)^{w-1}+p}. \qedhere
\end{align*}

\end{proof}

\end{lemma}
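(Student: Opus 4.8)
The plan is to compute $\sum_{k\ge 3}Q_k^{\rm cons}$ as the value at $x=1$ of a product of generating functions, exactly in the spirit of the proof of Lemma~\ref{lem:blec_nw} (and, as there, implicitly with $L\to\infty$). Recall from that proof that inside a segment every maximal run of non-erased SPs has length at most $w-2$, so a segment is uniquely a concatenation of \emph{blocks} of the shape $N^jE$ with $0\le j\le w-2$ followed by the terminating run $N^{w-1}$; a single block is enumerated by $G(x)=\sum_{j=0}^{w-2}p(1-p)^j x^{j+1}$, with $x$ counting SPs, the coefficients being probabilities, and $G(1)=1-(1-p)^{w-1}$. An erased SP immediately following another erased SP is precisely a ``bare'' block $E$ (the $j=0$ block, generating-function term $px$), so three consecutive erased SPs occur exactly when the block decomposition contains some block immediately followed by two bare blocks $E$.

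The key step is the combinatorial decomposition sketched in the paper: every segment contributing to $\sum_{k\ge 3}Q_k^{\rm cons}$ is written as a nonempty run of arbitrary blocks, then two further erased SPs $EE$, then a (possibly empty) run of blocks none of which is a bare $E$, then the terminating $N^{w-1}$ --- the three consecutive erased SPs being the last SP of the first block-run together with the appended $EE$. The crux of the proof, and the part I expect to be the main obstacle, is to verify that this decomposition counts each relevant segment exactly once: the restriction forbidding a bare $E$ after the $EE$ marker is what forces the placement of $EE$ to be canonical (anchored to the last qualifying run of erasures), so that one avoids producing the same segment from several splittings. The boundary cases --- a one-block prefix, an empty suffix, small $w$, and the way the $EE$ meets the terminating $N^{w-1}$ --- each need to be checked explicitly but are routine.

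Granting the decomposition, the prefix is enumerated by $\sum_{k\ge0}G(x)^k\cdot G(x)=G(x)/(1-G(x))$, the two extra erasures by $(px)^2$, the restricted suffix by $\sum_{k\ge0}\bigl(G(x)-px\bigr)^k=1/(1-(G(x)-px))$, and the terminating run by $(1-p)^{w-1}x^{w-1}$, so
\[
\sum_{k\ge 3}Q_k^{\rm cons}=\left.\frac{G(x)}{1-G(x)}\,(px)^2\,\frac{1}{1-(G(x)-px)}\,(1-p)^{w-1}x^{w-1}\right|_{x=1}.
\]
Substituting $G(1)=1-(1-p)^{w-1}$, so that $1-G(1)=(1-p)^{w-1}$ and $1-(G(1)-p)=(1-p)^{w-1}+p$, the product telescopes to $p^2\,\dfrac{1-(1-p)^{w-1}}{(1-p)^{w-1}+p}$, which is the claim; this final evaluation is routine. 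As a sanity check, for $w=2$ one has $G(x)=px$, the suffix is forced empty, and the formula returns $p^2\cdot\frac{1-(1-p)}{(1-p)+p}=p^3$, consistent with the fact that a $w=2$ segment is $E^{k}N$ with probability $p^{k}(1-p)$, so that $\sum_{k\ge 3}Q_k^{\rm cons}=\sum_{k\ge 3}p^{k}(1-p)=p^3$.
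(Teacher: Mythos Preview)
Your proposal is correct and follows essentially the same approach as the paper: the same block decomposition into a nonempty prefix of arbitrary $N^jE$ blocks, the marker $EE$, a suffix of blocks with the bare $E$ excluded, and the terminating $N^{w-1}$, assembled into the identical generating-function product and evaluated at $x=1$. Your explicit explanation of why the exclusion in the suffix makes the decomposition canonical (anchoring $EE$ to the last qualifying run) and the $w=2$ sanity check are welcome additions that the paper only hints at.
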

\newpage

\begin{IEEEbiographynophoto}{Vahid Aref} is a member of technical staff in Nokia Bell Labs. 
He received B.Sc. and M.Sc. degrees in Electrical Engineering from Sharif University of Technology, Iran.
He received his PhD degree in computer and communication sciences from \'{E}cole Polytechnique F\'{e}d\'{e}rale de Lausanne (EPFL), Lausanne, Switzerland, in 2014. Later, he conducted post-doctoral research in the institute of telecommunications (IN\"{U}) at the University of Stuttgart for a year before joining Bell Labs in 2015. Since 2016, Dr. Aref also serves as guest lecturer at the University of Stuttgart.
\end{IEEEbiographynophoto}

\begin{IEEEbiographynophoto}{Narayanan Rengaswamy}
received the Bachelor of Technology (B.Tech.) degree in Electronics and Communication Engineering from Amrita University, Coimbatore, India, where he secured the third rank at the university level, across three campuses. Subsequently, in 2015, he obtained the Master of Science (M.S.) degree from Texas A\&M University, College Station, TX, USA where he worked on cyclic polar codes under the supervision of Prof. Henry D. Pfister. During the summer of 2015 he was a graduate research intern at Alcatel-Lucent Bell Labs, Stuttgart, Germany where he worked on spatially-coupled LDPC codes. Currently he is pursuing his Doctor of Philosophy (Ph.D.) studies at Duke University, Durham, NC, USA under the supervision of Prof. Henry D. Pfister and Prof. Robert Calderbank. His research interests are in classical and quantum coding theory, quantum computing and communications, information theory and compressed sensing.
\end{IEEEbiographynophoto}

\begin{IEEEbiographynophoto}{Laurent Schmalen}
received both his Dipl.-Ing. degree in electrical engineering and information technology and his Dr.-Ing. degree from the RWTH Aachen University of Technology. In 2011, he joined Bell Labs as a member of technical staff where he now heads the department of coding in optical communications within the IP and Optical Transport research lab. He is a two-time recipient of the Friedrich-Wilhelm award, and received the E-Plus award for his PhD thesis. He furthermore received the best paper award of the 2010 ITG Speech Communication Conference, the 2013 best student paper award at the IEEE Signal Processing Systems (SiPS) workshop and the 2014 IEEE Transactions on Communications Exemplary Reviewer Award. Since 2014, Dr. Schmalen also serves as guest lecturer at the University of Stuttgart. His research interests include channel coding, modulation formats and information theory for future optical communication systems.
\end{IEEEbiographynophoto}

\vfill

\end{document}